\title{Soundness conditions for big-step semantics \\[.5ex] {\small (Long version)} }
\titlerunning{Soundness conditions for big-step semantics} 
\author{Francesco Dagnino\inst{1}  \and Viviana Bono\inst{2} \and Elena Zucca\inst{1}\and \\ 
Mariangiola Dezani-Ciancaglini\inst{2}    }
\authorrunning{F. Dagnino et al.}
\institute{DIBRIS, University of Genova, Italy \and Computer Science Department, University of Torino, Italy} 
\begin{document}

\maketitle

\begin{abstract}
We propose a general proof technique to show that a predicate is \emph{sound}, that is, prevents stuck computation, \emph{with respect to a big-step semantics}. 
This result may look surprising, since in big-step semantics there is no difference between non-terminating and stuck computations, hence soundness cannot even be \emph{expressed}.
The key idea is to  define constructions yielding an extended version  of a given arbitrary big-step semantics, where the difference is made explicit. 
The extended semantics are exploited in the meta-theory, notably they are necessary to show that the proof technique works. 
However,  they remain \emph{transparent}  when using the proof technique, since   it  consists in checking three conditions on the original rules only,  as we illustrate by several examples.

\end{abstract}

\newcommand{\citet}[1]{\cite{#1}}  
\newtheorem{theorem}{Theorem}
\newtheorem{lemma}{Lemma}
\newtheorem{proposition}{Proposition}
\newtheorem{definition}{Definition}
\newtheorem{corollary}{Corollary}


\section{Introduction}

The semantics of programming languages or software systems specifies, for each program/system configuration, its final result, if any. 
In the case of non-existence of a final result,  there are  two possibilities:
\begin{itemize}
\item either the computation stops with no final result, and there is no means to compute further: \emph{stuck computation},
\item or the computation never stops: \emph{non-termination}.
\end{itemize}

There are two main styles to define operationally  a semantic relation:  the \emph{small-step} style \cite{Plotkin81,Plotkin04}, on top of a reduction relation representing single computation steps, or directly by a set of rules as in the \emph{big-step} style \cite{Kahn87}. 
Within a small-step semantics it is straightforward to make the distinction between stuck and non-terminating computations, while a typical drawback of the big-step style is that  they  are not distinguished (no judgement is derived in both cases).

For  this reason, even  though  big-step semantics is  generally more abstract, and sometimes  more intuitive to design and therefore to debug and extend, in the literature much more effort has been devoted to study the meta-theory of small-step semantics, providing properties, and related proof techniques. Notably, the \emph{soundness} of a type system (typing prevents stuck computation) can be proved by  \emph{progress} and \emph{subject reduction} \mbox{(also called \emph{type preservation}) \cite{WrightFelleisen94}. }

Our quest is then to provide a general proof technique to prove the soundness of a predicate with respect to an arbitrary big-step semantics.  How can we achieve this result,  given that in big-step formulation soundness cannot even be \emph{expressed}, since non-termination is modelled as the absence of a final result exactly like stuck computation? 
The  key idea is the following:  
\begin{enumerate}
\item We define constructions \emph{yielding an extended version of a given arbitrary big-step semantics}, where the difference  between stuckness and non-termination is made explicit.
In a sense, these constructions show that the distinction was ``hidden'' in the original semantics.
\item We provide a general proof technique by  identifying  \emph{three sufficient conditions} on the original big-step rules to prove soundness.
\end{enumerate}

Keypoint (2)'s three sufficient conditions are \emph{local preservation}, \emph{$\exists$-progress}, and \emph{$\forall$-progress}. For \emph{proving} the result that the three conditions actually ensure soundness, the setting up of the extended semantics from 
 the given  one is necessary, since otherwise, as said above, we could not even express the property. 

 \emph{However, the three conditions deal only with  the original rules of the given big-step semantics. }  
This means that, practically, in order to use the technique there is no need to deal with the extended semantics. This implies, in particular, that our approach does \emph{not} increase the original number of rules.  Moreover, the sufficient conditions are  checked only  
on  \emph{single rules}, which makes explicit the proof fragments typically needed in a proof of soundness.
Even though this is not exploited in this paper, this form of \emph{locality} means \emph{modularity}, in the sense that adding a new rule implies adding the corresponding proof fragment only. 

As an important by-product, in order to formally define and prove correct the keypoints (1) and (2), we propose a formalisation of  ``what is a big-step semantics"  which captures its essential features. Moreover, we support our approach by presenting  several  examples, demonstrating that: on the one hand, their soundness proof can be easily rephrased in terms of our technique, that is, by directly reasoning on big-step rules; on the other hand, our technique is essential when the property to be checked (for instance, the soundness of a type system) is \emph{not preserved} by intermediate computation steps, whereas it holds for the final result. On a side note, our examples concern type systems, but the meta-theory we present in this work holds for \mbox{any predicate.}

We describe now in more detail the constructions of keypoint (1).  
Starting from an arbitrary big-step judgment $\eval{\conf}{\res}$ that evaluates \emph{configurations} $\conf$ into \emph{results} $\res$, the \emph{first construction} produces an enriched judgement $\evaltr{\conf}{t}$ where $t$ is a \emph{trace}, that is, the (finite or infinite) sequence of all the (sub)configurations encountered during the evaluation. In this way, by interpreting coinductively the rules of the extended semantics,  an infinite trace models divergence (whereas no result corresponds to stuck computation). 
The \emph{second construction} is in a sense dual. It is the \emph{algorithmic} version of the well-known technique presented in Exercise 3.5.16 from the book \cite{Pierce02} of adding a special result $\Wrong$ explicitly modelling stuck computations (whereas no result corresponds to divergence). 

By trace semantics and $\Wrong$ semantics  we can express two flavours of soundness, \emph{soundness-may} and \emph{soundness-must}, respectively, and show the correctness of the corresponding proof technique. 
This achieves our original aim, and it should be noted that \emph{we define soundness  with respect to  a big-step semantics within a big-step formulation}, without resorting to a small-step style (indeed, the two extended semantics are themselves big-step). 


 Lastly,  
 we consider the issue of justifying on a formal basis that the two constructions are correct with respect to their expected meaning. 
For instance, for the $\Wrong$ semantics we would like to be sure that \emph{all} the cases are covered. 
To this end, we define  a \emph{third construction}, dubbed $\PEval$ for ``partial evaluation'', which makes explicit the \emph{computations} of a big-step semantics, intended as the sequences of execution steps of the naturally associated evaluation algorithm. 
Formally, we obtain a reduction relation on approximated proof trees, so termination, non-termination and stuckness can be defined as usual. 
Then, the correctness of traces and $\Wrong$ constructions is proved by showing they are equivalent to $\PEval$  for diverging and stuck computations, respectively.

In \refToSect{framework} we  illustrate the meta-theory on a running example. In \refToSect{constructions} we define the  trace and $\Wrong$  constructions. 
In  \refToSect{soundness} we express soundness in the \emph{must} and \emph{may} flavours,  introduce the proof technique, and prove its correctness. In \refToSect{examples} we show in detail how to apply the technique to the running example, and other significant examples. 
In \refToSect{PEV} we introduce the third construction and   prove 
that the three constructions are equivalent.  Finally, in \ref{sect:rw} and \ref{sect:conclu} we discuss related and further work and summarise our contribution. 

\section{A meta-theory for big-step semantics}\label{sect:framework}

We introduce a formalisation of  ``what is a big-step semantics''  that captures its essential features, subsuming 
a large class of examples (as testified in \refToSect{examples}).
This 
 enables a general formal reasoning on an arbitrary big-step semantics. 

A \emph{big-step semantics} is a triple $\Triple{\ConfSet}{\ResSet}{\RuleSet}$ where:
\begin{itemize} 
\item $\ConfSet$ is a set of \emph{configurations} $\conf$.
\item $\ResSet\subseteq \ConfSet$ is a set of \emph{results} $\res$. We define \emph{judgments} $\judg\equiv\eval{\conf}{\res}$, meaning that  configuration $\conf$ evaluates to result $\res$.
 Set  $\ConfSet(\judg) = \conf$ and $\ResSet(\judg) = \res$. 
\item $\RuleSet$ is a set of \emph{rules} $\rho$ of shape\\
\centerline{
\begin{tabular}{ll}
$\Rule{ \judg_1\ \ldots\ \judg_n \ \ \judg_{n+1} }{ \eval{\conf}{\ResSet(\judg_{n+1})}}$&\BigSpace also written in \emph{inline format}: $\inlinerule{\judg_1\ldots\judg_n}{\judg_{n+1}}{\conf}$\\
\end{tabular}
}
with $\conf\in\ConfSet{\setminus}\ResSet$, where $\judg_1\ldots\judg_n$ are the \emph{dependencies} and $\judg_{n+1}$ is the \emph{continuation}.  Set  $\ConfSet(\rho) {=}\conf$ and, for $i \in 1..n+1$, $\ConfSet(\rho, i) {=} \ConfSet(\judg_i)$ and $\ResSet(\rho, i) {=} \ResSet(\judg_i)$. 
\item For each result $\res \in \ResSet$, we implicitly assume a single axiom $\Rule{}{\eval{\res}{\res}}$. Hence, the only derivable judgment for $\res$ is $\eval{\res}{\res}$, which we will call a \emph{trivial} judgment.
\end{itemize}
 We will use the inline format, more concise and manageable, for the development of the meta-theory, e.g., in constructions.

A rule corresponds to the following evaluation process for a non-result configuration: first, dependencies {are evaluated} in the given order, then the continuation is evaluated and its result is returned as result of the entire computation. 

Rules as defined above specify an inference system \cite{Aczel77,LeroyGrall09}, whose inductive interpretation is, as usual, the semantic relation.  However, they carry slightly more structure with respect to standard inference rules. Notably, premises are a sequence rather than a set, and the last premise plays a special role. 
Such additional structure does not affect the semantic relation defined by the rules, but allows abstract reasoning about an arbitrary big-step semantics, in particular it is relevant for defining the three constructions.
In the following, we will write $\valid{\RuleSet}{\eval{\conf}{\res}}$ when the judgment $\eval{\conf}{\res}$ is derivable in $\RuleSet$. 

As customary, the (infinite) set of rules $\RuleSet$ is described by a finite set of meta-rules, each one with a finite number of premises.
As a consequence, the number of premises of rules is not only finite but \emph{bounded}. 
Since we have no notion of meta-rule, we model this feature (relevant in the following) as an explicit assumption:\\
\centerline{
\label{bp}
{\sf BP}\label{cond:bp} there exists $b \in \N$ such that, for each $\rho\equiv\inlinerule{\judg_1\ldots\judg_n}{\judg_{n+1}}{\conf}$, $n < b$.
}
We end this section illustrating the above definitions and conditions by a simple example: a $\lambda$-calculus with {natural constants,} successor and non-deterministic choice shown in \refToFigure{example}. 
\begin{figure}[t]
\begin{small}
\begin{math}
\begin{array}{rcll}
e &::=& x \mid v \mid e_1\ e_2 \mid \SuccExp e \mid \Choice{\e_1}{\e_2}& \text{expression}\\
v &::=& \const \mid \lambda x.e & \text{value}
\end{array}
\end{math}

\HSep

\begin{math}
\begin{array}{l}
\MetaRule
{val}
{}
{\eval{\val}{\val}}{}
\qquad
\MetaRule
{ app }
{\eval{\e_1}{\lambda x.e}\quad\eval{\e_2}{\val_2}\quad\eval{\subst{\e}{\val_2}{\x}}{\val}}
{\eval{\AppExp{\e_1}{\e_2}}{\val}}{}\Space
\MetaRule{succ}{\eval{\e}{\const}}{\eval{\SuccExp\e}{\const+1}}{}\\[3ex]
\MetaRule{choice}{\eval{\e_i}{v}}{\eval{\Choice{\e_1}{\e_2}}{v}}{i=1,2}
\end{array}
\end{math}

\HSep

\begin{math} 
\begin{array}{l}
\metainlinerule{app}{\eval{\e_1}{\LambdaExp{\x}{\e}}\ \eval{\e_2}{\val_2}}{\eval{\subst{\e}{\val_2}{\x}}{\val}}{\AppExp{\e_1}{\e_2}}\\
\metainlinerule{succ}{\eval{e}{n}}{\eval{n+1}{n+1}}{\SuccExp e} \\
\metainlinerule{choice}{\epsilon}{ \eval{e_i}{v} }{\Choice{\e_1}{\e_2}}\ i =1,2 
\end{array}
\end{math}
\end{small}
\caption{Example of big-step semantics}\label{fig:example}
\end{figure}
 We  present this example as an instance of our definition:
\begin{itemize}
\item Configurations and results are expressions, and values, respectively.\footnote{In general, configurations may include additional components, see \refToSect{fjl}.}
\item  To have the set of (meta-)rules in our required shape, abbreviated in inline format in the bottom section of the figure:
\begin{itemize}
\item axiom \rn{val} can be omitted (it is implicitly assumed)
\item in \rn{app} we consider premises as a sequence rather than a set (the third premise is the continuation)
\item in \rn{succ}, which has no continuation, we add a dummy continuation
\item on the contrary, in \rn{choice} there is only the continuation (dependencies are the empty sequence, denoted $\epsilon$ in the inline format).
\end{itemize}
\end{itemize}
Note that \rn{app}  corresponds to  the standard left-to-right evaluation order.  We could have chosen the right-to-left order instead:\\
\centerline{
\begin{math}
\metainlinerule{app-r}{\eval{e_2}{v_2}\ \eval{e_1}{\lambda x.e}\ }{\eval{e[v_2/x]}{v}}{e_1\ e_2}
\end{math}
}
or even opt for a non-deterministic approach by taking both rules \rn{app} and \rn{app-r}. {As said above, these different choices do} not affect the semantic relation $\eval{\conf}{\res}$ defined by the inference system, which is always the same. However, {they} will affect the way the extended semantics distinguishing stuck computation and non-termination is constructed. Indeed, if the evaluation of $\e_1$ and $\e_2$ is stuck and non-terminating, respectively, we should obtain stuck computation with rule  \rn{app}  and non-termination with  rule \rn{app-r}.
 
 In summary, to see a typical big-step semantics as an instance of our definition, it is enough to assume an order (or more than one) on premises, make implicit the axiom for results, and add a dummy continuation when needed. In the examples (\refToSect{examples}), we will assume a left-to-right order on premises, 
 we will present the rules in both styles. 
 In the technical part (\refToSect{constructions},  \refToSect{soundness} and \refToSect{partial-eval}) we will adopt the inline format.


\section{Extended semantics} \label{sect:constructions}

In the following, we assume a big-step semantics  $\Triple{\ConfSet}{\ResSet}{\RuleSet}$  and describe two  constructions which make the distinction between non-termination and stuck computation explicit.
In both cases, the approach is based on well-know ideas; the novel contribution is that, thanks to the meta-theory in \refToSect{framework}, we provide a \emph{general} construction working on an arbitrary big-step semantics.  
\subsection{Traces}\label{sect:traces}
We denote by $\ConfSet^\star$, $\ConfSet^\omega$, and $\ConfSet^\infty = \ConfSet^\star \cup \ConfSet^\omega$, respectively, the sets of finite, infinite, and possibly infinite \emph{traces}, that is, sequences of configurations. 
We write $t \cdot t'$ for concatenation of $t{\in} \ConfSet^\star$ with $t'{\in} \ConfSet^\infty$.

We  derive, from the judgement $\eval{\conf}{\res}$, an enriched big-step judgement $\evaltr{\conf}{t}$ with $t\in\ConfSet^\infty$.
Intuitively, $t$ keeps trace of all the configurations visited during the evaluation, starting from $\conf$ itself. To define the trace semantics, we construct, starting from $\RuleSet$, a new set of rules $\TrRuleSet$, 
which are of two kinds:
\begin{description}
\item[{trace introduction}] These rules enrich the standard semantics by finite traces: for each $\rho \equiv \inlinerule{\judg_1\ldots\judg_n}{\judg_{n+1}}{\conf}$ in $\RuleSet$, and finite traces $\ t_1, \ldots, t_{n+1} {\in} \List{\ConfSet}$, we add the rule\\ 
\centerline{$
\Rule{
	\evaltr{\ConfSet(\judg_1)}{t_1 \cdot \ResSet(\judg_1)} \Space \ldots \Space \evaltr{\ConfSet(\judg_{n+1})}{t_{n+1} \cdot \ResSet(\judg_{n+1})}
}{ \evaltr{\conf}{\conf\cdot t_1\cdot \ResSet(\judg_1)\cdot \ldots\cdot t_{n+1}\cdot \ResSet(\judg_{n+1})} }
$}
We denote this rule by $\tracerule{\rho}{t_1, \ldots, t_{n+1}}$, to highlight the relationship with the original rule $\rho$.
We also add one axiom $\Rule{}{ \evaltr{\res}{\res} }$ for each result $\res$.

Such rules derive judgements $\eval{\conf}{t}$ with $t{\in}\ConfSet^\star$,  for  convergent computations.

\item[divergence propagation]  
These rules propagate divergence{, that is, if a (sub)configuration in the premise of a rule diverges, then the subsequent premises are ignored and the configuration in the conclusion diverges as well:} for each $\rho \equiv \inlinerule{\judg_1 \ldots \judg_n}{\judg_{n+1}}{\conf}$ in $\RuleSet$, index $i {\in} 1..n+1$, finite traces $t_1, \ldots, t_{i-1} \in \List{\ConfSet}$,  and infinite trace $t$, we add the rule:\\
\begin{small}
\centerline{$
\Rule{
	\evaltr{\ConfSet(\judg_1)}{t_1\cdot \ResSet(\judg_1)}
	\Space \ldots \Space
	\evaltr{\ConfSet(\judg_{i-1})}{t_{i-1} \cdot \ResSet(\judg_{i-1})} 
	\Space  
	\eval{\ConfSet(\judg_i)}{t}
}{ \eval{\conf}{\conf\cdot t_1\cdot \ResSet(\judg_1) \cdot \ldots \cdot t_{i-1} \cdot \ResSet(t_{i-1}) \cdot t }}
$}
\end{small}
We denote this rule by $\divtracerule{\rho}{i}{t_1, \ldots, t_{i-1}}{t}$ to highlight the relationship with the original rule $\rho$. 
These rules derive judgements $\evaltr{\conf}{t}$ with $t \in \InfList{\ConfSet}$, modelling diverging computations. 
\end{description}

The  inference system $\TrRuleSet$ must be interpreted \emph{coinductively}, to properly model diverging computations.
Indeed, since there is no axiom introducing an infinite trace,  they  can be derived only by an infinite proof tree.
We write $\valid{\TrRuleSet}{\evaltr{\conf}{t}}$ when the judgment $\evaltr{\conf}{t}$ is derivable in $\TrRuleSet$.  In the following, 
 given a judgement $\judg = \evaltr{\conf}{t}$, we set $\Trace(\judg) = t$.

We show in \refToFigure{example-trace} the rules obtained starting from meta-rule \rn{app} of the example (for other meta-rules the outcome is analogous).
\begin{figure}[t]
\begin{small}
\begin{quote}
$\MetaRule{app-trace}{\evaltr{e_1}{t_1\cdot \lambda x.e}\Space\evaltr{e_2}{t_2\cdot v_2}\Space\evaltr{e[v_2/x]}{t\cdot v}}{ \evaltr{e_1\ e_2}{e_1\ e_2 \cdot t_1 \cdot \lambda x.e \cdot t_2\cdot v_2\cdot t\cdot v}}
{t_1,t_2,t{\in}\ConfSet^\star}$\\[2.5ex]
$\MetaRule{div-app-1}{\evaltr{e_1}{t}}{ \evaltr{e_1\ e_2}{e_1\ e_2\cdot t}}{t{\in}\InfList{\ConfSet}}$\Space
$\MetaRule{div-app-2}{\evaltr{e_1}{t_1\cdot \lambda x.e}\Space\evaltr{e_2}{t}}{ \evaltr{e_1\ e_2}{e_1\ e_2\cdot t_1\cdot \lambda x.e \cdot t}}{t_1{\in}\ConfSet^\star, t{\in}\InfList{\ConfSet}}$\\[2.5ex]
$\MetaRule{div-app-3}{\evaltr{e_1}{t_1\cdot \lambda x.e}\Space\evaltr{e_2}{t_2\cdot v_2}\Space\evaltr{e[v_2/x]}{t}}{ \evaltr{e_1\ e_2}{e_1\ e_2\cdot t_1\cdot \lambda x.e \cdot t_2\cdot v_2\cdot t}}{t_1, t_2\in\ConfSet^\star, t\in\InfList{\ConfSet}}$
\end{quote}
\end{small}
\caption{Trace semantics for application}\label{fig:example-trace}
\end{figure}

For instance,  set $\Omega=\AppExp{\omega}{\omega}=\AppExp{(\lambda \x.\AppExp{\x}{\x})}{(\lambda \x.\AppExp{\x}{\x})}$, and $t_\Omega$ the infinite trace $\Omega \cdot \omega \cdot \omega \cdot \Omega \cdot \omega\cdot\omega\cdot\ldots$, it is easy to see that the judgment $\evaltr{\Omega}{t_\Omega}$ can be derived by the following infinite tree:\footnote{To help the reader, we add  equivalent expressions with a grey background.}\\
\begin{small}
\centerline{$
\MetaRule{div-app3}{\MetaRule{trace-val}{}{\evaltr{\omega}{\omega}}{}\quad\MetaRule{trace-val}{}{\evaltr{\omega}{\omega}}{}\quad\MetaRule{div-app3}{\vdots}{\evaltr{\meta{\AppExp{\omega}{\omega}\equiv}\subst{(\AppExp{\x}{\x})}{\omega}{\x}}{t_\Omega}}{}}{\eval{\Omega}{\Omega\cdot \omega \cdot \omega \cdot t_\Omega\meta{\equiv t_\Omega}}}{}
$}
\end{small}
Note that \emph{only} the judgment $\evaltr{\Omega}{t_\Omega}$ can be derived, that is, the trace semantics of $\Omega$ is uniquely determined to be $t_\Omega$, since the infinite proof tree forces the equation $t_\Omega=\Omega\cdot\omega\omega\cdot t_\Omega$.  This example is a cyclic proof, but there are divergent  computations  
with no circular derivation. 

 The trace construction satisfies  the following property:

\begin{proposition}  \label{prop:evaltr-fin-tr}
If $\valid{\TrRuleSet}{\evaltr{\conf}{t}}$ holds, then the following are equivalent:
\begin{enumerate}
\item $\valid{\TrRuleSet}{\evaltr{\conf}{t}}$ holds by a finite derivation
\item $t = t' \cdot \res$ for some $t' \in \List{\ConfSet}$ and $\res \in \ResSet$
\item $t$ is finite.
\end{enumerate}
\end{proposition}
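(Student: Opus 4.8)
The plan is to prove the three implications in a cycle, $(1)\Rightarrow(3)\Rightarrow(2)\Rightarrow(1)$, which is the most natural arrangement because finiteness of the derivation and finiteness of the trace are tied together by the shape of the rules.

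First I would establish $(1)\Rightarrow(3)$. Observe that every rule has finitely many premises: the trace-introduction rules have $n+1$ premises and the divergence-propagation rules have $i$ premises, and by condition \textbf{BP} the number $n$ is bounded. Hence in a finite derivation only finitely many judgements appear, and each edge of the trace in the conclusion, namely $\conf\cdot t_1\cdot\ResSet(\judg_1)\cdots$, is assembled by concatenation from the configuration $\conf$, the finite separators $\ResSet(\judg_j)$, and the traces of the premises. I would argue by induction on the (finite) derivation that the resulting trace is finite: the base case is the axiom $\evaltr{\res}{\res}$, whose trace $\res$ is finite, and in the inductive step the conclusion trace is a finite concatenation of finite separators with the premise traces, each finite by the induction hypothesis. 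This gives $(3)$.

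Next, $(3)\Rightarrow(2)$: I want to show that a finite derivable trace ends in a result. The key point is that divergence-propagation rules can only produce an \emph{infinite} trace, since their last premise $\eval{\ConfSet(\judg_i)}{t}$ carries an infinite $t$ which is concatenated into the conclusion, forcing the conclusion trace to be infinite. Consequently, if $t$ is finite the derivation cannot end (at the root) with a divergence-propagation rule, so it must end with either the axiom or a trace-introduction rule. In the axiom case $t=\res\in\ResSet$, which has the form $\epsilon\cdot\res$. In the trace-introduction case the conclusion trace literally ends with $\ResSet(\judg_{n+1})\in\ResSet$, so $t=t'\cdot\res$ for the appropriate finite prefix $t'$ and result $\res$. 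Either way $(2)$ holds. I should be a little careful to justify that a finite overall trace forces each sub-trace used in the conclusion to be finite, but this follows because concatenation of a finite prefix with anything infinite is infinite.

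Finally, $(2)\Rightarrow(1)$: assuming $t=t'\cdot\res$ with $t'$ finite, I want to exhibit a finite derivation. The cleanest route is a well-founded induction on the length of $t$. Since a divergence-propagation rule always yields an infinite trace, any derivation of a judgement with a finite trace must use, at its root, the axiom or a trace-introduction rule; in the latter case each premise $\evaltr{\ConfSet(\judg_j)}{t_j\cdot\ResSet(\judg_j)}$ again has a finite trace, and that trace is a strict infix of $t$ (it is strictly shorter because $\conf$ and the surrounding separators are stripped away), so the induction hypothesis applies and each premise admits a finite derivation; assembling finitely many finite subderivations under one rule gives a finite derivation of the conclusion. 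The main obstacle is the bookkeeping needed to make the ``strictly shorter'' claim precise, ensuring that each premise trace is genuinely shorter than $t$ so the induction is well-founded; here the bound \textbf{BP} on the number of premises and the presence of the non-empty separators $\conf$ and $\ResSet(\judg_j)$ are exactly what guarantee the decrease.
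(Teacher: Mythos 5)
Your proposal is correct and takes essentially the same approach as the paper: both hinge on the observation that divergence-propagation rules force an infinite trace, use induction on the finite derivation for one direction, and use induction on the length of the trace for the converse. The only difference is the orientation of the cycle of implications ($1\Rightarrow 3\Rightarrow 2\Rightarrow 1$ instead of the paper's $1\Rightarrow 2\Rightarrow 3\Rightarrow 1$), which does not change the substance of the argument.
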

\begin{proof}
First note that, if $\evaltr{\conf}{t}$ is the conclusion of a rule $\rho^\trlb$, then $t$ is infinite iff $\rho^\trlb$ is a divergence propagation rule iff there is a premise $\judg$ (the last one) of $\rho^\trlb$ such that $\Trace(\judg)$ is infinite as well. 
Now we prove the following chain of implications: $1\Rightarrow 2 \Rightarrow 3 \Rightarrow 1$. 

To prove $1 \Rightarrow 2$, we proceed by induction on the derivation. 
If the last applied rule is an axiom (base case), then $\conf  = \res \in \ResSet$ and $ t = \res$. 
Otherwise, we have applied a rule $\rho^\trlb$ with premises $\judg_1, \ldots, \judg_{n+1}$, hence, for all $i \in 1..n+1$, $\valid{\TrRuleSet}{\judg_i}$ holds by a finite derivation, and so, by induction hypothesis, we have $\Trace(\judg_i) = t_i \cdot \res_i$ for some $t_i \in \List{\ConfSet}$ and $\res_i \in \ResSet$. 
Then, $\rho^\trlb$ is a trace introduction rule, since, if it were a divergence propagation rule, one of its premises would have an infinite trace, which contradicts the induction hypothesis. 
Hence, we have $t = \conf \cdot \Trace(\judg_1) \cdot \cdots \cdot \Trace(\judg_{n+1}) = t' \cdot \res_{n+1}$, as needed. 

The implication $2\Rightarrow 3$ is trivial. 
To prove $3 \Rightarrow 1$, we proceed by induction on (the length of) $t$, which is possible since $t$ is finite. 
The judgement $\evaltr{\conf}{t}$ is derivable by hypothesis, hence it has a (possibly infinite) derivation. Let us denote by $\rho^\trlb$ the last applied rule in this derivation. 
Since $t$ is finite, $\rho^\trlb$ is not a divergence propagation rule, hence we have only two cases:
$\rho^\trlb$ is an axiom, and then the thesis is trivial, or $\rho^\trlb$ is a trace introduction rule. 
In this second case, we have $\rho^\trlb \equiv \tracerule{\rho}{t_1, \ldots, t_{n+1}}$, with premises $\judg_1, \ldots, \judg_{n+1}$,  and $t = \conf \cdot \Trace(\judg_1) \cdot \cdots \cdot \Trace(\judg_{n+1})$;
hence, for all $i \in 1..n+1$, $\Trace(\judg_i)$ is finite and strictly shorter than $t$, thus, by induction hypothesis, $\valid{\TrRuleSet}{\judg_i}$ holds by a finite proof tree. 
Therefore, by applying $\rho^\trlb$ to the finite derivations for $\judg_1, \ldots, \judg_{n+1}$ we get a finite derivation for $\evaltr{\conf}{t}$, as needed. 
\end{proof}

The main consequence of \refToProp{evaltr-fin-tr} is that on judgements $\evaltr{\conf}{t}$ where $t$ is finite we can reason by induction on ({trace introduction}) rules, even  though  the set {$\TrRuleSet$} of rules is {interpreted} coinductively.
Furthermore, it ensures that if $\evaltr{\conf}{t}$ is derivable with $t$ finite, then $t$ terminates with a result. 

 The trace construction is \emph{conservative} with respect to the original semantics, that is, converging computations are not affected. 
\begin{theorem} \label{thm:eq-finite_0}
$\valid{\TrRuleSet}{\evaltr{\conf}{t\cdot \res}}$ for some $t \in \List{\ConfSet}$ iff $\valid{\RuleSet}{\eval{\conf}{\res}}$.
\end{theorem}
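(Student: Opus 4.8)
The plan is to reduce the biconditional to a pair of inductions, using \refToProp{evaltr-fin-tr} as the bridge that makes induction available despite the coinductive reading of $\TrRuleSet$. The first observation is that, by the equivalence of conditions $1$, $2$ and $3$ in that proposition, the left-hand side ``$\valid{\TrRuleSet}{\evaltr{\conf}{t \cdot \res}}$ for some $t \in \List{\ConfSet}$'' is exactly the statement that $\evaltr{\conf}{t'}$ is derivable with $t'$ finite; moreover such a judgement admits a \emph{finite} derivation whose last applied rule is an axiom or a trace introduction rule, and never a divergence propagation rule. This is precisely what licenses a standard rule induction on both sides.

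For the direction $\Leftarrow$, I would assume $\valid{\RuleSet}{\eval{\conf}{\res}}$ and proceed by induction on its (finite, inductive) derivation. In the base case $\conf = \res$ and the result follows from the trace axiom $\Rule{}{\evaltr{\res}{\res}}$, taking $t = \epsilon$. In the inductive case the last applied rule is some $\rho \equiv \inlinerule{\judg_1 \ldots \judg_n}{\judg_{n+1}}{\conf}$ with $\res = \ResSet(\judg_{n+1})$; each premise $\judg_i = \eval{\ConfSet(\judg_i)}{\ResSet(\judg_i)}$ is derivable, so by the induction hypothesis applied \emph{at the result $\ResSet(\judg_i)$} there are finite traces $t_i$ with $\valid{\TrRuleSet}{\evaltr{\ConfSet(\judg_i)}{t_i \cdot \ResSet(\judg_i)}}$. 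Applying the trace introduction rule $\tracerule{\rho}{t_1, \ldots, t_{n+1}}$ then yields $\evaltr{\conf}{\conf \cdot t_1 \cdot \ResSet(\judg_1) \cdot \cdots \cdot t_{n+1} \cdot \ResSet(\judg_{n+1})}$, a finite trace ending in $\ResSet(\judg_{n+1}) = \res$, as required.

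For the direction $\Rightarrow$, I would start from a finite derivation of $\evaltr{\conf}{t \cdot \res}$ (available by \refToProp{evaltr-fin-tr}) and induct on it, knowing that the last rule is an axiom or a trace introduction rule. The axiom case gives $\conf = \res$ and hence $\valid{\RuleSet}{\eval{\res}{\res}}$ by the implicit axiom of $\RuleSet$. Otherwise the last rule is $\tracerule{\rho}{t_1, \ldots, t_{n+1}}$ for some $\rho \equiv \inlinerule{\judg_1 \ldots \judg_n}{\judg_{n+1}}{\conf}$, with premises $\evaltr{\ConfSet(\judg_i)}{t_i \cdot \ResSet(\judg_i)}$, each derivable by a finite subderivation; the induction hypothesis gives $\valid{\RuleSet}{\eval{\ConfSet(\judg_i)}{\ResSet(\judg_i)}}$, i.e.\ $\valid{\RuleSet}{\judg_i}$, for every $i$, and applying $\rho$ yields $\valid{\RuleSet}{\eval{\conf}{\ResSet(\judg_{n+1})}}$. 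Reading off the last element of the conclusion trace shows $\res = \ResSet(\judg_{n+1})$, closing the case.

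The only genuinely delicate point, which I would emphasise, is the bookkeeping of results rather than the induction itself: the trace introduction rule requires each premise trace to terminate \emph{exactly} in the result $\ResSet(\judg_i)$ prescribed by $\rho$, and symmetrically the conclusion trace must terminate in $\ResSet(\judg_{n+1})$. Invoking the induction hypothesis at the specific result carried by each premise, together with the fact that in both $\RuleSet$ and $\TrRuleSet$ the result of the conclusion coincides with the result of the continuation, is what makes the two sides line up. Everything else is the routine matching of the shape of $\rho$ against the shape of $\tracerule{\rho}{t_1,\ldots,t_{n+1}}$, made available entirely by the finiteness guarantee of \refToProp{evaltr-fin-tr}.
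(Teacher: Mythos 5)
Your proposal is correct and follows exactly the route the paper intends: it invokes \refToProp{evaltr-fin-tr} to obtain finiteness of derivations and then performs a straightforward rule induction in each direction, which is precisely what the paper's one-line proof ("Thanks to \refToProp{evaltr-fin-tr}, in both directions it is a straightforward induction on rules") compresses. Your expansion, including the careful bookkeeping that the premise traces must end in exactly the results prescribed by $\rho$, is a faithful and accurate elaboration of that argument.
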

\begin{proof}
Thanks to \refToProp{evaltr-fin-tr}, in both directions it is a straightforward induction on rules. 
\end{proof}

\subsection{Wrong} \label{sect:wrong}

A well-known technique \cite{Pierce02}  (Exercise 3.5.16)  to distinguish between stuck and diverging computations, in a sense ``dual'' to the previous one, is to  add  a special result $\Wrong$, so that $\eval{\conf}{\Wrong}$ means that the evaluation of $\conf$ goes stuck.

 In this case, to define an ``automatic'' version of the construction, starting from $\Triple{\ConfSet}{\ResSet}{\RuleSet}$, is a non-trivial problem. Our solution is based on defining 
a relation on rules, modelling \emph{equality up to a certain index $i$},  also used for other aims  in the following.
Consider $\rho \equiv \inlinerule{\judg_1\ldots\judg_n}{\judg_{n+1}}{\conf}$,  
$\rho' \equiv \inlinerule{\judg'_1\ldots \judg'_m}{\judg'_{m+1}}{\conf'}$, and an index $i \in 1..\min (n+1, m+1)$, then 
 $\rho \sim_i \rho'$  if  
\begin{itemize}
\item $\conf = \conf'$
\item for all $k < i$, $\judg_k = \judg'_k$
\item $\ConfSet(\judg_i) = \ConfSet(\judg'_i)$
\end{itemize}
Intuitively, this means that rules $\rho$ and $\rho'$ model the same computation until the $i$-th premise. 
 Using this relation, we  derive, from the judgment $\eval{\conf}{\res}$, an enriched big-step judgement $\eval{\conf}{\res_\wrlb}$ where $\res_\wrlb \in \ResSet \cup \{\Wrong\}$, defined  by a set of rules $\WrRuleSet$ containing all rules in $\RuleSet$ and two other kinds of rules:

\begin{description}
\item [$\Wrong$ {introduction}]  {These rules derive $\Wrong$ whenever the (sub)configuration in a premise of a rule reduces to a result which is not admitted in such (or any equivalent) rule:
}
for each $\rho \equiv \inlinerule{\judg_1\ldots\judg_n}{\judg_{n+1}}{\conf}$ in $\RuleSet$, index $i \in 1..n+1$, and result $\res \in \ResSet$, 
if for all rules $\rho'$ such that $\rho \sim_i \rho'$, $\ResSet(\rho', i) \ne \res$, then we add the rule $\wrongrule{\rho}{i}{\res}$ as follows:\\
\centerline{$
\Rule{
	\judg_1 \ldots \judg_{i-1}
	\Space 
	\eval{\ConfSet(\judg_i)}{\res}
}{ \eval{\conf}{\Wrong} }
$}
We also add an axiom $\Rule{}{ \eval{\conf}{\Wrong} }$ for each configuration $\conf$ which is not the conclusion of any rule. 
\item [$\Wrong$ propagation] These rules propagate $\Wrong$ analogously to those for divergence propagation:
for each $\rho \equiv \inlinerule{\judg_1\ldots\judg_n}{\judg_{n+1}}{\conf}$ in $\RuleSet$, and index $i \in 1..n+1$, we add the rule $\proprule{\rho}{i}{\Wrong}$ as follows:\\
\centerline{$
\Rule{
	\judg_1 \ldots \judg_{i-1}
	\Space
	\eval{\ConfSet(\judg_i)}{\Wrong}
}{ \eval{\conf}{\Wrong} }
$}
\end{description}
We write $\valid{\WrRuleSet}{\eval{\conf}{\res_\wrlb}}$ when the judgment $\eval{\conf}{\res_\wrlb}$ is derivable in $\WrRuleSet$. 

We show in \refToFigure{example-wrong} the meta-rules for $\Wrong$ introduction and propagation constructed starting from  those for application and successor.
\begin{figure}
\begin{small}
\begin{quote}
$\MetaRule{wrong-app}{\eval{\e_1}{n}}{\eval{\AppExp{\e_1}{\e_2}}{\Wrong}}{}\Space\MetaRule{wrong-succ}{\eval{\e}{\lambda x.\e'}}{\eval{\SuccExp \e}{\Wrong}}{}$\\[2.5ex]
$\MetaRule{prop-app-1}{\eval{\e_1}{\Wrong}}{ \eval{\AppExp{\e_1}{\e_2}}{\Wrong}}{}$\Space
$\MetaRule{prop-app-2}{\eval{\e_1}{\lambda x.e}\Space\eval{e_2}{\Wrong}}{ \eval{\AppExp{\e_1}{\e_2}}{\Wrong}}{}$\\[2.5ex]
$\MetaRule{prop-app-3}{\eval{\e_1}{\lambda x.e}\Space\eval{e_2}{v_2}\Space\eval{e[v_2/x]}{\Wrong}}{ \eval{\AppExp{\e_1}{\e_2}}{\Wrong}}{}\Space\MetaRule{prop-succ}{\eval{\e}{\Wrong}}{\eval{\SuccExp \e}{\Wrong}}{}$
\end{quote}
\end{small}
\caption{Semantics with $\Wrong$ for application and successor}\label{fig:example-wrong}
\end{figure}
For instance, rule \rn{wrong-app} is introduced since in the original semantics there is rule \rn{app} with $\AppExp{\e_1}{\e_2}$ in the consequence and $\e_1$ in the first premise, but 
there is no equivalent rule (that is, with $\AppExp{\e_1}{\e_2}$ in the consequence and $\e_1$ in the first premise) such that the result in the first premise is $n$.

 The $\Wrong$ construction is conservative as well. 
\begin{theorem} \label{thm:eq-finite_1}
$\valid{\WrRuleSet}{\eval{\conf}{\res}}$ iff $\valid{\RuleSet}{\eval{\conf}{\res}}$.
\end{theorem}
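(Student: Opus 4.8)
The plan is to exploit the purely structural observation that, among the rules of $\WrRuleSet$, the only ones whose conclusion carries a \emph{proper} result $\res \in \ResSet$ (rather than $\Wrong$) are exactly those inherited from $\RuleSet$, together with the implicit result axioms $\eval{\res}{\res}$. Every newly added rule --- $\Wrong$ introduction, $\Wrong$ propagation, and the $\Wrong$ axioms for configurations that are not the conclusion of any rule --- has conclusion of the form $\eval{\conf}{\Wrong}$. Moreover, a rule $\rho \equiv \inlinerule{\judg_1\ldots\judg_n}{\judg_{n+1}}{\conf}$ of $\RuleSet$ has only premises of the form $\eval{\ConfSet(\judg_i)}{\ResSet(\judg_i)}$ with $\ResSet(\judg_i) \in \ResSet$, so $\Wrong$ never occurs as a premise of an inherited rule. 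Hence $\Wrong$ is, in a sense, \emph{contagious towards the root}: it can be consumed only by a $\Wrong$ propagation rule, which again produces $\Wrong$. This is the fact I would isolate first, and it is essentially the only point requiring care.

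The right-to-left implication is then immediate: since $\WrRuleSet$ contains all rules of $\RuleSet$ (and the same result axioms), any derivation witnessing $\valid{\RuleSet}{\eval{\conf}{\res}}$ is \emph{verbatim} a derivation in $\WrRuleSet$, so $\valid{\WrRuleSet}{\eval{\conf}{\res}}$ holds.

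For the left-to-right implication I would argue by induction on the $\WrRuleSet$-derivation of $\eval{\conf}{\res}$, where $\res \in \ResSet$ is a proper result. By the observation above, the last applied rule cannot be a $\Wrong$ introduction rule, a $\Wrong$ propagation rule, nor a $\Wrong$ axiom, since all of these conclude $\Wrong \ne \res$; hence it is either the result axiom for $\res$ (so $\conf = \res$, and the thesis is immediate) or a rule $\rho \equiv \inlinerule{\judg_1\ldots\judg_n}{\judg_{n+1}}{\conf}$ of $\RuleSet$. In the latter case every premise $\judg_i$ has the form $\eval{\ConfSet(\judg_i)}{\ResSet(\judg_i)}$ with $\ResSet(\judg_i) \in \ResSet$, so the induction hypothesis applies to each of the strictly smaller subderivations of $\valid{\WrRuleSet}{\judg_i}$ and yields $\valid{\RuleSet}{\judg_i}$. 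Re-applying $\rho$ --- which belongs to $\RuleSet$ --- to these derivations produces the desired $\RuleSet$-derivation of $\eval{\conf}{\res}$.

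I do not expect any genuine obstacle here: once the ``no $\Wrong$ below a proper result'' property is established, the induction is routine, and it in fact exhibits a one-to-one correspondence between $\WrRuleSet$-derivations and $\RuleSet$-derivations of judgements with proper results. The only subtlety worth stating explicitly is that $\WrRuleSet$ is interpreted \emph{inductively} (unlike the coinductive trace system $\TrRuleSet$), which is precisely what legitimises induction on the derivation and mirrors the structure of the analogous conservativity result, Theorem~\ref{thm:eq-finite_0}.
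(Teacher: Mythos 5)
Your proposal is correct and follows essentially the same route as the paper: the right-to-left direction by the inclusion $\RuleSet \subseteq \WrRuleSet$, and the left-to-right direction by induction on the derivation, using the observation that only rules of $\RuleSet$ (and the result axioms) can conclude a judgement with a proper result $\res \in \ResSet$. Your extra remark that $\Wrong$ is never a premise of an inherited rule is a harmless elaboration of the same key point.
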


\begin{proof}
The implication $\valid{\RuleSet}{\eval{\conf}{\res}}\Rightarrow \valid{\WrRuleSet}{\eval{\conf}{\res}}$ holds since $\RuleSet \subseteq \WrRuleSet$ by construction.
To prove the vice versa, we proceed by induction on rules.
The only relevant cases are rules in $\RuleSet$, because rules in $\WrRuleSet\setminus \RuleSet$ allow only to derive judgements of shape $\eval{\conf}{\Wrong}$.
Hence, the thesis is immediate. 
\end{proof}


\section{Expressing and proving soundness}\label{sect:soundness}

A predicate (for instance, a typing judgment) is \emph{sound} when, informally, a program satisfying the predicate (e.g., a well-typed program) cannot \emph{go wrong}, following Robin Milner's slogan \cite{Milner78}.
In small-step style, as firstly formulated in \citet{WrightFelleisen94}, this is naturally expressed as follows: well-typed programs  never reduce to 
terms which neither are values, nor can be further reduced  (called \emph{stuck} terms). The standard technique to ensure soundness is by subject reduction (well-typedness is preserved by reduction) and progress (a well-typed term is not stuck).

 We  discuss how soundness can be expressed  for  the two approaches previously presented and  we  introduce sufficient conditions. 
In other words, we provide a proof technique to show  the  soundness of a predicate with respect to a big-step semantics. 
 As  mentioned in the Introduction, the extended semantics  is only needed 
to prove the correctness of technique, whereas to \emph{apply} the technique for a given big-step semantics it is enough to reason on the original rules. 

\subsection{Expressing soundness} \label{sect:must-may}

In the following, we assume a big-step semantics  $\Triple{\ConfSet}{\ResSet}{\RuleSet}$, and an \emph{indexed  predicate on configurations}, that is,  a  family $  \Pred= (\Pred_\idx)_{\idx \in \IdxSet}$, for $\IdxSet$ set of \emph{indexes}, with $\Pred_\idx \subseteq \ConfSet$. 
A representative case is that, as in the examples of \refToSect{examples}, the predicate is a typing judgment and the indexes are types; however, the proof technique could be applied to other kinds of predicates.  When there is no ambiguity, we also denote by $\Pred$ the corresponding predicate $\bigcup_{\idx \in \IdxSet} \Pred_\idx$ on $\ConfSet$ (e.g., to be well-typed with an arbitrary type).

To discuss how to express soundness of $\Pred$, first of all note that, in the non-deterministic case (that is, there is possibly more than one computation for a configuration),  we can distinguish two flavours of soundness \cite{DNH84}:
\begin{description}
\item[soundness-must] (or simply soundness) no computation can be stuck
\item[soundness-may] at least one computation is not stuck
\end{description}
Soundness-must is the standard soundness in small-step semantics, and can be expressed in the $\Wrong$ 
extension 
as follows: 
\begin{description}
\item[soundness-must ($\Wrong$)]  If $\conf \in \Pred$, then $\notvalid{\WrRuleSet}{\eval{\conf}{\Wrong}}$
\end{description}
Instead, soundness-must \emph{cannot} be expressed in the trace extension. Indeed, stuck computations are not explicitly modelled. 
 Conversely,  soundness-may can be expressed in the trace extension as follows:
\begin{description}
\item[soundness-may (traces)] If $\conf \in \Pred$, then there is $t$ such that $\valid{\TrRuleSet}{\evaltr{\conf}{t}}$
\end{description}
\noindent  whereas cannot be expressed in  the $\Wrong$ semantics, since diverging computations are not modelled.

 Of course soundness-must and soundness-may coincide  in the deterministic case. 
Finally, note that indexes (e.g., the specific types of configurations) do not play any role in the above statements. However, they are relevant in the notion of \emph{strong soundness}, introduced by \citet{WrightFelleisen94}.
Strong soundness holds if, for configurations satisfying $\Pred_\idx$ (e.g., having a given type), computation cannot be stuck, and moreover, produces a result satisfying $\Pred_\idx$ (e.g., of the same type) if terminating. Note that soundness alone does not even guarantee to obtain a result satisfying $\Pred$ (e.g., a well-typed result). The three conditions introduced in the following section actually ensure strong soundness.

In 
\refToSect{sc} we provide sufficient conditions for soundness-must, showing that  they  actually ensure soundness  in  the $\Wrong$ semantics (\refToThm{sound-wrong}). Then, in \refToSect{smc}, we provide (weaker) sufficient conditions for soundness-may, and show that they actually ensure soundness-may  in  the trace semantics (\refToThm{sound-traces}).


\subsection{Conditions ensuring soundness-must} \label{sect:sc}

 The  three conditions which ensure the soundness-must property  are \emph{local preservation}, \emph{$\exists$-progress}, and $\forall$-progress. The names suggest that the former plays the role of the \emph{type preservation (subject reduction)} property, and the latter two of the \emph{progress} property in small-step semantics. However, as we will see, the correspondence is only rough, since the reasoning here is different. 

Considering the first condition more closely, we use the name \emph{preservation} rather than type preservation since, as already mentioned, the proof technique can be applied to arbitrary predicates. More importantly, \emph{local} means that the condition is \emph{on single rules} rather than on the semantic relation as a whole, as standard subject reduction. 
The same holds for the  other two conditions. 


\begin{definition}
[\refToSound{preservation}: Local Preservation] \label{sound:preservation}
For each $\rho {\equiv} \inlinerule{\judg_1\ldots\judg_n}{\judg_{n+1}}{\conf}$, if $\conf{\in}\Pred_\idx$, then there exist $\idx_1, \ldots, \idx_{n+1} \in \IdxSet$, with $\idx_{n+1} {=} \idx$,  such that, \mbox{for all $k \in 1..n+1$:}
\begin{quote}  
if, for all $h < k$, $\ResSet(\judg_h) \in \Pred_{\idx_{h}}$, 
then $\ConfSet(\judg_k) \in \Pred_{\idx_k}$.
\end{quote}
\end{definition}

 Thinking to the paradigmatic case where the indexes are types,  for each rule $\rho$,  if the configuration $\conf$ in the consequence has type $\idx$, 
we have to find types $\idx_1, \ldots, \idx_{n+1}$ which can be assigned to (the configurations in) the premises, in particular the same type as $\conf$ for the continuation. 
 More precisely,  we start finding type $\idx_1$, and successively find the type $\idx_k$ for (the configuration in) the $k$-th premise assuming that  the results of all the previous premises have the expected  types.   Indeed, if all such previous premises are derivable, then the expected type should be preserved by their results; if some premise is not derivable, the considered rule is ``useless''. For instance, considering (an instantiation of) meta-rule $\metainlinerule{app}{\eval{\e_1}{\LambdaExp{\x}{\e}}\ \eval{\e_2}{\val_2}}{\eval{\subst{\e}{\val_2}{\x}}{\val}}{\AppExp{\e_1}{\e_2}}$ in \refToSect{framework}, we prove that $\subst{\e}{\val_2}{\x}$ has the type $\T$ of $\AppExp{\e_1}{\e_2}$ under the assumption that $\LambdaExp{\x}{\e}$ has type $\funType{\tA'}{\tA}$, and $\val_2$ has type $\T'$ (see the proof example in \refToSect{simply-typed} for more details).\\
 A counter-example to condition \refToSound{preservation} is discussed at the beginning of \refToSect{iut}. 

The following lemma assures that local preservation actually implies \emph{preservation} of the semantic relation as a whole.  

\begin{lemma}[{Preservation}] \label{lem:sr}
 Let $\RuleSet$ and $\Pred$ satisfy condition {\em \refToSound{preservation}}. 
If $\valid{\RuleSet}{\eval{\conf}{\res}}$ and $\conf \in \Pred_\idx$, then $\res \in \Pred_\idx$.
\end{lemma}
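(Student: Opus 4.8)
The plan is to prove the statement by induction on the derivation of $\valid{\RuleSet}{\eval{\conf}{\res}}$, using a nested (inner) induction on the premise index in order to exploit the conditional structure of local preservation. For the base case, the derivation consists of a single axiom $\Rule{}{\eval{\res}{\res}}$, so that $\conf = \res$; since $\conf \in \Pred_\idx$ by hypothesis, we immediately obtain $\res \in \Pred_\idx$.

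For the inductive case, the last applied rule is some $\rho \equiv \inlinerule{\judg_1 \ldots \judg_n}{\judg_{n+1}}{\conf}$, whose premises $\judg_1, \ldots, \judg_{n+1}$ are all derivable by strictly smaller derivations, and by the shape of rules we have $\res = \ResSet(\judg_{n+1})$. Since $\conf \in \Pred_\idx$, condition \refToSound{preservation} supplies indexes $\idx_1, \ldots, \idx_{n+1}$ with $\idx_{n+1} = \idx$. I would then establish, by induction on $k \in 1..n+1$, the auxiliary claim that $\ResSet(\judg_k) \in \Pred_{\idx_k}$. In the step for $k$, the inner induction hypothesis gives $\ResSet(\judg_h) \in \Pred_{\idx_h}$ for all $h < k$, which is exactly the antecedent required by \refToSound{preservation}; discharging it yields $\ConfSet(\judg_k) \in \Pred_{\idx_k}$. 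Now, since $\judg_k \equiv \eval{\ConfSet(\judg_k)}{\ResSet(\judg_k)}$ is derivable by a subderivation and $\ConfSet(\judg_k) \in \Pred_{\idx_k}$, the outer induction hypothesis yields $\ResSet(\judg_k) \in \Pred_{\idx_k}$, completing the inner step. Instantiating the auxiliary claim at $k = n+1$ gives $\res = \ResSet(\judg_{n+1}) \in \Pred_{\idx_{n+1}} = \Pred_\idx$, as required.

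The only delicate point is the coordination of the two nested inductions: local preservation does not assert $\ConfSet(\judg_k) \in \Pred_{\idx_k}$ outright, but only conditionally on all previous results being well-indexed, so one cannot apply the outer induction hypothesis to a premise before having established the indexing of the earlier premises. The inner induction on $k$ is precisely what threads these dependencies in the correct order, so that its hypothesis matches the antecedent of \refToSound{preservation} at each step and the outer hypothesis can fire premise by premise. No further machinery is needed; in particular the bound {\sf BP} on the number of premises plays no role here, since each single rule has finitely many premises anyway.
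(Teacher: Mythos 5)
Your proof is correct and follows essentially the same route as the paper: an outer induction on the derivation combined with an inner (complete) induction on the premise index, with the inner hypothesis supplying exactly the antecedent of condition \refToSound{preservation} and the outer hypothesis firing premise by premise. The only cosmetic difference is that your inner invariant is stated on $\ResSet(\judg_k)$ rather than $\ConfSet(\judg_k)$, which merely inlines one application of the outer hypothesis that the paper performs separately.
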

\begin{proof}
The proof is by  a double induction. We denote by $RH$  and $IH$ the first and the second induction hypothesis, respectively. 
The first 
induction is on big-step rules.  
 Axioms  have conclusion $\eval{\res}{\res}$, hence the thesis holds since $\res \in \Pred_\idx$ by hypothesis. 
 Other rules have shape $\inlinerule{\judg_1\ldots\judg_n}{\judg_{n+1}}{\conf}$ with $\conf \in \Pred_\idx$. 
We prove by complete induction on $k \in 1..n+1$ that \mbox{$\ConfSet(\judg_k) \in \Pred_{\idx_k}$}, for all $k \in 1..n+1$ and for some $\idx_1, \ldots, \idx_{n+1} \in \IdxSet$.
By \refToSound{preservation}, there are $\idx_1, \ldots, \idx_{n+1} \in \IdxSet$ and $\ConfSet(\judg_1) \in \Pred_{\idx_1}$. 
For $k > 1$, by $IH$ we know that $\ConfSet(\judg_h) \in \Pred_{\idx_h}$, for all $h < k$. Then, by  $RH$,  
we get that $\ResSet(\judg_h) \in \Pred_{\idx_h}$.  Moreover, 
by \refToSound{preservation}, $\ConfSet(\judg_k) \in \Pred_{\idx_k}$, as needed. 
In particular, we have just proved that $\ConfSet(\judg_{n+1}) \in \Pred_{\idx_{n+1}}$ and, since by \refToSound{preservation} $\idx_{n+1} = \idx$, we get $\ConfSet(\judg_{n+1}) \in \Pred_\idx$. 
Then, by  $RH$,  
we  conclude  that $\res = \ResSet(\judg_{n+1}) \in \Pred_\idx$, as needed. 
\end{proof}
%

The following proposition  is  a form of local preservation  where indexes (e.g., specific types) are not relevant, simpler to use in the 
 proofs of  
Theorems \ref{thm:sound-wrong} \mbox{and 
 \ref{thm:sound-traces}}.  

\begin{proposition} \label{prop:preservation}
 Let $\RuleSet$ and $\Pred$ satisfy condition {\em \refToSound{preservation}}. 
 For each  
$\inlinerule{\judg_1\ldots\judg_n}{\judg_{n+1}}{\conf}$ and \mbox{$k{\in} 1..n+1$}, 
if $\conf \in \Pred$ and, for all $h < k$, $\valid{\RuleSet}{\judg_h}$, then $\ConfSet(\judg_k) \in \Pred$.
\end{proposition}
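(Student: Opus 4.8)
The plan is to derive this index-free statement from condition \refToSound{preservation} together with the already-proven Preservation result (Lemma~\ref{lem:sr}), essentially replaying the inner induction of that lemma's proof but using Lemma~\ref{lem:sr} itself in place of the outer induction hypothesis on rules. First, since $\conf \in \Pred$, I would fix an index $\idx$ with $\conf \in \Pred_\idx$ and apply \refToSound{preservation} to the rule $\inlinerule{\judg_1\ldots\judg_n}{\judg_{n+1}}{\conf}$, obtaining indexes $\idx_1, \ldots, \idx_{n+1}$ such that, for every $j$, one has $\ConfSet(\judg_j) \in \Pred_{\idx_j}$ as soon as $\ResSet(\judg_h) \in \Pred_{\idx_h}$ holds for all $h < j$ (the clause $\idx_{n+1} = \idx$ plays no role here). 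It then suffices to show $\ConfSet(\judg_k) \in \Pred_{\idx_k}$, which immediately gives $\ConfSet(\judg_k) \in \Pred$.

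I would prove $\ConfSet(\judg_{k'}) \in \Pred_{\idx_{k'}}$ by complete induction on $k'$, for $k'$ ranging over $1..k$. In the inductive step, the induction hypothesis yields $\ConfSet(\judg_h) \in \Pred_{\idx_h}$ for every $h < k'$. For each such $h$ we have $h < k$, so the proposition's assumption supplies $\valid{\RuleSet}{\judg_h}$, i.e. that $\eval{\ConfSet(\judg_h)}{\ResSet(\judg_h)}$ is derivable; Lemma~\ref{lem:sr} then upgrades $\ConfSet(\judg_h) \in \Pred_{\idx_h}$ to $\ResSet(\judg_h) \in \Pred_{\idx_h}$. Having these result memberships for all $h < k'$, a single application of \refToSound{preservation} delivers $\ConfSet(\judg_{k'}) \in \Pred_{\idx_{k'}}$, closing the induction; taking $k' = k$ concludes. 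Note that the base case $k' = 1$ is handled uniformly, since \refToSound{preservation} at index $1$ has an empty list of hypotheses.

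The step I would treat most carefully is the bookkeeping of the two directions of implication and the exact range of indexes. Condition \refToSound{preservation} only produces a configuration membership from the \emph{results} of strictly earlier premises, whereas Lemma~\ref{lem:sr} is precisely the device turning a configuration membership into a result membership, and it may be invoked only for premises $h < k$, which are exactly those the proposition assumes derivable. This asymmetry also explains why the conclusion concerns $\ConfSet(\judg_k)$ and makes no claim about $\ResSet(\judg_k)$: no derivability hypothesis is available for $\judg_k$ itself, so the chain cannot be pushed one step further.
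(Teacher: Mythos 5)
Your proof is correct and follows essentially the same route as the paper's: a complete induction on the premise index, using Lemma~\ref{lem:sr} to turn configuration memberships of earlier premises into result memberships, and then condition \refToSound{preservation} to step to the $k$-th configuration. The only difference is that you track the indexes $\idx_1,\ldots,\idx_{n+1}$ explicitly rather than working with the union $\Pred$, which if anything makes the appeal to \refToSound{preservation} slightly more precise than the paper's wording.
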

\begin{proof}
The proof is by complete induction on $k$. 
Assume the thesis for all $h < k$,
then, since by hypothesis we have $\valid{\RuleSet}{\judg_h}$ for all $ h < k$, we  get, by induction hypothesis,  $\ConfSet(\judg_h) \in \Pred$ for all $h < k$. 
By \refToLem{sr}, we also get $\ResSet(\judg_h) \in \Pred$, hence by condition \refToSound{preservation}, we get the thesis. 
\end{proof}

The second condition, named \emph{$\exists$-progress}, ensures that, for configurations satisfying the predicate $\Pred$  (e.g., well-typed),
we can \emph{start constructing} a {proof} tree. 

\begin{definition}
[\refToSound{progress-ex}: $\exists$-progress] \label{sound:progress-ex}
For each 
$\conf \in \Pred {\setminus} \ResSet$,  $\ConfSet(\rho) = \conf$ \mbox{for some rule $\rho$}.
\end{definition}

The third condition, named \emph{$\forall$-progress}, ensures that, for configurations satisfying 
$\Pred$, we can \emph{continue constructing} the proof tree. This condition uses the notion of rules \emph{equivalent up-to an index} introduced at the 
beginning of \refToSect{wrong}.

\begin{definition}[\refToSound{progress-all}: $\forall$-progress] \label{sound:progress-all} 
For each $\rho\equiv\inlinerule{\judg_1\ldots\judg_n}{\judg_{n+1}}{\conf}$, if  $\conf \in \Pred$, then, for each $k \in 1..n+1$:
\begin{quote}
if, for all $h < k$, $\valid{\RuleSet}{\judg_h}$ and $\valid{\RuleSet}{\eval{\ConfSet(\judg_k)}{\res}}$, for some $\res \in \ResSet$, 
then there is a rule $\rho' \sim_k \rho$ such that $\ResSet(\rho', k) = \res$. 
\end{quote}
\end{definition}

We have to check, for each rule $\rho$, the following: if the configuration $\conf$ in the consequence satisfies the predicate (e.g., is well-typed), then, for each $k$, if the configuration in premise $k$ evaluates to some result $\res$ (that is, $\valid{\RuleSet}{\eval{\ConfSet(\judg_k)}{\res}}$), then there is a rule ($\rho$ itself or another rule with the same  configuration in the consequence  and the first $k-1$ premises) with such judgment as $k$-th premise.  This check  can be done  under the assumption that all the previous premises are derivable. For instance, consider again (an instantiation of) the meta-rule $\metainlinerule{app}{\eval{\e_1}{\LambdaExp{\x}{\e}}\ \eval{\e_2}{\val_2}}{\eval{\subst{\e}{\val_2}{\x}}{\val}}{\AppExp{\e_1}{\e_2}}$. Assuming that $\e_1$ evaluates to some $\val_1$,  we have to check that there is a rule  with first premise  $\eval{\e_1}{\val_1}$, in pratice,  that $\val_1$ is a $\lambda$-abstraction. In general,  in the common case where for each configuration in the consequence there is only one applicable meta-rule, checking \refToSound{progress-all} amounts to show that results obtained in the premises satisfy the side conditions of such meta-rule, in particular have the required shape (see also the proof example in \refToSect{simply-typed}). 
If there is more than one applicable meta-rule, (sub)configurations in the premises should only evaluate to results which satisfy the side conditions of one of them, for an example see the proof of \refToSound{progress-all} in \refToThm{sdfjl}.
\label{progress-all}

\paragraph{Soundness-must in $\Wrong$ semantics}
Recall that $\WrRuleSet$ is the extension of $\RuleSet$ with $\Wrong$ (\refToSect{wrong}). 
We prove the claim of {soundness-must}  with respect to  $\WrRuleSet$. 

\begin{theorem} \label{thm:sound-wrong}
 Let $\RuleSet$ and $\Pred$ satisfy conditions {\em \refToSound{preservation}}, {\em \refToSound{progress-ex}} and {\em \refToSound{progress-all}}.   If $\conf \in \Pred$, then \mbox{{\em $\notvalid{\WrRuleSet}{\eval{\conf}{\Wrong}}$.}}
\end{theorem}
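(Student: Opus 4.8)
The plan is to prove the contrapositive: that $\valid{\WrRuleSet}{\eval{\conf}{\Wrong}}$ forces $\conf \notin \Pred$, reasoning by induction on the derivation of $\eval{\conf}{\Wrong}$. Such a derivation is finite and well-founded, since $\WrRuleSet$ is read inductively and, by conservativity (\refToThm{eq-finite_1}), its non-$\Wrong$ judgements are exactly those derivable in $\RuleSet$. The root of the derivation cannot be a rule of $\RuleSet$, because those conclude $\eval{\conf}{\res}$ with $\res \in \ResSet$ rather than $\Wrong$; hence it is one of the three shapes added by the $\Wrong$ construction, and I would dispatch on which one it is. Each case should contradict the assumption $\conf \in \Pred$, and each case is where exactly one of the three soundness conditions comes into play.

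For a $\Wrong$ introduction axiom, the side condition is that $\conf$ is the conclusion of no rule. If $\conf$ were in $\Pred$, then either $\conf \in \ResSet$, in which case the implicit axiom $\eval{\conf}{\conf}$ makes $\conf$ a conclusion and contradicts the side condition, or $\conf \in \Pred \setminus \ResSet$, in which case \refToSound{progress-ex} supplies a rule $\rho$ with $\ConfSet(\rho) = \conf$, again contradicting the side condition. So this case forces $\conf \notin \Pred$.

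For a $\Wrong$ introduction rule $\wrongrule{\rho}{i}{\res}$, with $\rho \equiv \inlinerule{\judg_1\ldots\judg_n}{\judg_{n+1}}{\conf}$, the premises $\judg_1, \ldots, \judg_{i-1}$ and $\eval{\ConfSet(\judg_i)}{\res}$ are all derivable in $\WrRuleSet$ with results in $\ResSet$, hence, by \refToThm{eq-finite_1}, already derivable in $\RuleSet$. Assuming $\conf \in \Pred$, condition \refToSound{progress-all} instantiated at $k = i$ then yields a rule $\rho' \sim_i \rho$ with $\ResSet(\rho', i) = \res$, which is precisely what the existence side condition of $\wrongrule{\rho}{i}{\res}$ forbids, so $\conf \notin \Pred$. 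I expect this to be the crux of the argument: it is where $\forall$-progress interacts with the $\sim_i$-based definition of $\Wrong$ introduction, and where conservativity is indispensable in order to transport the derivability of the initial premises from $\WrRuleSet$ back into $\RuleSet$ so that \refToSound{progress-all} can be applied at all.

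For a $\Wrong$ propagation rule $\proprule{\rho}{i}{\Wrong}$, the premise $\eval{\ConfSet(\judg_i)}{\Wrong}$ sits at a strict subderivation. Assuming $\conf \in \Pred$, conservativity again gives $\valid{\RuleSet}{\judg_h}$ for all $h < i$, so \refToProp{preservation} at $k = i$ yields $\ConfSet(\judg_i) \in \Pred$. Applying the induction hypothesis to the subderivation of $\eval{\ConfSet(\judg_i)}{\Wrong}$ gives $\ConfSet(\judg_i) \notin \Pred$, a contradiction, so $\conf \notin \Pred$ here too. Collecting the three cases establishes the contrapositive and hence the theorem.
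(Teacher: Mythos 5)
Your proposal is correct and follows essentially the same route as the paper's proof: the same induction on the $\WrRuleSet$-derivation of $\eval{\conf}{\Wrong}$, the same three-way case split on the added rules, with conservativity (\refToThm{eq-finite_1}) transporting premises back to $\RuleSet$, condition \refToSound{progress-ex} killing the axiom case, \refToSound{progress-all} killing the $\Wrong$-introduction case, and \refToProp{preservation} plus the induction hypothesis killing the propagation case; phrasing it as a contrapositive rather than a direct contradiction is only a cosmetic difference. Your explicit treatment of the subcase $\conf \in \ResSet$ in the axiom case (via the implicit trivial axioms) is a small point the paper leaves tacit, but it is consistent with the construction.
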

\begin{proof}
To prove the statement, we assume $\valid{\WrRuleSet}{\eval{\conf}{\Wrong}}$  and look for a contradiction. 
The proof is by induction on the derivation of $\eval{\conf}{\Wrong}$.\\  
 If the last applied rule is an axiom,  then,  by construction, there is no rule $\rho \in \RuleSet$ such that $\ConfSet(\rho) = \conf$, and this violates  condition \refToSound{progress-ex}, since $\conf \in \Pred$.\\ If the last applied rule is $\wrongrule{\rho}{i}{\res}$, with $\rho \equiv \inlinerule{\judg_1\ldots\judg_n}{\judg_{n+1}}{\conf}$,  then,  by hypothesis, for all $k < i$, $\valid{\WrRuleSet}{\judg_k}$, and $\valid{\WrRuleSet}{\eval{\ConfSet(\judg_i)}{\res}}$, and these judgments can also be derived in $\RuleSet$ by conservativity (\refToThm{eq-finite_1}). 
Furthermore, by construction of this rule, we know that there is no other rule $\rho' \sim_i \rho$ such that $\ResSet(\rho', i) = \res$, and this violates condition \refToSound{progress-all}, since $\conf \in \Pred$.\\ If the last applied rule is $\proprule{\rho}{i}{\Wrong}$, with $\rho \equiv \inlinerule{\judg_1\ldots\judg_n}{\judg_{n+1}}{\conf}$,  then,  by hypothesis, for all $k < i$, $\valid{\WrRuleSet}{\judg_k}$, 
 and these judgments 
can also be derived in $\RuleSet$ by conservativity. 
Then, by \refToProp{preservation} (which requires condition \refToSound{preservation}), since $\conf \in \Pred$, we have
$\ConfSet(\judg_i) \in \Pred$, hence we get the thesis by induction hypothesis. 
\end{proof}
\refToSect{simply-typed} ends with examples not satisfying properties \refToSound{progress-ex} and \refToSound{progress-all}.

\subsection{Conditions ensuring soundness-may}\label{sect:smc}
As discussed in \refToSect{must-may}, in the trace semantics we can only express a weaker form of soundness: at least one computation is not stuck (\emph{soundness-may}). As the reader can expect, to ensure this property weaker sufficient conditions are enough: namely, condition \refToSound{preservation}, and another condition 
 named \emph{progress-may}  and defined below. 

 We   write $\notvalid{\RuleSet}{\eval{\conf}{}}$ if $\conf$ \emph{does not converge} (there is no $\res$ such that $\valid{\RuleSet}{\eval{\conf}{\res}}$). 

\begin{definition}
[\refToSound{progress-may}: progress-may] \label{sound:progress-may}
For each $\conf \in \Pred{\setminus}\ResSet$, there is $\rho \equiv \inlinerule{\judg_1\ldots\judg_n}{\judg_{n+1}}{\conf}$ such that:
\begin{quote}
if there is a (first) $k \in 1..n+1$ 
such that $\notvalid{\RuleSet}{\judg_k}$ and, 
for all $h < k$, 
$\valid{\RuleSet}{\judg_h}$,
then $\notvalid{\RuleSet}{\eval{\ConfSet(\judg_k)}{}}$. 
\end{quote}
\end{definition}

This condition can be informally understood as follows: we have to show that there is an either finite or infinite computation for $\conf$. If we find a rule where all premises are derivable (no $k$), then there is a finite computation.  Otherwise, $\conf$ does not converge. In this case, we should find a rule where the configuration in the first non-derivable premise $k$ does not converge as well. Indeed, 
 by  coinductive reasoning (use of \refToLem{div-consistency} below), we obtain that $\conf$ diverges. 
The following proposition shows that this condition is indeed a weakening of \refToSound{progress-ex} and \refToSound{progress-all}.

\begin{proposition} \label{prop:progress-to-may}
Conditions {\em \refToSound{progress-ex}} and {\em \refToSound{progress-all}} imply condition {\em \refToSound{progress-may}}.
\end{proposition}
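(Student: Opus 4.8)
The plan is to argue directly: fix $\conf \in \Pred \setminus \ResSet$ and construct, by iterated refinement, a single rule with conclusion configuration $\conf$ that witnesses condition \refToSound{progress-may}. The guiding intuition is that \refToSound{progress-all} lets me \emph{repair} a candidate rule whenever the first of its premises that fails to be derivable nonetheless converges to some result: I simply switch to an equivalent rule (under $\sim_k$) whose $k$-th premise records that result, thereby making one more premise derivable.

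Concretely, I would proceed as follows. First, by \refToSound{progress-ex}, since $\conf \in \Pred \setminus \ResSet$, there is a rule $\rho_0$ with $\ConfSet(\rho_0) = \conf$; this is my initial candidate. Given a candidate $\rho_j \equiv \inlinerule{\judg_1 \ldots \judg_n}{\judg_{n+1}}{\conf}$, let $k_j$ be the least index of a non-derivable premise, if any. If there is no such index (all premises derivable), the implication in \refToSound{progress-may} holds vacuously and I am done. If $\ConfSet(\judg_{k_j})$ does not converge, then $\rho_j$ itself witnesses \refToSound{progress-may} and again I am done. Otherwise $\ConfSet(\judg_{k_j})$ converges to some $\res \in \ResSet$; since premises $1, \ldots, k_j - 1$ are derivable and $\conf \in \Pred$, condition \refToSound{progress-all} applies and yields a rule $\rho_{j+1} \sim_{k_j} \rho_j$ with $\ResSet(\rho_{j+1}, k_j) = \res$. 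By definition of $\sim_{k_j}$, the rule $\rho_{j+1}$ has conclusion configuration $\conf$, shares its first $k_j - 1$ premises with $\rho_j$ (so these remain derivable), and has $k_j$-th premise $\eval{\ConfSet(\judg_{k_j})}{\res}$, which is now derivable. Hence the first $k_j$ premises of $\rho_{j+1}$ are derivable, forcing the new least-failure index to satisfy $k_{j+1} > k_j$.

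The termination of this refinement is exactly where the boundedness assumption \textsf{BP} enters: every rule has fewer than $b$ dependencies, hence at most $b$ premises, so each $k_j \le b$. The indices $k_0 < k_1 < \cdots$ strictly increase and are bounded by $b$, so the process halts after finitely many steps, necessarily at a rule that either has all premises derivable or has its first non-derivable premise rooted at a non-converging configuration. In both cases that rule witnesses \refToSound{progress-may}, completing the argument.

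I expect the main obstacle to be the termination bookkeeping rather than any deep idea: one must verify carefully, at each refinement step, that the hypotheses of \refToSound{progress-all} are genuinely met (all earlier premises derivable, and $\conf \in \Pred$), and that passing to a $\sim_{k_j}$-equivalent rule preserves both the conclusion configuration and the already-derivable prefix of premises, so that the progress accumulated in earlier rounds is never lost. Without the strict increase of $k_j$ together with the uniform bound from \textsf{BP}, the repair process could in principle loop indefinitely.
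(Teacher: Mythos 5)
Your proposal is correct and follows essentially the same route as the paper's proof: obtain an initial rule from \refToSound{progress-ex}, repeatedly repair it via \refToSound{progress-all} by passing to a $\sim_k$-equivalent rule whose first non-derivable premise index strictly increases, and terminate using the bound {\sf BP}. The paper merely packages your iteration as a complete induction on the decreasing measure $b_\conf + 1 - nd(\rho)$, which is the same termination argument in different clothing.
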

\begin{proof}
For each $\conf \in \ConfSet$, let us define $b_\conf  \in \N$ as $\max \{ \#\rho  \mid \ConfSet(\rho) = \conf \}$, which is finite by the boundedness condition,  see condition {\sf BP} at page \pageref{bp}.
For each rule $\rho$, with $\ConfSet(\rho) = \conf$,  let us denote by $nd(\rho)$ the index of the first premise of $\rho$ which is not derivable, if any, 
otherwise set  $nd(\rho) = b_\conf$. 
For each $\conf \in \Pred$, we first prove the following fact:\\ 
\centerline{
$(\star)$ for each rule $\rho$, with $\ConfSet(\rho) = \conf$, there exists a rule $\rho'$ such that}
\centerline{$\ConfSet(\rho') = \conf$,  $nd(\rho') \ge nd(\rho)$ and, if $nd(\rho') \le b_\conf$, then, for all $\res \in \ResSet$, $\notvalid{\RuleSet}{\eval{\ConfSet(\rho', nd(\rho'))}{\res}}$. 
}
Note that the requirement in $(\star)$ is the same as that of condition \refToSound{progress-may}. 
The proof is by complete induction on $h(\rho) = b_\conf + 1 - nd(\rho)$. 
If $h(\rho) = 0$, hence $nd(\rho) = b_\conf + 1$, then the thesis follows by taking $\rho' = \rho$. 
Otherwise, we have two cases: 
if there is no $\res \in \ResSet$ such that $\valid{\RuleSet}{\eval{\ConfSet(\rho, nd(\rho))}{\res}}$, then we have the thesis taking $\rho' = \rho$;
otherwise, by condition \refToSound{progress-all}, there is a rule $\rho'' \sim_{nd(\rho)} \rho$ such that $\ResSet(\rho'', nd(\rho)) = \res$, hence $nd(\rho'') > nd(\rho)$. 
Then, we have $h(\rho'') < h(\rho)$, hence we get the thesis by induction hypothesis.\\
Now, by condition \refToSound{progress-ex}, there is a rule $\rho$ with $\ConfSet(\rho) = \conf$, and applying $(\star)$ to $\rho$ we get \mbox{condition \refToSound{progress-may}.}
\end{proof}

\paragraph{Soundness-may in trace semantics}
Recall that $\TrRuleSet$ is the extension of $\RuleSet$ with traces, defined in \refToSect{traces}, where judgements have shape $\evaltr{\conf}{t}$, with $t \in \FIList{\ConfSet}$. 

The following lemma provides a proof principle useful  to coinductively show that a property ensures the existence of an infinite trace, in particular to show  \refToThm{sound-traces}.  It is a slight variation of an analogous principle presented in \citet{AnconaDZ@OOPSLA17}. 

\begin{lemma} \label{lem:div-consistency}
Let $\Spec \subseteq \ConfSet$ be a set. 
If, for all $\conf \in \Spec$, there are $\rho \equiv \inlinerule{\judg_1\ldots\judg_n}{\judg_{n+1}}{\conf}$ and $ k  \in 1..n+1$
such that 
\begin{enumerate}
\item for all  $h < k$, $\valid{\RuleSet}{\judg_h}$, and 
\item  $\ConfSet(\judg_k) \in \Spec$ 
\end{enumerate}
then, for all $\conf \in \Spec$, there is $t \in \InfList{\ConfSet}$ such that $\valid{\TrRuleSet}{\evaltr{\conf}{t}}$. 
\end{lemma}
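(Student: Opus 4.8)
The plan is to build, for each $\conf \in \Spec$, an infinite derivation in $\TrRuleSet$ by coinduction, using a divergence propagation rule at the root of every node. First I extract the data the hypothesis gives: for each $\conf \in \Spec$ fix a rule $\rho_\conf \equiv \inlinerule{\judg_1\ldots\judg_n}{\judg_{n+1}}{\conf}$ and an index $k_\conf \in 1..n+1$ satisfying clauses (1) and (2). For every $h < k_\conf$ the premise $\judg_h \equiv \eval{\ConfSet(\judg_h)}{\ResSet(\judg_h)}$ is derivable in $\RuleSet$, so by conservativity (\refToThm{eq-finite_0}) there is a finite trace $t^\conf_h \in \List{\ConfSet}$ with $\valid{\TrRuleSet}{\evaltr{\ConfSet(\judg_h)}{t^\conf_h \cdot \ResSet(\judg_h)}}$, and by \refToProp{evaltr-fin-tr} this holds by a finite derivation. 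Writing $g(\conf) = \ConfSet(\judg_{k_\conf})$, clause (2) yields a function $g : \Spec \to \Spec$.

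Next I define the candidate traces corecursively. For $\conf \in \Spec$ set $s^\conf = t^\conf_1 \cdot \ResSet(\judg_1)\cdot \ldots \cdot t^\conf_{k_\conf-1}\cdot \ResSet(\judg_{k_\conf-1})$ (the empty trace when $k_\conf = 1$), and let $t_\conf$ be the unique trace with $t_\conf = \conf \cdot s^\conf \cdot t_{g(\conf)}$. This equation is guarded, since each unfolding emits the nonempty finite block $\conf \cdot s^\conf$ before recurring on $g(\conf)$; hence it has a unique solution, obtained by concatenating these blocks along the orbit $\conf, g(\conf), g^2(\conf), \ldots$ of $g$. Because every block contains at least its head configuration, $t_\conf \in \InfList{\ConfSet}$ is infinite. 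Crucially, $\conf \cdot s^\conf \cdot t_{g(\conf)}$ is exactly the conclusion trace of the divergence propagation rule $\divtracerule{\rho_\conf}{k_\conf}{t^\conf_1, \ldots, t^\conf_{k_\conf-1}}{t_{g(\conf)}}$.

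Finally I close the coinduction. Let $X = \{\, \evaltr{\conf}{t_\conf} \mid \conf \in \Spec \,\}$. Since $\TrRuleSet$ is interpreted coinductively, to conclude $X \subseteq \{\, \judg \mid \valid{\TrRuleSet}{\judg}\,\}$ it suffices to check that each judgement in $X$ is the conclusion of a rule of $\TrRuleSet$ whose premises all either lie in $X$ or are already derivable in $\TrRuleSet$ (standard, since adding already-derivable premises keeps $X \cup \{\judg \mid \valid{\TrRuleSet}{\judg}\}$ consistent for the greatest fixpoint). For $\evaltr{\conf}{t_\conf}$ the witnessing rule is $\divtracerule{\rho_\conf}{k_\conf}{t^\conf_1, \ldots, t^\conf_{k_\conf-1}}{t_{g(\conf)}}$: its first $k_\conf-1$ premises are the $\evaltr{\ConfSet(\judg_h)}{t^\conf_h \cdot \ResSet(\judg_h)}$, already derivable in $\TrRuleSet$, and its last premise is $\evaltr{g(\conf)}{t_{g(\conf)}}$, which lies in $X$ because $g(\conf) \in \Spec$. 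Thus for every $\conf \in \Spec$ we obtain the infinite trace $t = t_\conf$ with $\valid{\TrRuleSet}{\evaltr{\conf}{t}}$, which is the thesis.

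The main obstacle I expect is entirely in the fixed-point bookkeeping rather than in any computation: making the corecursive definition of the traces $t_\conf$ rigorous (guardedness, uniqueness, and infinitude) and invoking the coinductive interpretation in the right form, namely allowing the prefix premises to be discharged by ordinary (finite) derivability via conservativity while only the single tail premise re-enters $X$. Once the premise list of $\rho_\conf$ is split at the index $k_\conf$, the convergent prefix and the divergent tail are handled precisely by a divergence propagation rule, so no delicate estimate is needed.
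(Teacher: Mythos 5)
Your proof is correct and follows essentially the same route as the paper's: fix the rule and index for each configuration, use conservativity to obtain finite traces for the derivable prefix premises, define the infinite candidate traces as the unique solution of a guarded system of equations, and close a coinduction in which each judgement is the conclusion of a divergence propagation rule whose tail premise re-enters the coinductive set. The only cosmetic difference is that the paper folds the already-derivable finite-trace judgements directly into its consistent set $\Spec'$, whereas you keep them outside $X$ and appeal to the standard closure under union with derivable judgements — the two formulations are interchangeable.
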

\begin{proof}
First of all, for each $\conf \in \Spec$, we construct a trace $t_\conf \in \FIList{\ConfSet}$, which will be the candidate trace to prove the thesis. 
By hypothesis, there is a rule $\rho_\conf \equiv \inlinerule{\judg_1^\conf\ldots\judg_{n_\conf}^\conf}{\judg_{n_\conf+1}^\conf}{\conf}$ and an index $i_\conf \in 1..n_\conf+1$ such that, for all $k < i_\conf$, we have $\valid{\RuleSet}{\judg_k^\conf}$. 
Therefore, by  \refToThm{eq-finite_1}, there are finite traces $t_1^\conf, \ldots, t_{i_\conf}^\conf \in \List{\ConfSet}$ such that,  for all $k < i_\conf$, we have $\valid{\TrRuleSet}{\evaltr{\ConfSet(\judg_k^\conf)}{t_k^\conf \cdot \ResSet(\judg_k^\conf)}}$,
and, in addition, we know that $\ConfSet(\judg_{i_\conf}^\conf) \in \Spec$. 
Then, for each $\conf \in \Spec$, we can introduce a variable $X_\conf$ and define an equation $X_\conf = \conf \cdot t_1^\conf \cdot \cdots \cdot t_{i_\conf - 1}^\conf \cdot X_{\ConfSet(\judg_{i_\conf}^\conf)}$. 
The set of all such equations is a guarded system of equations, which thus has a unique solution function $\fun{s}{\Spec}{\InfList{\ConfSet}}$, that is, for each $\conf \in \Spec$ we have $s(\conf) = \conf \cdot t_1^\conf \cdot \cdots \cdot t_{i_\conf -1}^\conf \cdot s(\ConfSet(\judg_{i_\conf}^\conf))$. 

We now have to prove that, for all $\conf \in \Spec$, we have $\valid{\TrRuleSet}{\evaltr{\conf}{s(\conf)}}$. 
To this end, 
consider the set $\Spec' = \{\Pair{\conf}{s(\conf)} \mid \conf \in \Spec \} \cup \{\Pair{\conf}{t\cdot \res} \mid \valid{\TrRuleSet}{\evaltr{\conf}{t\cdot \res}} \}$, 
then the proof is by coinduction. 
Let $\Pair{\conf}{t} \in \Spec'$, then we have to find a rule $\Rule{\judg_1\Space\ldots\Space\judg_n}{\evaltr{\conf}{t}} \in \TrRuleSet$ such that, for all $k \in 1..n$, $\Pair{\ConfSet(\judg_k)}{\Trace(\judg_k)} \in \Spec'$. 
We have two cases:
\begin{itemize}
\item if $t = s(\conf)$, then the needed rule is $\divtracerule{\rho_\conf}{i_\conf}{t_1^\conf,\,\ldots,\,t_{i_\conf-1}^\conf}{s(\ConfSet(\judg_{i_\conf}^\conf))}$, and 
\item if $t = t' \cdot \res$ is finite and $\valid{\TrRuleSet}{\evaltr{\conf}{t}}$, then $\evaltr{\conf}{t}$ is the consequence of a trace introduction rule, where all premises are derivable.
\end{itemize}
\end{proof}

We end this section with the proof of soundness-may for the trace semantics.

\begin{theorem} \label{thm:sound-traces}  Let $\RuleSet$ and $\Pred$ satisfy conditions {\em \refToSound{preservation}} and {\em \refToSound{progress-may}}.  
If $\conf \in \Pred$, then there is $t$ such that $\valid{\TrRuleSet}{\evaltr{\conf}{t}}$.
\end{theorem}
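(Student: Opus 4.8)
The plan is to split on whether $\conf$ converges and to dispatch the two cases using the conservativity and coinduction results already established. First, suppose $\conf$ \emph{converges}, i.e., $\valid{\RuleSet}{\eval{\conf}{\res}}$ for some $\res \in \ResSet$. Then by the conservativity of the trace construction (\refToThm{eq-finite_0}) there is a finite trace $t' \in \List{\ConfSet}$ with $\valid{\TrRuleSet}{\evaltr{\conf}{t' \cdot \res}}$, and taking $t = t' \cdot \res$ gives the thesis at once.

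The interesting case is when $\conf$ \emph{does not converge}, i.e., $\notvalid{\RuleSet}{\eval{\conf}{}}$, where I would produce an infinite trace via \refToLem{div-consistency}. The key move is choosing the right witness set: I would set $\Spec = \{ \conf' \in \Pred \mid \notvalid{\RuleSet}{\eval{\conf'}{}} \}$, the non-converging configurations satisfying the predicate, which contains $\conf$. It then suffices to verify the hypothesis of \refToLem{div-consistency} for this $\Spec$, after which the lemma yields, for every element of $\Spec$ and in particular for $\conf$, an infinite trace $t \in \InfList{\ConfSet}$ with $\valid{\TrRuleSet}{\evaltr{\conf}{t}}$.

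To check the hypothesis, fix an arbitrary $\conf' \in \Spec$. Since every result converges trivially through the implicit axiom $\eval{\res}{\res}$, non-convergence forces $\conf' \in \Pred \setminus \ResSet$, so condition \refToSound{progress-may} applies and yields a rule $\rho \equiv \inlinerule{\judg_1\ldots\judg_n}{\judg_{n+1}}{\conf'}$. Because $\conf'$ does not converge, this rule cannot have all its premises derivable (otherwise $\conf'$ would converge through $\rho$), so there is a first index $k$ with $\notvalid{\RuleSet}{\judg_k}$ and $\valid{\RuleSet}{\judg_h}$ for all $h < k$. The conclusion of \refToSound{progress-may} then gives $\notvalid{\RuleSet}{\eval{\ConfSet(\judg_k)}{}}$, so $\ConfSet(\judg_k)$ does not converge; and \refToProp{preservation} (which packages condition \refToSound{preservation}), applied to $\conf' \in \Pred$ with $\valid{\RuleSet}{\judg_h}$ for all $h < k$, gives $\ConfSet(\judg_k) \in \Pred$. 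Hence $\ConfSet(\judg_k) \in \Spec$, and together with $\valid{\RuleSet}{\judg_h}$ for $h < k$ this is precisely conditions (1) and (2) of \refToLem{div-consistency}.

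The main obstacle is not the coinduction itself, which is entirely delegated to \refToLem{div-consistency}, but rather the correct design of $\Spec$ together with the two closure checks that make its hypothesis go through: that the first non-derivable premise configuration again fails to converge (from \refToSound{progress-may}) and that it still satisfies the predicate (from \refToProp{preservation}). Getting both to hold simultaneously, so that $\Spec$ is genuinely closed under the "descend into the first diverging premise" step, is the crux; once it is in place, the two-case structure closes the argument.
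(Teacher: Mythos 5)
Your proposal is correct and follows essentially the same route as the paper's own proof: reduce to the non-converging case via \refToThm{eq-finite_0}, take $\Spec$ to be the set of non-converging configurations in $\Pred$, and verify the hypothesis of \refToLem{div-consistency} using \refToSound{progress-may} for non-convergence of the first non-derivable premise and \refToProp{preservation} for membership in $\Pred$. Your explicit remark that non-convergence forces $\conf' \notin \ResSet$ (so that \refToSound{progress-may} applies) is a small detail the paper leaves implicit.
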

\begin{proof}
 First note that, thanks to  \refToThm{eq-finite_0}, the statement is equivalent to the following:\\ 
\centerline{
 If $\conf \in \Pred$ and $\notvalid{\RuleSet}{\eval{\conf}{}}$, then there is $t \in \InfList{\ConfSet}$ such that $\valid{\TrRuleSet}{\evaltr{\conf}{t}}$. 
}
Then, the proof follows from \refToLem{div-consistency}. 
We define  $\Spec = \{ \conf \mid \conf {\in} \Pred\ \mbox{and}\  \notvalid{\RuleSet}{\eval{\conf}{}} \}$,  and show that, for all $\conf \in \Spec$, there are $\rho \equiv \inlinerule{\judg_1\ldots\judg_n}{\judg_{n+1}}{\conf}$ and $k \in 1..n+1$ such that, for all $h < k$, $\valid{\RuleSet}{\judg_h}$, and $\ConfSet(\judg_k) \in \Spec$. 

Consider $\conf \in \Spec$, then, by \refToSound{progress-may}, there is $\rho \equiv \inlinerule{\judg_1\ldots\judg_n}{\judg_{n+1}}{\conf}$. 
By definition of $\Spec$, we have $\notvalid{\RuleSet}{\eval{\conf}{}}$, 
hence there exists a  (first)  $k \in 1..n+1$ such that $\notvalid{\RuleSet}{\judg_k}$, 
since, otherwise, we would have $\valid{\RuleSet}{\eval{\conf}{\ResSet(\judg_{n+1})}}$. 
Then, since $k$ is the  first  index with such property, for all $h < k$, we have $\valid{\RuleSet}{\judg_h}$, hence, again by condition \refToSound{progress-may}, we have that  $\notvalid{\RuleSet}{\eval{\ConfSet(\judg_k)}{}}$. 
Finally, since for all $h < k$  we have $\valid{\RuleSet}{\judg_h}$, by \refToProp{preservation} we get  $\ConfSet(\judg_k) \in \Pred$,  hence $\ConfSet(\judg_k) \in \Spec$, as needed. 
\end{proof}


\section{Examples}\label{sect:examples}
\refToSect{simply-typed} explains in detail how a typical soundness proof can be rephrased in terms of our technique,  by reasoning directly  on big-step rules. 
\refToSect{fjl} shows a case where  this  is advantageous, since the property to be checked is \emph{not preserved} by intermediate computation steps, whereas it holds for the final result.
\refToSect{iut}  considers  a more sophisticated type system, with intersection and union types. \refToSect{fjos} shows another example where subject reduction is  not preserved,  whereas soundness can be proved  with our technique. This example is intended as a preliminary step towards a more challenging case. Finally, \refToSect{ifj} shows how our approach can also easily deal with memory.

For reader's convenience, we provide the reduction rules also in inline format, where the dummy continuation $\eval\res\res$, if any, is made explicit. 

\subsection{Simply-typed $\lambda$-calculus with recursive types}\label{sect:simply-typed}
As  a first example, we take the $\lambda$-calculus with natural constants, successor,  and choice used in \refToSect{framework} (\refToFigure{example}). 
We consider a standard simply-typed version with recursive types, obtained by interpreting the production in \refToFigure{lambda-typesystem}
coinductively.  Introducing  recursive types makes the calculus non-normalising and  permits  
to write interesting programs such as $\Omega$ (see \refToSect{traces}).

The typing rules are recalled in \refToFigure{lambda-typesystem}. Type environments, written $\Gamma$, are finite maps from variables to types, and $\SubstFun{\Gamma}{\T}{\x}$ denotes the map which returns $\T$ on $\x$  and  coincides with $\Gamma$ elsewhere.  We write $\HasType{\es}{\e}{\T}$ for $\HasType{\emptyset}{\e}{\T}$.

\begin{figure}[h]
\begin{center}
\begin{small}
\begin{grammatica}
\produzione{\T}{\natType\mid\funType{\T_1}{\T_2}}{type}
\end{grammatica}
\HSep
\begin{math}
\begin{array}{c}
\MetaRule
{t-var}
{}
{\HasType{\Gamma}{\x}{\T}}{\Gamma(\x)=\T}
\BigSpace
\MetaRule{t-const}{}{\HasType{\Gamma}{\const}{\natType}}{}
\\[4ex]
\MetaRule
{t-abs}
{\HasType{\SubstFun{\Gamma}{\T'}{\x}}{\e}{\T}}
{\HasType{\Gamma}{\LambdaExp{\x}{\e}}{\funType{\T'}{\T}}}
\BigSpace
\MetaRule
{t-app}
{\HasType{\Gamma}{\e_1}{\funType{\T'}{\T}}\Space\HasType{\Gamma}{\e_2}{\T'}}
{\HasType{\Gamma}{e_1\ e_2}{\T}}
\\[4ex]
\MetaRule
{t-succ}{\HasType{\Gamma}{\e}{\natType}}
{\HasType{\Gamma}{\SuccExp{\e}}{\natType}}{}
\BigSpace
\MetaRule{t-choice}{\HasType{\Gamma}{\e_1}{\T}\Space\HasType{\Gamma}{\e_2}{\T}}{\HasType{\Gamma}{\Choice{\e_1}{\e_2}}{\T}}{}
\end{array}
\end{math}
\end{small}
\end{center}
\caption{$\lambda$-calculus: type system}\label{fig:lambda-typesystem}
\end{figure}

 Let  $\RuleSet_1$  be  the big-step semantics defined in \refToFigure{example}, and let 
 $\Pred1_T(\e)$ hold if $\HasType{\es}{\e}{\T}$, for $\T$ defined in \refToFigure{lambda-typesystem}. 
  To prove the three conditions \refToSound{preservation}, \refToSound{progress-ex} and \refToSound{progress-all} of \refToSect{sc}, we need lemmas of inversion, substitution and canonical  forms, as in the standard technique.
\begin{lemma}[Inversion]\label{lem:ilr}
\begin{enumerate}
\item \label{lem:ilr:1}If $\HasType{\Gamma} \x\tA$, then $\Gamma(\x)= \tA$.
\item \label{lem:ilr:2} If $\HasType{\Gamma}{\const}\T$, then $\T=\natType$.
\item \label{lem:ilr:3} If $\HasType\Gamma{\LambdaExp{\x}{\e}}\tA$, then $\tA=\funType{\tA_1}{\tA_2}$ and  $\HasType{\SubstFun\Gamma{\tA_1}\x} \e {\tA_2}$.
\item \label{lem:ilr:4} If $\HasType\Gamma{\e_1\appop\e_2}\tA$, then $\HasType\Gamma{\e_1} {\funType{\tA'}{\tA}}$,  and $\HasType\Gamma{\e_2} {\tA'}$.
\item \label{lem:ilr:5} If $\HasType{\Gamma}{\SuccExp{\e}}{\T}$, then $\T=\natType$ and $\HasType{\Gamma}{\e}{\natType}$.
\item \label{lem:ilr:6} If $\HasType\Gamma {\e_1\oplus\e_2}\tA$, then $\HasType\Gamma{\e_i}\tA$ with $i\in1,2$.
\end{enumerate}
\end{lemma}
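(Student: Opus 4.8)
The plan is to exploit the fact that the type system of \refToFigure{lambda-typesystem} is \emph{syntax-directed}: for every syntactic form of expression there is exactly one typing rule whose conclusion matches it. Concretely, a judgment $\HasType{\Gamma}{\x}{\tA}$ can only be the conclusion of \rn{t-var}, a judgment $\HasType{\Gamma}{\const}{\T}$ only of \rn{t-const}, and likewise the subjects $\LambdaExp{\x}{\e}$, $\e_1\appop\e_2$, $\SuccExp{\e}$, and $\Choice{\e_1}{\e_2}$ uniquely determine the applicable rules \rn{t-abs}, \rn{t-app}, \rn{t-succ}, and \rn{t-choice}, respectively. Since no rule such as subsumption is present, there is never more than one candidate rule, so each clause reduces to reading off the premises of a single forced rule.

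Given this, each of the six clauses follows by a one-step case analysis on the last rule applied in the derivation of the hypothesis. First I would note that the typing rules are interpreted inductively (only the grammar of types is coinductive), so any derivable judgment has a last applied rule, and that rule is forced by the shape of the subject expression. For instance, in clause \ref{lem:ilr:3}, a derivation of $\HasType{\Gamma}{\LambdaExp{\x}{\e}}{\tA}$ must end with \rn{t-abs}, whose conclusion has the form $\HasType{\Gamma}{\LambdaExp{\x}{\e}}{\funType{\tA_1}{\tA_2}}$ and whose unique premise is $\HasType{\SubstFun{\Gamma}{\tA_1}{\x}}{\e}{\tA_2}$; matching the conclusion forces $\tA=\funType{\tA_1}{\tA_2}$ and then the premise gives the claim. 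Clause \ref{lem:ilr:1} uses the side condition $\Gamma(\x)=\tA$ of \rn{t-var}, clause \ref{lem:ilr:2} the fact that \rn{t-const} fixes the type to $\natType$, and clauses \ref{lem:ilr:4}, \ref{lem:ilr:5}, \ref{lem:ilr:6} are entirely analogous, reading the premise(s) and the shape of the type directly off the matching rule.

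I expect no genuine obstacle: the lemma is a direct consequence of syntax-directedness, and no induction on derivations is required beyond observing that a last rule exists. The only point worth a moment's care is the presence of recursive, coinductively defined types: one might worry whether a type such as $\funType{\tA_1}{\tA_2}$ behaves well when it is an infinite tree. However, inversion concerns only the \emph{last} step of the (inductively defined) typing derivation, so the coinductive reading of types is irrelevant here—the argument that $\LambdaExp{\x}{\e}$ forces a function type, and that \rn{t-abs} is the unique applicable rule, goes through unchanged whether or not $\tA_1$ and $\tA_2$ are finitely representable.
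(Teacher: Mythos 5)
Your proof is correct and matches the intended argument: the paper states this lemma without proof, treating it as standard, and the standard proof is exactly your one-step case analysis on the (uniquely forced) last typing rule, which works because the system of \refToFigure{lambda-typesystem} has no subsumption and is syntax-directed. Your side remark is also accurate — the coinductive reading affects only the grammar of types (a tree rooted at $\to$ is still never confused with $\natType$), not the inductively defined typing derivations, so inversion is unaffected.
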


\begin{lemma}[Substitution]\label{lem:s}
If $\HasType{\SubstFun\Gamma{\tA'}\x} \e {\tA}$ and $\HasType\Gamma{\e'} {\tA'}$, then \mbox{$\HasType\Gamma{\subst\e{\e'}\x} {\tA}$}.
\end{lemma}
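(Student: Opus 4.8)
The plan is to prove the lemma by structural induction on the expression $\e$. This is legitimate even in the presence of recursive types: although types are interpreted coinductively and may be infinite, every expression is finite and the typing rules of \refToFigure{lambda-typesystem} are syntax-directed on $\e$, so the derivation of $\HasType{\SubstFun\Gamma{\tA'}\x}{\e}{\tA}$ is a finite tree whose shape follows the structure of $\e$. In each case I would first apply the relevant clause of the Inversion Lemma (\refToLem{ilr}) to the hypothesis, and then rebuild a typing derivation for $\subst{\e}{\e'}{\x}$.

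The variable and constant cases are the base cases. If $\e = \x$, then $\subst{\x}{\e'}{\x} = \e'$ and, by the inversion clause for variables, $\tA = (\SubstFun\Gamma{\tA'}\x)(\x) = \tA'$, so the goal $\HasType{\Gamma}{\e'}{\tA'}$ is exactly the second hypothesis. If $\e = y$ with $y \ne \x$, then $\subst{y}{\e'}{\x} = y$ and $(\SubstFun\Gamma{\tA'}\x)(y) = \Gamma(y) = \tA$, so \rn{t-var} applies directly; the case $\e = \const$ is analogous, using \rn{t-const}.

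The composite cases $\e = \AppExp{\e_1}{\e_2}$, $\e = \SuccExp{\e_0}$ and $\e = \Choice{\e_1}{\e_2}$ are routine. In each I apply the corresponding inversion clause to obtain typings for the immediate subterms under the same environment $\SubstFun\Gamma{\tA'}\x$, then invoke the induction hypothesis on each subterm — since substitution distributes over these constructors, e.g. $\subst{(\AppExp{\e_1}{\e_2})}{\e'}{\x} = \AppExp{\subst{\e_1}{\e'}{\x}}{\subst{\e_2}{\e'}{\x}}$ — and reassemble the result with \rn{t-app}, \rn{t-succ}, or \rn{t-choice} respectively.

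I expect the abstraction case $\e = \LambdaExp{y}{\e_0}$ to be the main obstacle. Working up to $\alpha$-conversion, I would assume $y \ne \x$ and $y$ not free in $\e'$, so that $\subst{(\LambdaExp{y}{\e_0})}{\e'}{\x} = \LambdaExp{y}{\subst{\e_0}{\e'}{\x}}$. Inversion for abstractions yields $\tA = \funType{\tA_1}{\tA_2}$ together with $\HasType{\SubstFun{(\SubstFun\Gamma{\tA'}\x)}{\tA_1}{y}}{\e_0}{\tA_2}$. Two manipulations are then needed: (i) commuting the two environment updates, justified by $y \ne \x$, to read the environment as $\SubstFun{(\SubstFun\Gamma{\tA_1}{y})}{\tA'}{\x}$; and (ii) a weakening step lifting the second hypothesis to $\HasType{\SubstFun\Gamma{\tA_1}{y}}{\e'}{\tA'}$, which holds because $y$ does not occur free in $\e'$. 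With these in place, the induction hypothesis applies to $\e_0$ under the environment $\SubstFun\Gamma{\tA_1}{y}$, giving $\HasType{\SubstFun\Gamma{\tA_1}{y}}{\subst{\e_0}{\e'}{\x}}{\tA_2}$, and \rn{t-abs} closes the case. The one auxiliary fact not yet recorded is this weakening property (typing is preserved when the environment is extended with a variable fresh for the term); it is itself a straightforward induction on the typing derivation, and I would state it as a preliminary remark before the main induction.
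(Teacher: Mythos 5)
Your proof is correct. The paper states this substitution lemma without proof, treating it as standard background for the soundness argument of \refToSect{simply-typed}, so there is nothing to compare against; your structural induction on $\e$ — with the explicit weakening lemma and the environment-commutation step in the abstraction case, and the observation that typing derivations remain finite even though recursive types are interpreted coinductively — is exactly the canonical argument one would supply.
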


\begin{lemma}[Canonical Forms]\label{lem:cf}\
\begin{enumerate}
\item \label{lem:cf:1}
If $\HasType{\es} \val {\funType{\tA'}{\tA}}$, then $\val=\LambdaExp{\x}{\e}$.
\item \label{lem:cf:2}
If $\HasType{\es} \val \natType$, then $\val=\const$.
\end{enumerate}
\end{lemma}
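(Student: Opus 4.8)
The plan is to argue by a straightforward case analysis on the shape of the value $\val$, using the Inversion \refToLem{ilr} to discard the impossible case. Recall from \refToFigure{example} that a value is either a constant $\const$ or an abstraction $\LambdaExp{\x}{\e}$, so in each of the two statements there are only these two possibilities to inspect. Note also that the Inversion lemma is stated for an arbitrary environment $\Gamma$, so here we simply instantiate it with $\Gamma = \es$.

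For the first statement, I would assume $\HasType{\es}{\val}{\funType{\tA'}{\tA}}$ and split on $\val$. If $\val = \const$, then by Inversion (\refToLem{ilr}, the constant case) its type would be $\natType$, which differs from the function type $\funType{\tA'}{\tA}$; this is a contradiction, so that case cannot arise and necessarily $\val = \LambdaExp{\x}{\e}$, as required. Symmetrically, for the second statement I would assume $\HasType{\es}{\val}{\natType}$: if $\val = \LambdaExp{\x}{\e}$, then by Inversion (\refToLem{ilr}, the abstraction case) its type would have the form $\funType{\tA_1}{\tA_2}$, again distinct from $\natType$, a contradiction; hence $\val = \const$.

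The only point requiring a moment's care is that types are interpreted \emph{coinductively} (recursive types), so a type may be an infinite tree. Nevertheless, every type still has a determinate outermost constructor, being either $\natType$ or a function arrow $\funType{\cdot}{\cdot}$; hence these two shapes remain disjoint, and the two inversion conclusions genuinely contradict one another. Beyond this observation the argument is entirely routine, and I expect no real obstacle: the value grammar has exactly two clauses, and Inversion pins down the top-level type constructor admissible for each, so the mismatch between $\natType$ and $\funType{\cdot}{\cdot}$ immediately forces the claimed canonical form.
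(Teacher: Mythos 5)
Your proof is correct and is exactly the routine argument the paper relies on: the paper states this Canonical Forms lemma without proof, treating it as standard alongside Inversion and Substitution, and the intended justification is precisely your case split on the two value forms combined with Inversion (\refToLem{ilr}) to exclude the mismatched type constructor. Your remark that coinductively interpreted (recursive) types still have a determinate outermost constructor, so $\natType$ and $\funType{\cdot}{\cdot}$ remain disjoint, is a sensible and correct precaution.
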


\begin{theorem}[Soundness]\label{thm:sd}
The big-step semantics $\RuleSet_1$ and the indexed predicate $\Pred1$ satisfy the conditions {\em \refToSound{preservation}}, {\em \refToSound{progress-ex}} and {\em \refToSound{progress-all}} of \refToSect{sc}.
 \end{theorem}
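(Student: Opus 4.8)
The plan is to verify the three conditions \refToSound{preservation}, \refToSound{progress-ex} and \refToSound{progress-all} separately, going through the big-step rules of $\RuleSet_1$ one at a time (the meta-rules \rn{app}, \rn{succ}, \rn{choice}, plus the implicit axiom \rn{val} for values). The inversion, substitution, and canonical forms lemmas (Lemmas~\ref{lem:ilr}, \ref{lem:s}, \ref{lem:cf}) are exactly the tools needed: inversion to recover the types of subexpressions from the type of the configuration in the conclusion, substitution for the application rule's continuation, and canonical forms to guarantee that results have the right shape for \refToSound{progress-all}. Throughout, the index set is the set of types $\T$, and $\Pred1_\T(\e)$ means $\HasType{\es}{\e}{\T}$.

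**Local Preservation (\refToSound{preservation}).** For each rule I would assume the conclusion configuration has type $\T$ and exhibit the indices $\idx_1,\dots,\idx_{n+1}$ with $\idx_{n+1}=\T$. For \rn{app} on $\AppExp{\e_1}{\e_2}$ of type $\T$, inversion (\refToLem{ilr}.\ref{lem:ilr:4}) gives $\HasType{\es}{\e_1}{\funType{\T'}{\T}}$ and $\HasType{\es}{\e_2}{\T'}$, so I set $\idx_1=\funType{\T'}{\T}$ and $\idx_2=\T'$; for the continuation $\subst{\e}{\val_2}{\x}$ I must provide $\idx_3=\T$, and here the guard is usable: assuming the previous results are well-typed, i.e.\ $\LambdaExp{\x}{\e}$ has type $\funType{\T'}{\T}$ and $\val_2$ has type $\T'$, canonical forms (\refToLem{cf}.\ref{lem:cf:1}) and inversion (\refToLem{ilr}.\ref{lem:ilr:3}) give $\HasType{\SubstFun{\es}{\T'}{\x}}{\e}{\T}$, whence substitution (\refToLem{s}) yields $\HasType{\es}{\subst{\e}{\val_2}{\x}}{\T}$. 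For \rn{succ} on $\SuccExp\e$ of type $\T$, inversion (\refToLem{ilr}.\ref{lem:ilr:5}) forces $\T=\natType$ and $\HasType{\es}{\e}{\natType}$, so $\idx_1=\natType$ and the dummy continuation gets $\idx_2=\natType$. For \rn{choice}, inversion (\refToLem{ilr}.\ref{lem:ilr:6}) gives the continuation $\e_i$ the same type $\T$, so $\idx_1=\T$ suffices (there are no dependencies).

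**Progress conditions (\refToSound{progress-ex}, \refToSound{progress-all}).** Condition \refToSound{progress-ex} is immediate by inspection: every non-value well-typed expression ($\AppExp{\e_1}{\e_2}$, $\SuccExp\e$, $\Choice{\e_1}{\e_2}$) is the conclusion of some rule. For \refToSound{progress-all}, I would again go rule by rule, assuming the conclusion is well-typed and that some premise configuration evaluates to a result $\res$, and check that a matching rule $\rho'\sim_k\rho$ with that result exists. The crucial cases are those where the result's \emph{shape} is constrained: for \rn{app} at $k=1$, if $\e_1$ has type $\funType{\T'}{\T}$ (by inversion) and evaluates to some value $\val_1$, then by preservation (\refToLem{sr}) $\val_1$ has type $\funType{\T'}{\T}$, so canonical forms (\refToLem{cf}.\ref{lem:cf:1}) force $\val_1=\LambdaExp{\x}{\e}$, matching the required first-premise pattern of \rn{app}; for \rn{succ} at $k=1$, preservation plus canonical forms (\refToLem{cf}.\ref{lem:cf:2}) force the result to be a constant $\const$, which is the shape \rn{succ} demands. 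The only subtlety is \rn{choice}, which has two instances ($i=1,2$) and is the single case where $\rho'$ may differ from $\rho$: here $\rho\sim_1\rho'$ holds for both instances since they share the conclusion and have empty dependency prefixes, so whatever $\res$ the single continuation $\e_i$ produces is admitted by the very same meta-rule.

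**Main obstacle.** None of the steps involves deep difficulty, but the care-demanding point is the \refToSound{progress-all} argument, because it is where \emph{preservation must be invoked to license the use of canonical forms}: canonical forms applies only to well-typed values, and I only know the premise \emph{configuration} is well-typed, so I must first appeal to \refToLem{sr} to transfer the type to the \emph{result} before concluding its shape. Keeping the quantifier structure of \refToSound{progress-all} straight — that the obligation at premise $k$ is discharged under the assumption that all earlier premises are derivable, and that it is the \emph{result} $\res$ (not an arbitrary value) whose shape must match — is the part most easily gotten wrong, and is where I would spend the most attention.
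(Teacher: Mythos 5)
Your proof is correct and follows essentially the same route as the paper's: inversion to type the premises for \refToSound{preservation}, the substitution lemma for the continuation of \rn{app}, and preservation (\refToLem{sr}) followed by canonical forms to pin down the shapes of results for \refToSound{progress-all}. Two minor slips are worth fixing: for \refToSound{progress-ex} you should state explicitly that a variable cannot be typed in the empty environment (the only non-trivial point there), and for \rn{choice} the $i=1$ and $i=2$ instances are \emph{not} $\sim_1$-related in general, since their first-premise configurations $\e_1$ and $\e_2$ differ --- the correct witness $\rho'$ is simply an instance of the same meta-rule with the same $i$ and the metavariable $v$ re-instantiated to the obtained result, as the paper observes.
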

 
Since the aim of this first example is to illustrate the proof technique, we provide a proof where we explain the reasoning  in detail.\\[1ex] 
{\em Proof of {\em \refToSound{preservation}}.}
We should prove this condition for each \mbox{(instantiation of meta-)rule.}\\
\rn{app}: Assume that $\HasType{\es}{\AppExp{\e_1}{\e_2}}{\T}$ holds.  We have to find types for the premises, notably $\T$ for the last one. We  proceed as follows:
\begin{enumerate}
\item First premise: by \refToLemItem{ilr}{4}, $\HasType{\es}{\e_1} {\funType{\tA'}{\tA}}$.
\item Second premise:  again by \refToLemItem{ilr}{4}, $\HasType{\es}{\e_2}{\T'}$  (without needing the assumption $\HasType{\es}{\LambdaExp{\x}{\e}}{\funType{\tA'}{\tA}}$).   
\item Third premise:  $\HasType{\es}{\subst{\e}{\val_2}{\x}}{\T}$ should hold (assuming $\HasType{\es}{\LambdaExp{\x}{\e}}{\funType{\tA'}{\tA}}$, $\HasType{\es}{\val_2}{\T'}$).  Since $\HasType{\es}{\LambdaExp{\x}{\e}}{\funType{\tA'}{\tA}}$, by \refToLemItem{ilr}{3} we have $\HasType{\x{:}\T'} \e {\tA}$, so by \refToLem{s} and $\HasType{\es}{\val_2}{\T'}$ we have $\HasType{\es}{\subst\e{\val_2}\x} {\tA}$. 
\end{enumerate}
\rn{succ}: This rule has an implicit continuation $\eval{n+1}{n+1}$.  Assume that $\HasType{\es}{\SuccExp{\e}}{\T}$ holds. By \refToLemItem{ilr}{5}, $\T=\natType$, and $\HasType{\es}{\e}{\natType}$, hence we find $\natType$ as type for the first premise. Moreover, $\HasType{\es}{n+1}{\natType}$ holds by rule \rn{t-const}.\\ 
\rn{choice}: Assume that $\HasType{\es}{\Choice{\e_1}{\e_2}}{\T}$ holds. By \refToLemItem{ilr}{6}, we have $\HasType{\es}{\e_i}{\T}$, with $i\in 1,2$. Hence we  find  $\T$ as type for the  premise.\\[1ex]  
{\em Proof of {\em \refToSound{progress-ex}}.}
We should prove that, for each non-result configuration (here, expression $\e$ which is not a value) such that $\HasType{\es}{\e}{\T}$ holds for some $\T$, there is a rule with this configuration in the consequence.  The expression $\e$ cannot be a variable, since a variable cannot be typed in the empty environment. 
Application, successor and choice appear as consequence in the \mbox{reduction rules.}\\[1ex] 
{\em Proof of \em \refToSound{progress-all}.}
 We should prove this condition for each \mbox{(instantiation of meta-)rule.}\\
\rn{app}: Assuming $\HasType{\es}{\AppExp{\e_1}{\e_2}}{\T}$, again  by \refToLemItem{ilr}{4}  we get  $\HasType\Gamma{\e_1} {\funType{\tA'}{\tA}}$. 
\begin{enumerate}
\item First premise: if $\eval{\e1}{\val}$ is derivable, then there should be a rule with $\AppExp{\e_1}{\e_2}$ in the consequence and $\eval{\e1}{\val}$ as first premise. Since we proved \refToSound{preservation}, by preservation (\refToLem{sr}) 
$\HasType{\es}{\val}{\funType{\tA'}{\tA}}$ holds. Then, by \refToLemItem{cf}{1}, $\val$ has shape $\LambdaExp{\x}{\e}$, hence the required rule exists. As noted at page \pageref{progress-all}, in practice checking \refToSound{progress-all} for a (meta-)rule amounts to show that (sub)configurations in the premises  only evaluate to results which satisfy the side conditions, in this case to have the required shape (to be a $\lambda$-abstraction).
\item Second premise: if $\eval{\e_1}{\LambdaExp{\x}{\e}}$, and $\eval{\e2}{\val_2}$, then there should be a rule  with $\AppExp{\e_1}{\e_2}$ in the consequence and $\eval{\e_1}{\LambdaExp{\x}{\e}}$, $\eval{\e2}{\val}$ as first two premises. This is trivial since the meta-variable $\val_2$ can be freely instantiated in the meta-rule. 
\end{enumerate}
\rn{succ}: Assuming $\HasType{\es}{\SuccExp{\e}}{\T}$, again by \refToLemItem{ilr}{5} we get $\HasType{\es}{\e}{\natType}$. If $\eval{\e}{\val}$ is derivable, there should be a rule with $\SuccExp{\e}$ in the consequence and $\eval{\e}{\val}$ as first premise. Indeed, by preservation (\refToLem{sr}) and \refToLemItem{cf}{2}, $\val$ has shape $n$. For the second premise, if $\eval{n+1}{\val}$ is derivable, then $\val$ is necessarily $n+1$.\\
\rn{choice}: Trivial since the meta-variable $\val$ can be freely instantiated.

\medskip

An interesting remark is that, differently from the standard approach, there is  \emph{no induction}  in the proof: everything is \emph{by cases}. This is a consequence of the fact that, as discussed in \refToSect{sc}, the three conditions are \emph{local}, that is, they are conditions on single rules. Induction is ``hidden'' in the proof that  those  three conditions  are sufficient to  ensure soundness.

 If we drop in \refToFigure{example} rule \rn{succ}, then condition \refToSound{progress-ex} fails, since  there is no longer a rule for the well-typed non-result configuration $\SuccExp{\const}$. If we add the \rn{fool} rule $\HasType {}{\AppExp{0}0}{\natType}$, then condition \refToSound{progress-all} fails for rule \rn{app}, since $\eval{0}{0}$ is derivable, but there is no rule with $\AppExp{0}{0}$ in the conclusion and $\eval{0}{0}$ as first premise.


\subsection{$\MiniFJLambda$}\label{sect:fjl}
In this example, the language is a subset of $\FJLambda$ \cite{BettiniBDGV18}, a calculus extending Featherweight Java ($\FJ$) with {$\lambda$-abstraction}s and intersection types, introduced in Java 8.  To keep the example small, we do not consider intersections and focus  on one key typing feature: {$\lambda$-abstraction}s can only be typed when occurring in a context requiring a given type (called  the  \emph{target type}). In a small-step semantics, this poses a problem: reduction can move {$\lambda$-abstraction}s into arbitrary contexts, leading to intermediate terms which would be ill-typed. To  maintain  subject reduction, in \cite{BettiniBDGV18} $\lambda$-abstractions are decorated with their initial target type. In a big-step semantics, there is no need of intermediate terms and annotations.

The syntax is given in the first part of \refToFigure{FJ-lambda-syntax}. 
We assume sets of \emph{variables} $\x$, \emph{class names} $\CC$, \emph{interface names} $\II$, {$\JJ$,} \emph{field names} $\f$, and \emph{method names} $\m$. 
Interfaces which have \emph{exactly} one method (dubbed {\em functional interfaces}) can be used as target 
types.
Expressions are those of $\FJ$, plus {$\lambda$-abstraction}s, and types are class and interface names. In $\LambdaExp{\xs}{\e}$ we assume that $\xs$ is not empty and  $\e$ {is not a $\lambda$-}abstraction.
For simplicity, we only consider \emph{upcasts}, which have no runtime effect, but are important to allow the programmer to use {$\lambda$-abstraction}s,
as exemplified in discussing typing rules.

To be concise, the class table is abstractly modelled  as follows: 
\begin{itemize}
\item $\fields{\CC}$ gives the sequence of  field declarations $\Field{\T_1}{\f_1}..\Field{\T_n}{\f_n}$ for class $\CC$  
\item $\mtype{\T}{\m}$ gives, for each method $\m$ in class or interface $\T$, the pair $\funtype{\T_1\ldots\T_n}{\T'}$ 
consisting of the parameter types and return type
\item $\mbody{\CC}{\m}$ gives, for each method $\m$ in class $\CC$, the pair %
$\Pair{\x_1\ldots\x_n}{\e}$ 
consisting of the parameters and body
\item $\leqfj$ is the reflexive and transitive closure of the union of the $\aux{extends}$ and $\aux{implements}$ relations
\item $\umtype{\II}$ gives, for each \emph{functional} interface $\II$, $\mtype{\II}{\m}$,  where $\m$ is the only method of $\II$. 
\end{itemize}

The big-step semantics is given in the last part of \refToFigure{FJ-lambda-big-step}. $\MiniFJLambda$ shows an example of instantiation of the framework where configurations include an auxiliary structure, rather than being just language terms.  In this case, the structure is an \emph{environment} $\env$ {(a finite map from variables to values)} modelling the current stack frame. Results are values, which are either \emph{objects}, of shape $\object{\CC}{\vals}$, or {$\lambda$-abstraction}s.

\begin{figure}[h]
\begin{small}
\begin{grammatica}
\produzione{\e}{\x\mid\FieldAccess{\e}{\f}\mid\ConstrCall{\CC}{\e_1,\ldots,\e_n}\mid
\MethCall{\e}{\m}{\e_1,\ldots,\e_n}\mid\LambdaExp{\xs}{\e}\mid\Cast{\T}{\e}}{\qquad expression}\\
\produzione{\xs}{\x_1\ldots\x_n}{\qquad variable list}\\
\produzione{\T}{\CC\mid\II}{\qquad type}
\end{grammatica}
\HSep
\begin{grammatica}
\produzione{\conf}{\Conf{\env}{\e}\mid\val}{\qquad configuration}\\
\produzione{\val}{\object{\CC}{\vals}\mid\LambdaExp{\xs}{\e}
}{\qquad result (value)}\\
\produzione{\vals}{\val_1,\ldots,\val_n}{\qquad value list}
\end{grammatica}

\HSep
\begin{math}
\begin{array}{l}
\MetaRule{var}{}{\eval{\Conf{\env}{\x}}{\val}}{\env(\x)=\val}
\\[4ex]
\MetaRule{field-access}{\eval{\Conf{\env}{\e}}{\object{\CC}{\val_1,\ldots,\val_n}}}{
\eval{\Conf{\env}{\FieldAccess{\e}{\f_i}}}{\val_i}}
{\begin{array}{l}
\fields{\CC}=\Field{\T_1}{\f_1}\ldots\Field{\T_n}{\f_n}\\
i\in 1..n\\
\end{array}
}\\[4ex]
\MetaRule{new}{
\eval{\Conf{\env}{\e_i}}{\val_i}\Space \forall i\in 1..n}{\eval{\Conf{\env}{\ConstrCall{\CC}{\e_1,\ldots,\e_n}}}{\object{\CC}{\val_1,\ldots,\val_n}}}
{
}\\[4ex]
\MetaRule{invk}{
\begin{array}{l}
\eval{\Conf{\env}{\e_0}}{\object{\CC}{{\vals}}}\\
\eval{\Conf{\env}{\e_i}}{\val_i}\Space \forall i\in 1..n\\
\eval{\Conf{\x_1{:}\val_1,\ldots,\x_n{:}\val_n,\this{:}\object{\CC}{\vals}}{\e}}{\val}
\end{array}}{
\eval{\Conf{\env}{\MethCall{\e_0}{\m}{\e_1,\ldots,\e_n}}}
{\val}}
{\begin{array}{l}
\mbody{\CC}{\m}=\Pair{\x_1\ldots\x_n}{\e}\\\
\end{array}
}\\[4ex]
\MetaRule{$\lambda$-invk}{
\begin{array}{l}
\eval{\Conf{\env}{\e_0}}{\LambdaExp{\xs}{\e}}\\
\eval{\Conf{\env}{\e_i}}{\val_i}\Space \forall i\in 1..n\\
\eval{\Conf{
\x_1{:}\val_1,\ldots,\x_n{:}\val_n}{\e}}{\val}
\end{array}
}{
\eval{\Conf{\env}{\MethCall{\e_0}{\m}{\e_1,\ldots,\e_n}}}{\val}}
{\begin{array}{l}
\end{array}
}
\BigSpace
\MetaRule{upcast}{
\eval{\Conf{\env}{\e}}{\val}
}{\eval{\Conf{\env}{\Cast{\T}{\e}}}{\val}}
{}
\end{array}
\end{math}

\HSep
\begin{math}
\begin{array}{l}
 \metainlinerule{var}
{\epsilon}
{\eval{\val}\val}
{\Conf{\env}{\x}}\BigSpace\env(\x)=\val\\
\metainlinerule{field-access}
{\eval{\Conf{\env}{\e}}
{\object{\CC}{\val_1,\ldots,\val_n}}}
{\eval{\val_i}{\val_i}}
{\Conf{\env}{\FieldAccess{\e}{\f_i}}}\qquad
\hfill \fields{\CC}=\Field{\T_1}{\f_1}\ldots\Field{\T_n}{\f_n}
\Space
i\in 1..n\\
\metainlinerule{new}
{\eval{\Conf{\env}{\e s}}{\vals}}
{\eval{\object{\CC}{\val_1,\ldots,\val_n}}{\object{\CC}{\val_1,\ldots,\val_n}}}
{\ConstrCall{\CC}{\e_1,\ldots,\e_n}}\\
\metainlinerule{invk}{\eval{\Conf{\env}{\e_0}}{\object{\CC}{\vals'}}, \eval{\Conf{\env}{\e s}}{\vals}}{\eval{\Conf{\env'}{\e}}{\val}}{\Conf{\env}{\MethCall{\e_0}{\m}{\e_1,\ldots,\e_n}}}\\
\hfill \env'=\x_1{:}\val_1,\ldots,\x_n{:}\val_n,\this{:}\object{\CC}{\vals}\qquad\mbody{\CC}{\m}=\Pair{\x_1\ldots\x_n}{\e}\\
\metainlinerule{$\lambda$-invk}
{\eval{\Conf{\env}{\e_0}}{\LambdaExp{\xs}{\e}}, \eval{\Conf{\env}{\e s}}{\vals}}
{\eval{\Conf{\env'}{\e}}{\val}}
{\Conf{\env}{\MethCall{\e_0}{\m}{\e_1,\ldots,\e_n}}}\\
\hfill\env'=\x_1{:}\val_1,\ldots,\x_n{:}\val_n\\
\metainlinerule{upcast}{\epsilon}{ \eval{\Conf{\env}{\e}}{\val} }{\Cast{\T}{\e}}\\
\text{where }\eval{\Conf{\env}{ \e s}}{\vals}\text{ is short for }\eval{\Conf{\env}{\e_1}}{\val_1}, \ldots, \eval{\Conf{\env}{\e_n}}{\val_n}
\end{array}
\end{math}
\end{small}
\caption{$\MiniFJLambda$: syntax  and  big-step semantics}\label{fig:FJ-lambda-big-step}\label{fig:FJ-lambda-syntax}
\end{figure}
Rules for $\FJ$ constructs are straightforward. Note that, since we only consider upcasts, casts have no runtime effect. Indeed, they are guaranteed to succeed on well-typed expressions.
Rule \rn{$\lambda$-invk} shows that, when the receiver of a method is a {$\lambda$-abstraction}, the method name is not significant at runtime, and the effect is that the body of the function is evaluated as in  the  usual application.

The type system is given in \refToFigure{FJ-lambda-typesystem}. 
Method bodies are expected to be well-typed with respect to method types. Formally,
 $\mbody{\CC}{\m}$ and $\mtype{\CC}{\m}$ are either both defined or both undefined: in the first case 
$\mbody{\CC}{\m}=\Pair{\x_1\ldots\x_n}{\e}$,  
$\mtype{\CC}{\m}=\funtype{\T_1\ldots\T_n}{\T}$, and $\HasType{\x_1{:}\T_1,\ldots,\x_n{:}\T_n,\this{:}\CC}{\e}{\T}$. 
Moreover, we assume other standard $\FJ$ constraints on the class table, such as no field hiding, no method overloading, the same parameter  and return  types 
in overriding.
\begin{figure}
\begin{small}
\begin{math}
\begin{array}{l}
\MetaRule{t-conf}{\HasType{\es}{\val_i}{\T_i}\Space \forall i\in 1..n\Space\Space \HasType{\x_1{:}\T'_1,\ldots,\x_n{:}\T'_n}{\e}{\T}}{\HasType{\es}{\Conf{\x_1{:}\val_1,\ldots,\x_n{:}\val_n}{\e}}{\T}}
{\T_i\leqfj\T'_i\Space \forall i\in 1..n }
\\[4ex]
\MetaRule{t-var}{}{\HasType{\Gamma}{\x}{\T}}{\Gamma(\x)=\T}
\BigSpace
\MetaRule{t-field-access}{\HasType{\Gamma}{\e}{\CC}}{\HasType{\Gamma}{\FieldAccess{\e}{\f}}{\T_i}}
{\begin{array}{l}
\fields{\CC}=\Field{\T_1}{\f_1}\ldots\Field{\T_n}{\f_n}\\
i\in 1..n
\end{array}
}\\[4ex]
\MetaRule{t-new}{\HasType{\Gamma}{\e_i}{\T_i}\Space \forall i\in 1..n}{\HasType{\Gamma}{\ConstrCall{\CC}{\e_1,\ldots,\e_n}}{\CC}}
{\begin{array}{l}
\fields{\CC}=\Field{\T_1}{\f_1}\ldots\Field{\T_n}{\f_n}\\
\end{array}
}
\\[4ex]
\MetaRule{t-invk}{\HasType{\Gamma}{\e_i}{\T_i}\Space \forall i\in 0..n}{\HasType{\Gamma}{\MethCall{\e_0}{\m}{\e_1,\ldots,\e_n}}{\T}}
{\begin{array}{l}
\e_0\ \mbox{not of shape}\ \LambdaExp{\xs}{\e}\\
\mtype{\T_0}{\m}=\funtype{\T_1\ldots\T_n}{\T}\\\
\end{array}
}
\\[4ex]
\MetaRule{t-$\lambda$}{\HasType{
\x_1{:}\T_1,\ldots,\x_n{:}\T_n}{\e}{\T}}{\HasType{\Gamma}{\LambdaExp{\xs}{\e}}{\II}}
{\begin{array}{l}
\umtype{\II}=\funtype{\T_1\ldots\T_n}{\T}
\end{array}
}
\\[4ex]
\MetaRule{t-upcast}{\HasType{\Gamma}{\e}{\T}}{\HasType{\Gamma}{\Cast{\T}{\e}}{\T}}{
}\BigSpace
\MetaRule{t-object}{\HasType{\Gamma}{\val_i}{\T'_i}\Space\forall i\in 1..n}{\HasType{\Gamma}{\object{\CC}{\val_1,\ldots,\val_n}}{\CC}{}}
{\begin{array}{l}
\fields{\CC}=\Field{\T_1}{\f_1}\ldots\Field{\T_n}{\f_n}\\
\T'_i\leqfj\T_i\Space \forall i\in 1..n
\end{array}}
\\[4ex]
\MetaRule{t-sub}{\HasType{\Gamma}{\e}{\T}}{\HasType{\Gamma}{\e}{\T'}}{
\begin{array}{l}
\e\ \mbox{not of shape}\ \LambdaExp{\xs}{\e}\\
\T\leqfj\T'
\end{array}}
\end{array}
\end{math}
\end{small}
\caption{$\MiniFJLambda$: type system}\label{fig:FJ-lambda-typesystem}
\end{figure}

Besides the standard typing features of $\FJ$, the $\MiniFJLambda$ type system ensures the following. 
\begin{itemize}
\item A functional interface $\II$ can be assigned as type to a {$\lambda$-abstraction} which has the 
functional type of the method,
see rule \rn{t-$\lambda$}.
\item A {$\lambda$-abstraction} should have a \emph{target type} determined by the context where the $\lambda$-abstraction occurs. 
More precisely, see \cite{GoslingEtAl14} page 602, a $\lambda$-abstraction in our calculus can only occur as  return expression of a method or argument of constructor, method call or cast.  
Then, in some contexts a $\lambda$-abstraction cannot be typed,
in our calculus when  occurring as receiver in field access or method invocation, hence these cases should be prevented.
This is implicit in rule \rn{t-field-access}, since the type of the receiver should be a class name, whereas it is explicitly forbidden in rule \rn{t-invk}. For the same reason, a {$\lambda$-abstraction cannot be the main expression to be evaluated.}
\item A $\lambda$-abstraction with a given target type $\JJ$ should have type \emph{exactly} $\JJ$: a subtype $\II$ of $\JJ$ is not enough. Consider, for instance, the following program:
\begin{small}
\begin{verbatim}
interface J {}
interface I extends J { A m(A x); }
class C {
  C m(I y) { return new C().n(y); }
  C n(J y) { return new C(); }
}
\end{verbatim}
\end{small}
\noindent and  the  main expression $\MethCall{\ConstrCall{\CC}{}}{\n}{\lambda\x.\x}$. Here, the $\lambda$-abstraction has target type $\JJ$, which is \emph{not} a functional interface, hence the expression is ill-typed in Java (the compiler has no functional type against which to typecheck the {$\lambda$-abstraction}). On the other hand, in the body of method $\m$, the parameter $y$ of type $\II$ can be passed, as usual, to method $\n$ expecting a supertype. For instance, the main expression $\MethCall{\ConstrCall{\CC}{}}{\m}{\lambda\x.\x}$ is well-typed, since the {$\lambda$-abstraction} has target type $\II$, and can be safely passed to method $\n$, since it is not used as function there. To formalise this behaviour, it is forbidden to apply subsumption to {$\lambda$-abstraction}s, see rule \rn{t-sub}.
\item However, $\lambda$-abstractions occurring as results rather than  in source code (that is, in the environment and as fields of objects) are allowed to have a subtype of the required type, see the explicit side condition in rules \rn{t-conf} and \rn{t-object}. For instance, if $\CC$ is a class with one field $\JJ\, \f$, the expression 
$\ConstrCall{\CC}{\Cast{\II}{\lambda x.x}}$ is well-typed, whereas $\ConstrCall{\CC}{\lambda x.x}$ is ill typed, since rule \rn{t-sub} cannot be applied to {$\lambda$-abstraction}s. When the expression is evaluated, the result is $\object{\CC}{\lambda x.x}$, which is well-typed.  
\end{itemize}
As mentioned at the beginning, the obvious small-step semantics would produce not typable expressions. {In the above example, } we get\\ 
\centerline{$\ConstrCall{\CC}{\Cast{\II}{\lambda x.x}}\longrightarrow \ConstrCall{\CC}{\lambda x.x}\longrightarrow\object{\CC}{\lambda x.x}$}
and $\ConstrCall{\CC}{\lambda x.x}$ has no type, while $\ConstrCall{\CC}{\Cast{\II}{\lambda x.x}}$ and $\object{\CC}{\lambda x.x}$ have type $\CC$.

\bigskip

As expected to show soundness (\refToThm{sdfjl}) lemmas of inversion and canonical forms are handy: they can be easily proved as usual. Instead we do not need a substitution lemma, since environments associate variables to values. We write $\HasType{\Gamma}{\e}{\leqfj\T}$ as short for $\HasType{\Gamma}{\e}{\T'}$ and $\T'\leqfj\T$ for some $\T'$.  

\begin{lemma}[Inversion]\label{lem:il}
\begin{enumerate}
\item\label{lem:il:1} If $\HasType{\es}{\Conf{\x_1{:}\val_1,\ldots,\x_n{:}\val_n}{\e}}{\T}$, then $\HasType{\es}{\val_i}{\leqfj\T_i}$ for all $i\in 1..n$ and $\HasType{\x_1{:}\T_1,\ldots,\x_n{:}\T_n}{\e}{\T}$.
\item\label{lem:il:3} If  $\HasType{\Gamma}{\x}{\T}$, then $\Gamma(\x)\leqfj\T$.
\item\label{lem:il:4} If  $\HasType{\Gamma}{\FieldAccess{\e}{\f_i}}{\T}$, then $\HasType{\Gamma}{\e}{\mbox{\em\CC}}$ and {\em $\fields{\CC}=\Field{\T_1}{\f_1}\ldots\Field{\T_n}{\f_n}$} and $\T_i\leqfj T$ where 
$i\in 1..n$.
\item\label{lem:il:5} If  $\HasType{\Gamma}{\ConstrCall{\CC}{\e_1,\ldots,\e_n}}{\T}$, then $\mbox{\em\CC}\leqfj\T$ and {\em $\fields{\CC}=\Field{\T_1}{\f_1}\ldots\Field{\T_n}{\f_n}$} and $\HasType{\Gamma}{\e_i}{\T_i}$ for all $i\in 1..n$.
\item\label{lem:il:6} If  $\HasType{\Gamma}{\MethCall{\e_0}{\m}{\e_1,\ldots,\e_n}}{\T}$, then $\e_0$ not of shape $\LambdaExp{\xs}{\e}$ and 
$\HasType{\Gamma}{\e_i}{\T_i}$ for all $i\in 0..n$ and {\em $\mtype{\T_0}{\m}=\funtype{\T_1\ldots\T_n}{\T'}$} with $\T'\leqfj\T$.
\item\label{lem:il:7} If  $\HasType{\Gamma}{\LambdaExp{\xs}{\e}}{\T}$, then $\T={\mbox{\em\II}}$ and {\em $\umtype{\II}=\funtype{\T_1\ldots\T_n}{\T'}$} and \mbox{$\HasType{\x_1{:}\T_1,\ldots,\x_n{:}\T_n}{\e}{\T'}$.}
\item\label{lem:il:8} If  $\HasType{\Gamma}{\Cast{\T'}{\e}}{\T}$, then $\HasType{\Gamma}{\e}{\T'}$ and $\T'\leqfj \T$.
\item\label{lem:il:9} If  $\HasType{\Gamma}{\object{\text{\em\CC}}{\val_1,\ldots,\val_n}}{\T}{}$, 
then $\mbox{\em\CC}\leqfj\T$ and {\em $\fields{\CC}=\Field{\T_1}{\f_1}\ldots\Field{\T_n}{\f_n}$} and $\HasType{\Gamma}{\val_i}{\leqfj\T_i}$  for all $i\in 1..n$.
\end{enumerate}
\end{lemma}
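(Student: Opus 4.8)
The plan is to prove each item separately \emph{by induction on the typing derivation}. Since every item fixes the outermost syntactic form of the subject, only two typing rules can appear as the last applied rule: the syntax-directed rule for that form, and the subsumption rule \rn{t-sub}. There are exactly two exceptions, and they are precisely the cases where the design of the calculus shows up. For item \refToLemItem{il}{1} the subject is a configuration $\Conf{\x_1{:}\val_1,\ldots,\x_n{:}\val_n}{\e}$, which is not an expression, so \rn{t-sub} does not apply and \rn{t-conf} is forced; the conclusion is then read off directly from its premises and side condition. For item \refToLemItem{il}{7} the subject is a $\lambda$-abstraction, and the side condition of \rn{t-sub} explicitly forbids subsumption on $\lambda$-abstractions; hence \rn{t-$\lambda$} is forced, which is exactly why the resulting type is \emph{exactly} $\II$ rather than a supertype.

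For all the remaining items the two cases are handled uniformly. In the base case (the syntax-directed rule: \rn{t-var}, \rn{t-field-access}, \rn{t-new}, \rn{t-invk}, \rn{t-upcast}, \rn{t-object}) the existential witnesses --- the class name $\CC$, the field and parameter types $\T_i$, the method signature $\funtype{\T_1\ldots\T_n}{\T'}$, and the class-table lookups --- are read directly from the rule, and whenever the statement asks for a subtyping between the ``real'' type produced by the rule and the hypothesis type $\T$, this holds by \emph{reflexivity} of $\leqfj$, since the two coincide. In the inductive case (last rule \rn{t-sub}) the subject is unchanged and the premise gives $\HasType{\Gamma}{\e}{\T''}$ with $\T'' \leqfj \T$; applying the induction hypothesis to this strictly shorter derivation yields the same witnesses together with a subtyping of the form $\T' \leqfj \T''$, whence $\T' \leqfj \T$ follows by \emph{transitivity} of $\leqfj$. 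The key observation making this work is that subsumption only enlarges the final type and leaves the argument/field typings $\HasType{\Gamma}{\e_i}{\T_i}$ and the class-table side conditions untouched, so those parts propagate through the induction hypothesis verbatim.

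The proof is therefore routine, and the only genuine care is in bookkeeping the subtyping slack introduced by \rn{t-sub}; the single conceptual point is the asymmetric treatment of $\lambda$-abstractions. I expect the main obstacle, if any, to be ensuring that the statements are correctly phrased for this asymmetry: the items for field access, constructor, method invocation and object (\refToLemItem{il}{4}, \refToLemItem{il}{5}, \refToLemItem{il}{6}, \refToLemItem{il}{9}) must allow a supertype via $\leqfj$ because subsumption is available, whereas the item for $\lambda$-abstractions (\refToLemItem{il}{7}) must pin the type down exactly. It is this last item that guarantees, through rule \rn{t-$\lambda$}, that a value $\LambdaExp{\xs}{\e}$ always carries a functional interface as its type, with the body typed against the method signature --- the property that will later be needed for canonical forms and hence for $\exists$- and $\forall$-progress.
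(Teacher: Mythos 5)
Your proposal is correct and coincides with the argument the paper has in mind: the paper gives no explicit proof of this lemma, stating only that the inversion and canonical-forms lemmas ``can be easily proved as usual,'' and the usual proof is exactly your induction on typing derivations with a case split between the syntax-directed rule and \rn{t-sub}, using reflexivity and transitivity of $\leqfj$ to absorb the subsumption slack. You also correctly isolate the two special cases that make the statement work --- \rn{t-conf} being forced for configurations and \rn{t-$\lambda$} being forced for $\lambda$-abstractions because \rn{t-sub} excludes them --- which is precisely the design point the paper emphasises in its discussion of target types.
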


\begin{lemma}[Canonical Forms]\label{lem:cfj}\
\begin{enumerate}
\item \label{lem:cfj:1}
If $\HasType\es \val {\text{\em\CC}}$, then $\val=\object{\text{\em\DD}}{\vals}$ and $\text{\em\DD}\leqfj\text{\em\CC}$.
\item \label{lem:cfj:2}
If $\HasType\es \val \text{\em\II}$, then either $\val=\object{\text{\em\CC}}{\vals}$ and $\text{\em\CC}\leqfj\text{\em\II}$ or $\val=\LambdaExp{\xs}{\e}$ and $\text{\em\II}$ is a functional interface.
\end{enumerate}
\end{lemma}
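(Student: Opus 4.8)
The plan is to prove both statements by a straightforward case analysis on the value $\val$, which by the grammar in \refToFigure{FJ-lambda-syntax} is either an object $\object{\CC}{\vals}$ or a $\lambda$-abstraction $\LambdaExp{\xs}{\e}$, and in each case to invoke the appropriate clause of the Inversion Lemma (\refToLem{il}). The key observation that makes the case analysis decisive is that a $\lambda$-abstraction can be typed only by an interface name, never by a class name: rule \rn{t-$\lambda$} assigns it an interface type, and subsumption (rule \rn{t-sub}) is syntactically barred on $\lambda$-abstractions, so \refToLemItem{il}{7} pins its type down to be exactly some $\II$. Since types split into the disjoint syntactic categories of class names and interface names, this immediately excludes the $\lambda$-abstraction case whenever the given type is a class.

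For part \refToLemItem{cfj}{1}, I assume $\HasType{\es}{\val}{\CC}$ and split on $\val$. If $\val = \object{\DD}{\vals}$, then \refToLemItem{il}{9} gives $\DD \leqfj \CC$, which is exactly the conclusion. If instead $\val = \LambdaExp{\xs}{\e}$, then \refToLemItem{il}{7} would force the type to be an interface name, contradicting that it is the class name $\CC$; hence this case is impossible and the claim holds.

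For part \refToLemItem{cfj}{2}, I assume $\HasType{\es}{\val}{\II}$ and again split on $\val$. If $\val = \object{\CC}{\vals}$, then \refToLemItem{il}{9} yields $\CC \leqfj \II$, establishing the first disjunct. If $\val = \LambdaExp{\xs}{\e}$, then \refToLemItem{il}{7} gives that the type is $\II$ and that $\umtype{\II}$ is defined; since $\umtype{\II}$ is defined only when $\II$ is a functional interface, $\II$ is functional, establishing the second disjunct.

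I expect no real obstacle here: the proof is essentially immediate once the Inversion Lemma is available. The only point requiring care is the asymmetric treatment of subsumption — it is available for objects (which is precisely why the $\leqfj$ appears in both conclusions) but forbidden for $\lambda$-abstractions — and this asymmetry is already built into \refToLem{il}. The genuine content therefore lives in establishing the inversion clauses, not in deriving canonical forms from them; the present lemma is a short corollary of inversion together with the syntactic separation of class and interface names.
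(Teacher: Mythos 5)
Your proof is correct and is exactly the ``usual'' argument the paper has in mind: the paper omits the proof of this lemma entirely, remarking only that the inversion and canonical-forms lemmas ``can be easily proved as usual,'' and your case analysis on the shape of $\val$ combined with inversion clauses (7) and (9) --- plus the syntactic disjointness of class and interface names to rule out typing a $\lambda$-abstraction with a class name --- is that standard argument.
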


In order to prove soundness, set $\RuleSet_2$ the big-step semantics defined in \refToFigure{FJ-lambda-big-step}, and let $\Pred2_\T(\Conf{\env}{\e})$ hold if $\HasType{\es}{\Conf{\env}{\e}}{\leqfj\T}$, $\Pred2_T(\val)$ if $\HasType{\es}{\val}{\leqfj\T}$,  for $\T$ defined  in \refToFigure{FJ-lambda-syntax}.  

 To read this and the following soundness proofs of examples, it is convenient to refer to the reduction rules in inline format, where the dummy continuation $\eval\res\res$, if any, is made explicit. 

 \begin{theorem}[Soundness]\label{thm:sdfjl}
The big-step semantics $\RuleSet_2$ and the indexed predicate $\Pred2$ satisfy the conditions {\em \refToSound{preservation}}, {\em \refToSound{progress-ex}} and {\em \refToSound{progress-all}} of \refToSect{sc}.
 \end{theorem}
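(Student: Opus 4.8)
The plan is to prove the three conditions separately and rule by rule, exactly as in the proof of \refToThm{sd}, so that (as there) the whole argument is \emph{by cases} with no induction, the induction being hidden in the correctness proof \refToThm{sound-wrong}. The only tools needed are the inversion lemma \refToLem{il} and the canonical forms lemma \refToLem{cfj}; crucially, no substitution lemma is required, since $\MiniFJLambda$ passes actual arguments through the environment $\env$ rather than by textual substitution. A recurring simplification is that the predicate $\Pred2_\T$ is defined \emph{up to $\leqfj$}: a configuration typeable with any $\T'$ such that $\T'\leqfj\T$ already lies in $\Pred2_\T$. This is what lets subsumption (rules \rn{t-sub}, \rn{t-object}, \rn{t-conf}) be absorbed silently, so that, for instance, the continuation that \refToSound{preservation} requires to carry exactly the index $\T$ may in fact receive a smaller type.

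\textbf{Local preservation.} For each (instance of) rule with conclusion $\Conf{\env}{\e}\in\Pred2_\T$, I would read off the premise indexes from the matching inversion clause. For \rn{var} the value $\env(\x)$ is well-typed with a subtype of $\T$ by \refToLemItem{il}{1} and \refToLemItem{il}{3}. For \rn{field-access} I assign the receiver the class index $\CC$ from \refToLemItem{il}{4}; assuming it returns a well-typed object $\object{\CC}{\vals}$, clause \refToLemItem{il}{9} gives the selected field value a type $\leqfj\T$, which is the index of the (dummy) continuation. \rn{new} and \rn{upcast} are immediate from \refToLemItem{il}{5} and \refToLemItem{il}{8}. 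The interesting cases are \rn{invk} and \rn{$\lambda$-invk}: after typing receiver and arguments via \refToLemItem{il}{6}, I must show that the callee configuration $\Conf{\env'}{\e}$ (the method body, resp. the $\lambda$-body) is typeable with the return type. This is where rule \rn{t-conf} is rebuilt for the fresh frame $\env'$, using the assumed well-typedness of method bodies (resp. inversion \refToLemItem{il}{7} of the $\lambda$ together with $\umtype{\II}$) and the side condition $\T_i\leqfj\T'_i$ to match actual against formal parameter types.

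\textbf{$\exists$-progress.} I would argue by cases on the shape of $\e$ in a non-result well-typed configuration $\Conf{\env}{\e}$. Field access, constructor call, cast and method call each match the head of some rule (a pair of rules, for method call), so a rule with the given configuration in the conclusion exists; for a variable $\x$, rule \rn{var} applies because well-typedness forces $\x\in\mathrm{dom}(\Gamma)$, and \rn{t-conf} then guarantees $\env(\x)$ is defined. The delicate case is $\e=\LambdaExp{\xs}{\e}$, a $\lambda$-abstraction as the main expression: this is precisely the configuration the target-type discipline is meant to exclude, and I would discharge it by appealing to that design invariant, namely that a $\lambda$-abstraction always carries a target type supplied by its enclosing context and is never the outermost expression to be evaluated.

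\textbf{$\forall$-progress} is the main obstacle, and is the case the discussion after \refToSound{progress-all} explicitly forward-references. For \rn{field-access} the only nontrivial premise is the receiver: if it evaluates to some $\res$, then by \refToLem{sr} $\res$ has the class type $\CC$, so by \refToLemItem{cfj}{1} it is an object, which is exactly the shape the rule's first premise admits. The crux is the method call: rules \rn{invk} and \rn{$\lambda$-invk} share both the conclusion configuration and the first-premise configuration $\Conf{\env}{\e_0}$, hence are $\sim_1$-equivalent but differ in the admitted receiver result (an object versus a $\lambda$-abstraction). I must therefore show that whatever value $\res$ the receiver produces is admitted by \emph{at least one} of them. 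By \refToLem{sr} the receiver value has the receiver's static type, a class or interface name, so \refToLemItem{cfj}{1} and \refToLemItem{cfj}{2} give the dichotomy: either $\res$ is an object, matching \rn{invk}, or $\res$ is a $\lambda$-abstraction whose type is a functional interface, matching \rn{$\lambda$-invk}. Either way a $\sim_1$-equivalent rule with result $\res$ exists, which is \refToSound{progress-all} for $k=1$. The remaining premises (the argument evaluations and the continuation, i.e. the body) impose no constraint on the produced value, since the corresponding meta-variables can be freely instantiated, so they hold trivially. The whole difficulty thus concentrates in the canonical-forms reasoning that splits the receiver between the two method-call rules.
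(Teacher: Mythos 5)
Your proposal follows essentially the same route as the paper's proof: a per-rule case analysis using the inversion lemma \refToLem{il} and canonical forms \refToLem{cfj}, with no substitution lemma, subsumption absorbed into the $\leqfj$-closed predicate $\Pred2$, and the method-call case of \refToSound{progress-all} resolved by the same object/$\lambda$ dichotomy between the $\sim_1$-equivalent rules \rn{invk} and \rn{$\lambda$-invk}. The only detail the paper makes explicit that you gloss is that, in the object branch, one must also check that $\mbody{\CC}{\m}$ is defined for the receiver's \emph{dynamic} class $\CC$ (via $\CC\leqfj\T_0$ and well-formedness of the class table), so that the instance of \rn{invk} admitting that particular result actually exists.
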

 \begin{proof} Condition \refToSound{preservation}.
The proof is by cases on instantiations of meta-rules. In considering a rule with typed consequence $\Conf{y_1{:}\hat\val_1,\ldots,y_p{:}\hat\val_p}\e$ \refToLemItem{il}{1} implies $\HasType\es{\hat\val_\ell}{\leqfj\hat\T_\ell}$ for all $\ell\in1\ldots p$ and $\HasType{y_1{:}\hat\T_1,\ldots,y_p{:}\hat\T_p}{\e}{\T}$ for some $\hat\T_1,\ldots,\hat\T_p$.\\
Rule \rn{var}. \refToLemItem{il}{1}  gives $\HasType{\es}{\env(\x)}{\leqfj\T'}$ and $\HasType{\x{:}\T'}{\x}{\T}$. \refToLemItem{il}{3} implies $\T'\leqfj\T$, so we conclude $\HasType{\es}{\env(\x)}{\leqfj\T}$ by transitivity of $\leqfj$.\\
Rule \rn{field-access}. \refToLemItem{il}{4} applied to $\HasType{\Gamma}{\FieldAccess{\e}{\f_i}}{\T}$ implies $\HasType{\Gamma}{\e}{\DD}$ and $\fields{\DD}=\Field{\T_1}{\f_1}\ldots\Field{\T_m}{\f_m}$ and $\T_i\leqfj T$ where 
$i\in 1..m$.  Since $\eval{\Conf{\env}\e}{\object{\CC}{\val_1,\ldots,\val_n}}$ is a premise we assume $\HasType{}{\object{\CC}{\val_1,\ldots,\val_n}}{\leqfj\DD}$, which implies $\CC\leqfj\DD$ and $\fields{\CC}=\Field{\T'_1}{\f'_1}\ldots\Field{\T'_n}{\f'_n}$ and $\HasType{\Gamma}{\val_j}{\leqfj\T'_j}$ for all $j\in 1..n$ by \refToLemItem{il}{9}. From $\CC\leqfj\DD$ we have $m\leq n$ and $\T_j=\T'_j$ and $\f_j=\f'_j$ for all $j\in 1..m$.  We conclude $\HasType{}{\val_i}{\leqfj\T}$.\\
Rule \rn{new}. \refToLemItem{il}{5} applied to $\HasType{\Gamma}{\ConstrCall{\CC}{\e_1,\ldots,\e_n}}{\T}$ implies $\CC\leqfj\T$ and $\fields{\CC}=\Field{\T_1}{\f_1}\ldots\Field{\T_n}{\f_n}$ and $\HasType{\Gamma}{\e_i}{\T_i}$ for all $i\in 1..n$. Since $\eval{\Conf{\env}{\e_i}}{\val_i}$ is a premise we assume $\HasType{}{\val_i}{\leqfj\T_i}$  for all $i\in 1..n$. Using rule \rn{t-object} we derive $\HasType{}{\object{\CC}{\val_1,\ldots,\val_n}}{\leqfj\T}$.\\
Rule \rn{invk}. \refToLemItem{il}{6} applied to $\HasType{\Gamma}{\MethCall{\e_0}{\m}{\e_1,\ldots,\e_n}}{\T}$ implies $\e_0$ not of shape $\LambdaExp{\xs}{\e}$ and 
$\HasType{\Gamma}{\e_i}{\T_i}$ for all $i\in 0..n$ and $\mtype{\T_0}{\m}=\funtype{\T_1\ldots\T_n}{\T'}$ with $\T'\leqfj\T$. Since  $\eval{\Conf{\env}{\e_0}}{\object{\CC}{\vals'}}$ is a premise we assume $\HasType{}{\object{\CC}{\vals'}}{\leqfj\T_0}$, which implies $\CC\leqfj\T_0$ by \refToLemItem{il}{9}. Since $\eval{\Conf{\env}{\e_i}}{\val_i}$ is a premise we assume 
$\HasType{}{\val_i}{\leqfj\T_i}$ for all $i\in 1..n$. We have $\mtype{\CC}{\m}=\funtype{\T_1\ldots\T_n}{\T'}$ since $\mtype{\T_0}{\m}=\funtype{\T_1\ldots\T_n}{\T'}$ and $\CC\leqfj\T_0$. The typing conditions on the class table imply   $\HasType{\Gamma_0}{\e}{\leqfj\T'}$ where $\Gamma_0=\set{\x_1{:}\T_1,\ldots,\x_n{:}\T_n,\this{:}\CC}$. Therefore   using rule \rn{t-conf}  we derive $\HasType\es{\Conf{\x_1{:}\val_1,\ldots,\x_n{:}\val_n,\this{:}\object{\CC}{\vals'}}{\e}}{\leqfj\T}$.\\
Rule \rn{lambda-invk}. \refToLemItem{il}{6} applied to $\HasType{\Gamma}{\MethCall{\e_0}{\m}{\e_1,\ldots,\e_n}}{\T}$ implies $\e_0$ not of shape $\LambdaExp{\xs}{\e'}$ and 
$\HasType{\Gamma}{\e_i}{\T_i}$ for all $i\in 0..n$ and $\mtype{\T_0}{\m}=\funtype{\T_1\ldots\T_n}{\T'}$ with $\T'\leqfj\T$. Since $\eval{\Conf{\env}{\e_0}}{\LambdaExp{\xs}{\e}}$ is a premise we assume $\HasType{}{\LambdaExp{\xs}{\e}}{\leqfj\T_0}$, which implies $\II\leqfj\T_0$ and $\umtype{\II}=\funtype{\T_1\ldots\T_n}{\T'}$ and $\HasType{\x_1{:}\T_1,\ldots,\x_n{:}\T_n}{\e}{\T'}$ by \refToLemItem{il}{7}. Since  $\eval{\Conf{\env}{\e_i}}{\val_i}$ is a premise we assume 
$\HasType{}{\val_i}{\leqfj\T_i}$ for all $i\in 1..n$.  Therefore  using rule \rn{t-conf}  we derive $\HasType\es{\Conf{\x_1{:}\val_1,\ldots,\x_n{:}\val_n}{\e}}{\leqfj\T}$.\\
Rule \rn{upcast}. \refToLemItem{il}{8} applied to $\HasType{\Gamma}{\Cast{\T'}{\e}}{\T}$ implies $\HasType{\Gamma}{\e}{\leqfj \T}$. From $\eval{\Conf{\env}\e}\val$ we conclude $\HasType\es\val{\leqfj\T}$.\\
Condition \refToSound{progress-ex}. It is easy to verify that if $\e$ is generated by the grammar of \refToFigure{FJ-lambda-syntax}, then there is a rule in  \refToFigure{FJ-lambda-big-step} whose conclusion is $\Conf\env\e$.  In particular, for a configuration of shape $\Conf\env\x$, rule \rn{var}  can be applied, since $\HasType\es{\Conf\env\x}\T$ implies that $\x$ is in the domain of $\env$ by Lemmas \ref{lem:il}~(\ref{lem:il:1}) and (\ref{lem:il:3}).

Condition \refToSound{progress-all}.  Rule \rn{var} requires that $\val$ reduces to $\val$, and this is the only derivable judgment for $\val$. 
Rule \rn{field-access} requires that $\Conf{\env}{\e}$ reduces to $\val=\object{\DD}{\val_1,\ldots,\val_m}$ such that $\fields{\DD}=\Field{\T_1}{\f_1}\ldots\Field{\T_m}{\f_m}$, and $i\in 1..m$. Typing rule \rn{t-field-access} prescribes for the expression $\e$ a class type $\CC$ such that $\fields{\CC}=\Field{\T_1}{\f_1}\ldots\Field{\T_n}{\f_n}$, and $i\in 1..n$. The validity of condition \refToSound{preservation} (which assures type preservation by \refToLem{sr}), and  \refToLemItem{cfj}{1}, imply that $\val$ is an object of a subclass $\DD$ of $\CC$, and the well-formedness of the class table implies that $n\leq m$, hence $i\in 1..m$.\\
For a method call $\MethCall{\e_0}{\m}{\e_1,\ldots,\e_n}$, the configuration $\Conf{\env}{\e_0}$ can reduce either to an object $\object{\CC}{\vals}$ or to a lambda-expression. In the first case we can apply rule \rn{invk}, and in the second case rule \rn{$\lambda$-invk}.  In the first case, typing rule \rn{t-invk} prescribes, for the expression $\e_0$,  a type $\T_0$ such that $\mtype{\T_0}{\m}=\funtype{\T_1\ldots\T_n}{\T}$. The validity of condition \refToSound{preservation} (which assures type preservation by \refToLem{sr}), and  \refToLemItem{cfj}{1}, imply that $\CC\leqfj\T_0$, and the well-formedness of the class table implies that $\mtype{\CC}{\m}=\funtype{\T_1\ldots\T_n}{\T}$. 
Other meta-variables for values can be freely instantiated.  
 In rule \rn{up-cast} the meta-variable $\val$ can be freely instantiated. 
\end{proof}

\subsection{Intersection and union types}\label{sect:iut}
We enrich the type system of \refToFigure{lambda-typesystem} by adding intersection and union type constructors and the corresponding typing rules, see \refToFigure{iutypes}. As usual we require an infinite number of arrows in each infinite path for the trees representing types. 
 Intersection types for the $\lambda$-calculus have been widely studied \cite{BDS13}. Union types naturally model conditionals \cite{Grudzinski00} and non-deterministic choice \cite{DdLP98}. 
\begin{figure}
\begin{center}
\begin{small}
\begin{math}
\begin{array}{rcll}
\production{\tA}{\natType\mid \tA_1\to \tA_2\mid \tA_1\wedge \tA_2\mid \tA_1\vee \tA_2}{type}
\end{array}
\end{math}
\HSep
\begin{math}
\begin{array}{c}
\MetaRule
{$\wedge$ I}
{\HasType{\Gamma}{\e}{\T}\Space\HasType{\Gamma}{\e}{\Ts}}
{\HasType{\Gamma}{\e}{\T\wedge\Ts}}{}
\BigSpace
\MetaRule{$\wedge$ E}{\HasType{\Gamma}{\e}{\T\wedge\Ts}}{\HasType{\Gamma}{\e}{\T}}{}\BigSpace
\MetaRule{$\wedge$ E}{\HasType{\Gamma}{\e}{\T\wedge\Ts}}{\HasType{\Gamma}{\e}{\Ts}}{}\\[3ex]
\MetaRule{$\vee$ I}{\HasType{\Gamma}{\e}{\T}}{\HasType{\Gamma}{\e}{\T\vee\Ts}}{}\BigSpace
\MetaRule{$\vee$ I}{\HasType{\Gamma}{\e}{\Ts}}{\HasType{\Gamma}{\e}{\T\vee\Ts}}{}
\end{array}
\end{math}
\end{small}
\end{center}
\caption{Intersection and union types: syntax and typing rules}\label{fig:iutypes}
\end{figure}

The typing rules for the introduction and the elimination of intersection and union are standard,  except  for the absence of the union elimination rule:\\
 \centerline{$ 
\MetaRule{$\vee E$} {\HasType{\SubstFun\Gamma\tA\x}
\e\tC\Space \HasType{\SubstFun\Gamma\tB\x} \e\tC\Space
\HasType\Gamma {\e'}{\tA\vee \tB}}{\HasType\Gamma 
{\subst\e{\e'}\x}\tC}
$} 
As a matter of fact rule \rn{$\vee E$} is unsound for $\oplus$. 
For example, let split the type $\natType$ into $\even$ and $\odd$ and add the expected typings for natural numbers. The prefix addition $\mathtt+$ has type\\ \centerline{$(\funtype{\funtype\even\even}\even)\wedge(\funtype{\funtype\odd\odd}\even)$} and we  derive\\
\[\footnotesize
\prooftree
\HasTypeNarrow{x{:}\even}{+\,x\,x}{\even}
~~
\HasTypeNarrow{x{:}\odd}{+\,x\,x}{\even}
~~
\prooftree
\prooftree
\HasType{}1\odd
\justifies
\HasType{}1\even\vee\odd
\using \rn{$\vee$ I}
\endprooftree
~~
\prooftree
\HasType{}2\even
\justifies
\HasType{}2\even\vee\odd
\using \rn{$\vee$ I}
\endprooftree
\justifies
\HasType{}{(1\oplus2)}{\even\vee\odd}
\using\rn{$\oplus$}
\endprooftree
\justifies
\HasType{}{\!\mathtt{+(1\oplus2)(1\oplus2)}}\even
\using \rn{$\vee$ E}
\endprooftree\]

\noindent
 We cannot assign the type $\even$ to $3$, which is a possible result, so strong soundness is lost. In the small-step approach, we cannot assign $\even$ to the intermediate term $+\,1\,2$, so subject reduction fails.  
In the big-step approach, there is no such intermediate term; however, condition \refToSound{preservation} fails for the reduction rule for $+$. Indeed, considering the following instantiation of the rule:\\
\[\footnotesize
\MetaRule{$+$}{\eval{1\oplus2}{1}\Space \eval{1\oplus2}{2}\Space \eval{3}{3}}{\eval{\!\mathtt{+(1\oplus2)(1\oplus2)}}{3}}{}
\]
and the type $\even$ for the consequence, we cannot assign this type to the (configuration in) last premise (continuation). 


Intersection types allow to derive meaningful types also for expressions containing variables applied to themselves, for example we can derive\\ \centerline{$\HasType{}{\LambdaExp{\x}{\AppExp{\x}{\x}}}{\funType{(\funType{\tA} {\tB})\wedge\tA}}\tB$} With union types all non-deterministic choices between typable expressions can be typed too, since we can derive $\HasType\Gamma {\e_1\oplus\e_2}{\T_1\vee\T_2}$ from $\HasType\Gamma {\e_1}{\T_1}$ and  $\HasType\Gamma {\e_2}{\T_2}$.

In order to prove that the reduction rules satisfy the soundness conditions for the typing system, standard lemmas are handy. We first define the {\em subtyping} relation $\tA\leq \tB$ as the smallest preorder such that:
\begin{itemize}
\item $\tA_1\leq \tB$ and $\tA_2\leq \tB$ imply $\tA_1\wedge \tA_2\leq \tB$;
\item $\tA\wedge \tB\leq \tA$ and $\tA\wedge \tB\leq \tB$;
\item $\tA\leq \tA \vee \tB$ and $\tA\leq \tB\vee \tA$.
\end{itemize}
It is easy to verify that $\tA\leq\tB$ iff $\Gamma, \e:\tA\vdash \e:\tB$ for an arbitrary $\e$ using rules  $(\wedge I)$, $(\wedge E)$ and $(\vee I)$.

\begin{lemma}[Inversion]\label{lem:iliu}\
\begin{enumerate}
\item \label{lem:iliu:1}If $\HasType\Gamma\x\tA$, then $\Gamma(\x)\leq \tA$.
\item \label{lem:iliu:2} If $\HasType{\Gamma}{\const}\T$, then $\natType\leq\T$.
\item \label{lem:iliu:3} If $\HasType\Gamma{\LambdaExp{\x}{\e}}\T$, then $\HasType{\SubstFun\Gamma{\tB_i}\x}\e {\tC_i}$ for $1\leq i\leq m$ and\\ \centerline{$\bigwedge_{1\leq i\leq m}(\funType{\tB_i} {\tC_i})\leq \tA$.}
\item \label{lem:iliu:4} If $\HasType\Gamma{\e_1\appop\e_2}\tA$, then $\HasType\Gamma  {\e_1} {\funType{\tB_i}{\tC_i}}$  and $\HasType\Gamma{\e_2}{\tB_i}$    for $1\leq i\leq m$ and $\bigwedge_{1\leq i\leq m}\tC_i\leq \tA$.
\item \label{lem:iliu:5} If $\HasType{\Gamma}{\SuccExp{\e}}\T$, then $\natType\leq\T$ and $\HasType{\Gamma}{\e}{\natType}$.
\item \label{lem:iliu:6} If $\HasType\Gamma {\e_1\oplus\e_2}\tA$, then $\HasType\Gamma{\e_i}\tA$ with $i\in1,2$.
\end{enumerate}
\end{lemma}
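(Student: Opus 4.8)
The plan is to prove all six items by induction on the typing derivation, exploiting the fact that the subject of the judgment is fixed in each item (a variable, a constant, a $\lambda$-abstraction, an application, a successor, or a choice). Since none of the typing rules changes the subject expression, the last applied rule is either the \emph{unique} syntax-directed rule of \refToFigure{lambda-typesystem} for that subject, or one of the three \emph{non-syntax-directed} rules $(\wedge I)$, $(\wedge E)$, $(\vee I)$, which apply to any expression (crucially, there is \emph{no} union-elimination rule that could alter the subject). Thus in every item the base case is exactly the corresponding syntactic rule, and the three inductive cases are always $(\wedge I)$, $(\wedge E)$, $(\vee I)$, to be handled uniformly.

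Before the induction I would record the facts about $\leq$ that drive the inductive cases. Besides reflexivity and transitivity (it is a preorder by definition) and the generating axioms $\tA\wedge\tB\leq\tA$, $\tA\wedge\tB\leq\tB$, $\tA\leq\tA\vee\tB$, $\tA\leq\tB\vee\tA$, I need the meet-introduction property: if $\tC\leq\tA$ and $\tC\leq\tB$, then $\tC\leq\tA\wedge\tB$. This follows at once from the characterization $\tA\leq\tB$ iff $\Gamma,\e{:}\tA\vdash\e{:}\tB$ by a single use of $(\wedge I)$. Together with transitivity it yields monotonicity of finite intersections, in particular $\bigwedge_{i\in I\cup J}\tD_i\leq\bigwedge_{i\in I}\tD_i$.

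For the items whose conclusion is a subtyping fact about a fixed type (items 1, 2, 5) the inductive cases are routine: in the $(\wedge E)$ and $(\vee I)$ cases I compose the induction hypothesis with the appropriate generating axiom and transitivity, and in the $(\wedge I)$ case I combine two hypotheses via meet-introduction. Item 6 is equally direct: the base case \rn{t-choice} already gives both $\HasType\Gamma{\e_1}\tA$ and $\HasType\Gamma{\e_2}\tA$, and each inductive case merely re-applies the same rule $(\wedge I)$, $(\wedge E)$, or $(\vee I)$ to the two branches supplied by the induction hypothesis, preserving the conclusion for both $\e_1$ and $\e_2$.

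The delicate items are 3 and 4, where the witness is a \emph{family} of arrow types and the conclusion carries the intersection $\bigwedge_i(\funType{\tB_i}{\tC_i})$ (resp.\ $\bigwedge_i\tC_i$). Here the base cases \rn{t-abs} and \rn{t-app} produce a singleton family ($m=1$), and the $(\wedge E)$ and $(\vee I)$ cases keep the family unchanged, adjusting only the final subtyping via the generating axioms and transitivity. I expect the $(\wedge I)$ case to be the only non-routine point: from $\HasType\Gamma{\LambdaExp\x\e}{\T'}$ and $\HasType\Gamma{\LambdaExp\x\e}{\T''}$ the two induction hypotheses yield families indexed by $I'$ and $I''$; I take their (disjoint) union as the new family, observe that all required body typings $\HasType{\SubstFun\Gamma{\tB_i}\x}\e{\tC_i}$ are inherited, and must then check $\bigwedge_{i\in I'\cup I''}(\funType{\tB_i}{\tC_i})\leq\T'\wedge\T''$. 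This is where the preliminary subtyping facts pay off: by intersection monotonicity the combined intersection lies below both $\bigwedge_{i\in I'}(\funType{\tB_i}{\tC_i})$ and $\bigwedge_{i\in I''}(\funType{\tB_i}{\tC_i})$, hence below $\T'$ and below $\T''$ by the two hypotheses and transitivity, and finally below $\T'\wedge\T''$ by meet-introduction. The argument for item 4 is identical with $\bigwedge_i\tC_i$ in place of $\bigwedge_i(\funType{\tB_i}{\tC_i})$; the merging of families and the subtyping bookkeeping are the sole obstacles.
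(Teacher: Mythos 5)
Your proposal is correct and follows essentially the same route as the paper: induction on the typing derivation with a case split on the last applied rule, where the only case the paper bothers to spell out is exactly the one you flag as non-routine, namely $(\wedge I)$ for items 3 and 4, handled by taking the union of the two witness families and concluding $\bigwedge_{i\in I'\cup I''}(\dots)\leq\T'\wedge\T''$. Your explicit derivation of meet-introduction from the characterization of $\leq$ via typability just makes precise a step the paper uses silently.
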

\begin{proof}By induction on derivations and by cases on the last applied typing rule.\\
\refToItem{iliu}{3}. If the last applied rule is \rn{$\wedge I$}, then $\T=\T_1\wedge\T_2$ and $\HasType\Gamma{\LambdaExp{\x}{\e}}{\T_j}$ with $j\in 1,2$. By IH $\HasType{\SubstFun\Gamma{\tB_i^{(j)}}\x}\e {\tC_i^{(j)}}$ for $1\leq i\leq m_j$ and $\bigwedge_{1\leq i\leq m_j}(\funType{\tB_i^{(j)}}{ \tC_i^{(j)}})\leq \tA_j$ with $j\in 1,2$. Then we conclude\\ \centerline{$\bigwedge_{1\leq i\leq m_1}(\funType{\tB_i^{(1)}}{ \tC_i^{(1)}})\wedge
\bigwedge_{1\leq i\leq m_2}(\funType{\tB_i^{(2)}}{ \tC_i^{(2)}})\leq \tA_1\wedge\tA_2$.}
\refToItem{iliu}{4}. If the last applied rule is \rn{$\wedge I$}, then $\T=\T_1\wedge\T_2$ and $\HasType\Gamma{\e_1\appop\e_2}\tA_j$ with $j\in 1,2$. By IH $\HasType\Gamma  {\e_1} {\funType{\tB_i^{(j)}}{\tC_i^{(j)}}}$,  and $\HasType\Gamma{\e_2}{\tB_i^{(j)}}$,  and  for $1\leq i\leq m_j$ and $\bigwedge_{1\leq i\leq m_j}\tC_i^{(j)}\leq \tA_j$ with $j\in 1,2$. Then we conclude\\ \centerline{ $\bigwedge_{1\leq i\leq m_1}\tC_i^{(1)}\wedge \bigwedge_{1\leq i\leq m_2}\tC_i^{(2)}\leq \tA_1\wedge\tA_2$.}
\end{proof}

\begin{lemma}[Substitution]\label{lem:siu}
If $\HasType{\SubstFun\Gamma{\tA'}\x} \e {\tA}$ and $\HasType\Gamma{\e'} {\tA'}$, then $\HasType\Gamma{\subst\e{\e'}\x} {\tA}$.
\end{lemma}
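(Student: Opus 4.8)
The plan is to prove the statement by induction on the derivation of $\HasType{\SubstFun\Gamma{\tA'}\x}{\e}{\tA}$, reasoning by cases on the last applied typing rule. Since the inductive cases for the binding construct modify the environment, I would state (and use) the claim with $\Gamma$, $\tA$, $\tA'$, and $\e'$ universally quantified, so that the induction hypothesis applies to updated environments. I would also rely on the standard \emph{weakening} property — adding to $\Gamma$ a binding for a variable not free in a typed expression preserves its typing — which is immediate by a separate induction on derivations.

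For the non-binding rules the substitution commutes with the term formers, so each case is routine. In rule \rn{t-var} with $\e=\x$, the last rule forces $\tA=(\SubstFun\Gamma{\tA'}\x)(\x)=\tA'$; since $\subst\x{\e'}\x=\e'$, the second hypothesis $\HasType\Gamma{\e'}{\tA'}$ is exactly the thesis. If instead $\e=\y\neq\x$, then $\subst\y{\e'}\x=\y$ and $(\SubstFun\Gamma{\tA'}\x)(\y)=\Gamma(\y)$, so \rn{t-var} reapplies verbatim; rule \rn{t-const} is analogous. For \rn{t-app}, \rn{t-succ} and \rn{t-choice}, substitution distributes over the subterms, so applying the induction hypothesis to the premises and then reapplying the same rule yields the result. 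The intersection and union rules \rn{$\wedge$ I}, \rn{$\wedge$ E} and \rn{$\vee$ I} leave the subject term untouched and only rearrange types: here I would apply the induction hypothesis to each premise (for the \emph{same} substituted term) and reapply the rule, so no genuine syntactic reasoning is needed.

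The delicate case is \rn{t-abs}, where $\e=\LambdaExp\y{\e_0}$ and $\tA=\funType{\tB'}{\tB}$, with premise $\HasType{\SubstFun{(\SubstFun\Gamma{\tA'}\x)}{\tB'}\y}{\e_0}{\tB}$. By $\alpha$-conversion I would assume $\y\neq\x$ and that $\y$ is not free in $\e'$, so that $\subst{(\LambdaExp\y{\e_0})}{\e'}\x=\LambdaExp\y{\subst{\e_0}{\e'}\x}$ and the two environment updates commute, giving the premise as $\HasType{\SubstFun{(\SubstFun\Gamma{\tB'}\y)}{\tA'}\x}{\e_0}{\tB}$. Weakening $\HasType\Gamma{\e'}{\tA'}$ to $\HasType{\SubstFun\Gamma{\tB'}\y}{\e'}{\tA'}$ (licensed because $\y$ is not free in $\e'$) lets me invoke the induction hypothesis with environment $\SubstFun\Gamma{\tB'}\y$, obtaining $\HasType{\SubstFun\Gamma{\tB'}\y}{\subst{\e_0}{\e'}\x}{\tB}$; reapplying \rn{t-abs} then closes the case.

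I expect the main obstacle to be precisely this binding case: getting the bookkeeping of variable capture right (the $\alpha$-renaming of $\y$ away from $\x$ and from the free variables of $\e'$) and making sure the weakening lemma is available so that the induction hypothesis can be applied under the extended environment. The intersection and union rules, by contrast, are essentially free, since they never inspect the subject term.
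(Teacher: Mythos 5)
Your proof is correct and is the standard argument the paper has in mind: the paper states this substitution lemma without proof, treating it as routine, and your induction on the typing derivation --- with $\alpha$-renaming and weakening in the abstraction case, and the observation that the intersection/union rules leave the subject term untouched so the induction hypothesis reapplies directly --- is exactly what is needed. It is worth noting that every case stays routine precisely because the system omits the union elimination rule \rn{$\vee$ E}, the one rule that rewrites the subject term and would otherwise complicate this induction.
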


\begin{lemma}[Canonical Forms]\label{lem:cfiu}\
\begin{enumerate}
\item \label{lem:cfiu:1}
If $\HasType\es \val {\funType{\tA'}{\tA}}$, then $\val=\LambdaExp{\x}{\e}$.
\item \label{lem:cfiu:2}
If $\HasType\es \val \natType$, then $\val=\const$.
\end{enumerate}
\end{lemma}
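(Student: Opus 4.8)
The plan is to derive both items from the inversion lemma (\refToLem{iliu}) together with a structural analysis of the subtyping relation $\leq$. Since the only values of the calculus are constants $\const$ and abstractions $\LambdaExp{\x}{\e}$, each item reduces to excluding the wrong alternative. For \refToLemItem{cfiu}{1}, if $\val=\const$ then \refToLemItem{iliu}{2} applied to $\HasType\es{\const}{\funType{\tA'}{\tA}}$ yields $\natType\leq\funType{\tA'}{\tA}$; for \refToLemItem{cfiu}{2}, if $\val=\LambdaExp{\x}{\e}$ then \refToLemItem{iliu}{3} applied to $\HasType\es{\LambdaExp{\x}{\e}}{\natType}$ yields $\bigwedge_{1\le i\le m}(\funType{\tB_i}{\tC_i})\leq\natType$ with $m\ge 1$. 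Hence it suffices to prove the two non-subtyping facts $\natType\not\leq\funType{\tA'}{\tA}$ and $\bigwedge_{1\le i\le m}(\funType{\tB_i}{\tC_i})\not\leq\natType$, after which the required shape of $\val$ follows by contradiction.

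The core step is thus a soundness result for $\leq$ with respect to a deliberately coarse set-theoretic interpretation that remembers only the top-level shape of a type. I would fix a two-element universe $\{\mathtt N,\mathtt F\}$ and set $I(\natType)=\{\mathtt N\}$, $I(\funType{\tA}{\tB})=\{\mathtt F\}$, $I(\tA_1\wedge\tA_2)=I(\tA_1)\cap I(\tA_2)$ and $I(\tA_1\vee\tA_2)=I(\tA_1)\cup I(\tA_2)$. The key lemma is that $\tA\leq\tB$ implies $I(\tA)\subseteq I(\tB)$. Because $\leq$ is \emph{the smallest} preorder closed under the three generating clauses, it is enough to observe that the relation $\{(\tA,\tB)\mid I(\tA)\subseteq I(\tB)\}$ is a preorder (as $\subseteq$ is reflexive and transitive) closed under those clauses: each clause is an immediate inclusion, e.g. $I(\tA_1\wedge\tA_2)=I(\tA_1)\cap I(\tA_2)\subseteq I(\tA_1)$, and whenever $I(\tA_1),I(\tA_2)\subseteq I(\tB)$ one has $I(\tA_1)\cap I(\tA_2)\subseteq I(\tB)$, and dually for $\vee$.

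With soundness established the two facts are instantaneous: $\natType\leq\funType{\tA'}{\tA}$ would give $\{\mathtt N\}\subseteq\{\mathtt F\}$, and $\bigwedge_{1\le i\le m}(\funType{\tB_i}{\tC_i})\leq\natType$ would give $\{\mathtt F\}=\bigcap_{1\le i\le m}\{\mathtt F\}\subseteq\{\mathtt N\}$, both absurd. The only delicate point is that the intersection delivered by \refToLemItem{iliu}{3} is non-empty, i.e. $m\ge 1$, so that it interprets to $\{\mathtt F\}$ rather than the whole universe; this holds because every typing of an abstraction ultimately rests on at least one use of the abstraction rule, contributing at least one arrow $\funType{\tB_i}{\tC_i}$.

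The inversion appeals and the case analysis on the two value shapes are routine and I expect no difficulty there. The main, albeit mild, obstacle is pinning down the right invariant about $\leq$: the interpretation $I$ keeps $I(\natType)$ and $I(\funType{\tA}{\tB})$ disjoint precisely because $\leq$ is generated \emph{only} by the listed clauses together with reflexivity and transitivity, with no clause relating $\natType$ to arrows and no variance rule for $\funType{}{}$. One must verify this closure carefully; an alternative would be a direct induction on the derivation of $\leq$, but the semantic reading is cleaner and more reusable.
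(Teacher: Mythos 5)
The paper states this Canonical Forms lemma without proof (it is listed among the ``standard lemmas''), so there is no official argument to compare against; your proposal supplies a complete and correct one. Your reduction is the right one: the only non-routine content is the pair of non-derivability facts $\natType\not\leq\funType{\tA'}{\tA}$ and $\bigwedge_{1\leq i\leq m}(\funType{\tB_i}{\tC_i})\not\leq\natType$, and your two-point shape model validates all the generating clauses of $\leq$ (including the likely-intended greatest-lower-bound reading of the first clause), so minimality of $\leq$ gives $I(\tA)\subseteq I(\tB)$ whenever $\tA\leq\tB$. You are also right to flag $m\geq 1$; it is forced because the grammar has no empty intersection and every case of the induction behind \refToLemItem{iliu}{3} bottoms out in rule \rn{t-abs}. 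The standard alternative would be a direct induction on the derivation of $\tA\leq\tB$ showing that the top-level constructor ``reachable through $\wedge/\vee$'' is preserved, but your semantic formulation is cleaner and reusable (e.g.\ it immediately refutes any other cross-shape subtyping).

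One point you should make explicit: in this section types are interpreted \emph{coinductively}, i.e.\ they may be infinite trees, so the equations defining $I$ are not automatically a legitimate structural recursion. They are well defined here only because of the stated guard condition that every infinite path contains infinitely many arrows: consequently every path from the root meets a $\natType$ or an arrow node after finitely many $\wedge/\vee$ nodes, and (the tree being finitely branching) $I$ can be defined by induction on the depth of that finite frontier. Without a sentence to this effect the definition of $I$ is formally incomplete, though nothing in the argument actually fails.
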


In order to prove soundness,  let
 $\Pred3_T(\e)$  be  
 $\HasType{\es}{\e}{\T}$,  for $\T$  defined in \refToFigure{iutypes}. 

\begin{theorem}[Soundness]\label{thm:sdiu}
The big-step semantics $\RuleSet_1$ and the indexed predicate $\Pred3$ satisfy the conditions {\em \refToSound{preservation}}, {\em \refToSound{progress-ex}} and {\em \refToSound{progress-all}} of \refToSect{sc}.
 \end{theorem}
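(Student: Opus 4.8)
The plan is to reuse, almost verbatim, the three‑condition schema already carried out for \refToThm{sd}, reasoning \emph{by cases} on the three (meta‑)rules \rn{app}, \rn{succ} and \rn{choice} of $\RuleSet_1$, but replacing the simply‑typed lemmas by their intersection/union counterparts: Inversion (\refToLem{iliu}), Substitution (\refToLem{siu}) and Canonical Forms (\refToLem{cfiu}). As in that proof there is \emph{no} induction on the evaluation; the induction stays hidden inside the meta‑theorems of \refToSect{soundness} (\refToThm{sound-wrong} and \refToThm{sound-traces}). The only genuinely new phenomenon is that inversion now returns \emph{families} of arrow types, so the indices assigned to premises become intersections, while union types occur only as supertypes and are absorbed by the admissible subsumption along $\tA\leq\tB$.

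For condition \refToSound{preservation} I would handle each rule separately. For \rn{choice} with continuation $\eval{\e_i}{v}$, assuming $\HasType{\es}{\Choice{\e_1}{\e_2}}{\T}$, \refToLemItem{iliu}{6} gives $\HasType{\es}{\e_i}{\T}$, so I take $\idx_1=\T$. For \rn{succ}, assuming $\HasType{\es}{\SuccExp{\e}}{\T}$, \refToLemItem{iliu}{5} yields $\natType\leq\T$ and $\HasType{\es}{\e}{\natType}$; I set $\idx_1=\natType$, $\idx_2=\T$, and type the dummy continuation $n+1$ by \rn{t-const}, subsuming along $\natType\leq\T$. The interesting case is \rn{app}. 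Assuming $\HasType{\es}{\AppExp{\e_1}{\e_2}}{\T}$, \refToLemItem{iliu}{4} gives $\HasType{\es}{\e_1}{\funType{\tB_i}{\tC_i}}$ and $\HasType{\es}{\e_2}{\tB_i}$ for $1\leq i\leq m$ with $\bigwedge_{1\leq i\leq m}\tC_i\leq\T$. I would then choose $\idx_1=\bigwedge_{1\leq i\leq m}(\funType{\tB_i}{\tC_i})$ and $\idx_2=\bigwedge_{1\leq i\leq m}\tB_i$, deriving the types of the first two premises by repeated \rn{$\wedge$ I}. For the continuation I must obtain $\HasType{\es}{\subst{\e}{\val_2}{\x}}{\T}$ from $\HasType{\es}{\LambdaExp{\x}{\e}}{\idx_1}$ and $\HasType{\es}{\val_2}{\idx_2}$: inverting the abstraction with \refToLemItem{iliu}{3} and using that arrow types are \emph{atomic} for $\leq$ (so $\bigwedge_j(\funType{\tD_j}{\tE_j})\leq\funType{\tB_i}{\tC_i}$ forces $\funType{\tB_i}{\tC_i}$ to be one of the $\funType{\tD_j}{\tE_j}$), I recover $\HasType{\x{:}\tB_i}{\e}{\tC_i}$ for each $i$; then \refToLem{siu} with $\HasType{\es}{\val_2}{\tB_i}$ gives $\HasType{\es}{\subst{\e}{\val_2}{\x}}{\tC_i}$, and \rn{$\wedge$ I} followed by subsumption along $\bigwedge_i\tC_i\leq\T$ concludes.

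Conditions \refToSound{progress-ex} and \refToSound{progress-all} go almost as in \refToThm{sd}. For \refToSound{progress-ex}, a well‑typed non‑value is an application, successor or choice (a variable is excluded since \refToLemItem{iliu}{1} cannot hold in the empty environment), and each has a rule with it as conclusion. For \refToSound{progress-all} the only premises that constrain the shape of a result are: in \rn{app}, where \refToLemItem{iliu}{4} gives $\e_1$ an arrow type, so by preservation (\refToLem{sr}, now available) and \refToLemItem{cfiu}{1} its value is a $\lambda$-abstraction, matching the first premise; and in \rn{succ}, where \refToLemItem{iliu}{5} together with \refToLemItem{cfiu}{2} force the value of $\e$ to be a constant, while the dummy continuation $\eval{n+1}{n+1}$ only admits $n+1$ as result. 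All remaining premises carry free value meta‑variables and are thus immediate, and \rn{choice} is trivial for the same reason.

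I expect the main obstacle to be the continuation subcase of \refToSound{preservation} for \rn{app}: unlike the simply‑typed proof, re‑inverting the abstraction $\LambdaExp{\x}{\e}$ at the intersection type $\idx_1$ does not directly return the components $\HasType{\x{:}\tB_i}{\e}{\tC_i}$, and one must invoke a subtyping‑inversion property — that an intersection of arrows is below an arrow only if that arrow is one of its conjuncts — which is implicit in the minimal $\leq$ of \refToSect{iut} but is not isolated as a lemma. Verifying that union types never obstruct the argument (they appear only as supertypes, discharged by subsumption) is routine but worth stating explicitly.
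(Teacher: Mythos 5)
Your proposal is correct and follows essentially the same route as the paper's proof: a case analysis on the meta-rules using the intersection/union versions of the inversion, substitution and canonical-forms lemmas, with the intersections $\bigwedge_i(\funType{\tB_i}{\tC_i})$ and $\bigwedge_i\tB_i$ as the indexes for the premises of \rn{app} and subsumption along $\bigwedge_i\tC_i\leq\T$ for the continuation. The only difference is that you are slightly more explicit than the paper, which silently uses the arrow-inversion property of $\leq$ when re-inverting $\LambdaExp{\x}{\e}$ at type $\funType{\tB_i}{\tC_i}$ and omits the (routine) \rn{succ} and \rn{choice} cases that you spell out.
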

 \begin{proof}
Condition \refToSound{preservation}. The proof is by cases on instantiations of  meta-rules. For rule \rn{app} \refToLemItem{iliu}{4}  applied to $\HasType\es{\e_1\appop\e_2}\tA$ implies $\HasType{\es}  {\e_1} {\funType{\tB_i}{\tC_i}}$  and $\HasType{\es}{\e_2}{\tB_i}$  for $1\leq i\leq m$ and $\bigwedge_{1\leq i\leq m}\tC_i\leq \tA$. As in the proof of  \refToThm{sd} we get 
$\HasType{\es}{\LambdaExp{\x}{\e}}{\funType{\tB_i}{\tC_i}}$ and $\HasType{\es}{\val_2} {\tB_i}$ for $1\leq i\leq m$.  \refToLemItem{iliu}{3} implies $\HasType{\x {\ : \ } \tB_i} \e {\tC_i}$, so by \refToLem{siu} we have $\HasType{\es}{\subst\e{\val_2}\x} {\tC_i}$ for $1\leq i\leq m$. We can derive $\HasType{\es}{\subst\e{\val_2}\x} {\tA}$ using rules $(\wedge I)$, $(\wedge E)$ and $(\vee I)$.\\
Condition \refToSound{progress-ex}. The proof is as in  \refToThm{sd}.\\
Condition \refToSound{progress-all}. The proof is by cases on instantiations of  meta-rules. For rule \rn{app} \refToLemItem{iliu}{4}  applied to $\HasType{\es}{\e_1\appop\e_2}\tA$ implies $\HasType{\es}  {\e_1} {\funType{\tB_i}{\tC_i}}$  for $1\leq i\leq m$. If $\eval{\e_1}\val$ we get $\HasType{\es}  {\val} {\funType{\tB_i}{\tC_i}}$  for $1\leq i\leq m$ as in the proof of  \refToThm{sd}. \refToLemItem{cfiu}{1} applied to $\HasType{\es}{\val} {\funType{\tB_i}{\tC_i}}$ implies $\val=\LambdaExp{\x}{\e}$ and therefore the premises of the rule can be satisfied. 
\end{proof}

 
\subsection{$\MiniFJOS$}\label{sect:fjos}
A well-known example in  which  
proving soundness with respect to small-step semantics is extremely challenging is  the standard type system with intersection and union types \cite{BDL95}
w.r.t. the pure $\lambda$-calculus with full reduction.  
Indeed, the standard subject reduction technique fails\footnote{For this reason, in \cite{BDL95} soundness is proved by an  ad-hoc  
technique, that is, by considering parallel reduction and an equivalent type system \`a la Gentzen, which enjoys the cut elimination property. }, 
since, for instance, we can derive  the type $(\tA\to \tA\to \tC)\wedge (\tB\to \tB\to \tC)\to (\tD\to \tA \vee \tB)\to \tD \to \tC$ for both 
$\lambda x. \lambda y. \lambda z. x((\LambdaExp{t}{t})(\AppExp{y}{z}))((\LambdaExp{t}{t})(\AppExp{y}{z}))$ 
and
$\lambda x. \lambda y. \lambda z. x(\AppExp{y}{z})(\AppExp{y}{z})$, 
 but the intermediate  expressions  
 $\lambda x. \lambda y. \lambda z. x((\LambdaExp{t}{t})(\AppExp{y}{z}))(\AppExp{y}{z})$ and $\lambda x. \lambda y. \lambda z. x(\AppExp{y}{z})((\LambdaExp{t}{t})(\AppExp{y}{z}))$ do not have this type.  

As the example shows, the key problem is that rule \rn{$\vee E$} can be applied to  expression  
$\e$ where the same  subexpression  
$\e'$ occurs more than once. In the non-deterministic case, as shown by the example in the previous section, this is unsound, since $\e'$ can reduce to different values. In the deterministic case, instead, this is sound, but cannot be proved by subject reduction. 
Since using big-step semantics there are no intermediate steps to be typed, our approach seems very promising to investigate an alternative 
proof of soundness.
Whereas we leave this challenging problem to future work, here as first step we describe a (hypothetical) calculus with a much simpler version of the problematic feature.

The calculus is a variant of $\FJ$ \cite{IPW01} with intersection and union types.  Methods have intersection types with the same return type and different parameter types, modelling  a form of  {\em overloading}. Union types enhance typability of conditionals. The more interesting feature is the possibility of replacing an arbitrary number of parameters with the same expression having an union type. We dub this calculus $\MiniFJOS$. 
\begin{figure}[t]
\begin{small}
\begin{grammatica}
\produzione{\e}{\x\mid\val\mid \FieldAccess{\e}{\f}\mid
\MethCall{\e}{\m}{\e_1,\ldots,\e_n}\mid\ifte{\e}{\e_1}{\e_2}}{\qquad expression}\\
\produzione{\val}{\ConstrCall{\CC}{\val_1,\ldots,\val_n}\mid\true\mid\false}{\qquad value}\\
\produzione{\T}{\CC\mid\boolType\mid\bigvee_{1\leq i\leq n}\T_i}{\qquad  expression  type}\\
\produzione{\MT}{\bigwedge_{1\leq i\leq m}(\funtype{\CC_1^{(i)}\ldots\CC_n^{(i)}}{\DD})}{\qquad method type}
\end{grammatica}
\HSep
\begin{math}
\begin{array}{l}
\MetaRule{field-access}{\eval{\e}{\ConstrCall{\CC}{\val_1,\ldots,\val_n}}}{
\eval{\FieldAccess{\e}{\f_i}}{\val_i}}
{\begin{array}{l}
\fields{\CC}=\Field{\T_1}{\f_1}\ldots\Field{\T_n}{\f_n}\\
i\in 1..n
\end{array}
}\\[4ex]
\MetaRule{new}{
\eval{\e_i}{\val_i}\Space \forall i\in 1..n}{\eval{\ConstrCall{\CC}{\e_1,\ldots,\e_n}}{\ConstrCall{\CC}{\val_1,\ldots,\val_n}}}
{
}\\[4ex]
\MetaRule{invk}{
\begin{array}{l}
\eval{\e_0}{\ConstrCall{\CC}{\vals'}}\\
\eval{\e_i}{\val_i}\Space \forall i\in 1..n\\
\eval{\subst{\subst{\subst\e{\val_1}{\x_1}\ldots}{\val_n}{\x_n}}{\ConstrCall{\CC}{\vals'}}{\this}}{\val}
\end{array}}{
\eval{\MethCall{\e_0}{\m}{\e_1,\ldots,\e_n}}
{\val}}
{\begin{array}{l}
\mbody{\CC}{\m}=\Pair{\x_1\ldots\x_n}{\e}
\end{array}
}
\end{array}
\end{math}
\HSep
\begin{math}
\begin{array}{l}
\metainlinerule{field-access}
{\eval{\e}{\ConstrCall{\CC}{\val_1,\ldots,\val_n}}}
{\eval{\val_i}{\val_i}}
{\FieldAccess{\e}{\f_i}} \BigSpace\fields{\CC}=\Field{\T_1}{\f_1}\ldots\Field{\T_n}{\f_n}
\Space
i\in 1..n\\
\metainlinerule{new}
{\eval{\mathit{es}}{\vals}}
{\eval{\ConstrCall{\CC}{\val_1,\ldots,\val_n}}{\ConstrCall{\CC}{\val_1,\ldots,\val_n}}}
{\ConstrCall{\CC}{\e_1,\ldots,\e_n}}\\
\metainlinerule{invk}{\eval{\e_0}{\ConstrCall{\CC}{\vals'}}, \eval{\mathit{es}}{\vals}}{\eval{\e'}{\val}}{\MethCall{\e_0}{\m}{\e_1,\ldots,\e_n}}\\
\hfill \e'=\subst{\subst{\subst\e{\val_1}{\x_1}\ldots}{\val_n}{\x_n}}{\ConstrCall{\CC}{\vals'}}{\this}\qquad\mbody{\CC}{\m}=\Pair{\x_1\ldots\x_n}{\e}\\
\text{where }\eval{\mathit{es}}{\vals}\text{ is short for }\eval{\e_1}{\val_1}, \ldots, \eval{\e_n}{\val_n}
\end{array}
\end{math}
\HSep
\begin{math}
\begin{array}{l}
\MetaRule{t-var}{}{\HasType{\Gamma}{\x}{\T}}{\Gamma(\x)=\T}
\BigSpace
\MetaRule{t-field-access}{\HasType{\Gamma}{\e}{\CC}}{\HasType{\Gamma}{\FieldAccess{\e}{\f_i}}{\CC_i}}
{\begin{array}{l}
\fields{\CC}=\Field{\T_1}{\f_1}\ldots\Field{\T_n}{\f_n}\\
i\in 1..n
\end{array}
}\\[4ex]
\MetaRule{t-new}{\HasType{\Gamma}{\e_i}{\CC_i}\Space \forall i\in 1..n}{\HasType{\Gamma}{\ConstrCall{\CC}{\e_1,\ldots,\e_n}}{\CC}}
{\begin{array}{l}
\fields{\CC}=\Field{\T_1}{\f_1}\ldots\Field{\T_n}{\f_n}\\
\end{array}
}
\\[4ex]
\MetaRule{t-invk}{\HasType{\Gamma}{\e_i}{\CC_i}\Space \forall i\in 0..n\Space\HasType{\Gamma}{\e}{\bigvee_{1\leq i\leq m}\DD_i}}{\HasType{\Gamma}{\MethCall{\e_0}{\m}{\e_1,\ldots,\e_n,\underbrace{\e,\ldots,\e}_{p}}}{\CC}}
{\begin{array}{l}
 \mtype{\CC_0}{\m}\leqfj\\
 \bigwedge_{1\leq i\leq m}(\funtype{\CC_1\ldots\CC_n\underbrace{\DD_i\ldots\DD_i}_{p}}{\CC})\\
\end{array}
}
\\[4ex]
\MetaRule{t-if}{\HasType{\Gamma}{\e}{\boolType}\Space\HasType{\Gamma}{\e_1}{\T}\Space\HasType{\Gamma}{\e_2}{\T}}{\HasType{\Gamma}{\ifte\e{\e_1}{\e_2}}{\T}}{}\BigSpace
\MetaRule{t-sub}{\HasType{\Gamma}{\e}{\T}}{\HasType{\Gamma}{\e}{\T'}}{
\begin{array}{l}
\T\leqfj\T'
\end{array}}
\end{array}
\end{math}
\end{small}
\caption{$\MiniFJOS$: syntax, big-step semantics and type system}\label{fig:FJOS-typesystem}
\end{figure}

\refToFigure{FJOS-typesystem} gives the syntax, big-step semantics and typing rules of $\MiniFJOS$.  We omit the standard big-step rule for conditional, and typing rules for boolean constants.  

 The subtyping relation $\leqfj$ is the reflexive and transitive closure of the union of the $\aux{extends}$ relation and the standard rules for union:\\
\centerline{$\T_1\leqfj\T_1\vee\T_2\Space\qquad\T_1\leqfj\T_2\vee\T_1$}
On the other hand, \emph{method types} (results of the \aux{mtype} function) are now \emph{intersection types}, and the subtyping relation on them is the reflexive and transitive closure of the standard rules for intersection:\\
\centerline{$\MT_1\wedge\MT_2\leqfj\MT_1\Space\qquad\MT_1\wedge\MT_2\leqfj\MT_2$}

The functions \aux{fields} and \aux{mbody} are defined as for $\MiniFJLambda$.\\
 Instead 
$\mtype{\CC}{\m}$ gives, for each method $\m$ in class $\CC$, an intersection type.  
We assume $\mbody{\CC}{\m}$ and $\mtype{\CC}{\m}$ either both defined or both undefined: in the first case\\ 
\centerline{$\begin{array}{c}
\mbody{\CC}{\m}{=}\Pair{\x_1\ldots\x_n}{\e},~ \mtype{\CC}{\m}{=}\bigwedge_{1\leq i\leq m}(\funtype{\CC_1^{(i)}\ldots\CC_n^{(i)}}{\DD})\\ 
\text{and }\HasType{\x_1{:}\CC_1^{(i)},\ldots,\x_n{:}\CC_n^{(i)},\this{:}\CC}{\e}{\DD}\text{ for }i\in 1..m
\end{array}$}

Clearly rule \rn{t-invk} is inspired by rule \rn{$\vee E$}, but the restriction to method calls  endows  a standard inversion lemma. 
 The subtyping in this rule allows to choose the types for the method best fitting the types of the arguments.  Not surprisingly, subject reduction fails for the expected small-step semantics. For example,
let class $\CC$ have a field $\aux{point}$ which contains cartesian coordinates and class $\DD$ have a field $\aux{point}$ which contains polar coordinates. The method $\aux{eq}$ takes two objects and compares their $\aux{point}$ fields returning a boolean value.   A type for this  
method is 
\mbox{$(\funtype{\CC\,\CC}\boolType)\wedge(\funtype{\DD\,\DD}\boolType)$} and we can type  $\aux{eq}(\e,\e )$,  where
\begin{quote} 
$\e=\ifte{\false}{\ConstrCall\CC\ldots}{\ConstrCall\DD\ldots}$
\end{quote}
 In fact $\e$ has type $\CC\vee\DD$. Notice that in a standard small-step semantics
\begin{quote}
$\begin{array}{l}\aux{eq}(\e,\e) \longrightarrow  \aux{eq}(\ConstrCall\DD\ldots ,\ifte{\false}{\ConstrCall\CC\ldots}{\ConstrCall\DD\ldots})\end{array}$
\end{quote}
and this last expression cannot be typed.

As in previous examples the soundness proof uses an inversion lemma and a substitution lemma,  whereas the canonical form lemma is trivial, notably the only values of type $\CC$ are objects (constructor calls with values as arguments) of a subclass. We need instead a lemma (dubbed key) which assures  that a value typed by a union of classes can also be typed by one of theses classes. The proof of this lemma is straightforward, since values are new constructors.

\begin{lemma}[Inversion]\label{lem:ilos}\
\begin{enumerate}
\item\label{lem:ilos:1} If  $\HasType{\Gamma}{\x}{\T}$, then $\Gamma(\x)\leqfj\T$.
\item\label{lem:ilos:2} If  $\HasType{\Gamma}{\FieldAccess{\e}{\f_i}}{\T}$, then $\HasType{\Gamma}{\e}{\mbox{\em\CC}}$ and {\em $\fields{\CC}=\Field{\CC_1}{\f_1}\ldots\Field{\CC_n}{\f_n}$} and $\mbox{\em\CC}_i\leqfj T$ where 
$i\in 1..n$.
\item\label{lem:ilos:3} If  $\HasType{\Gamma}{\mbox{\em {\sf new}~\CC}(\e_1,\ldots,\e_n)}{\T}$, then $\mbox{\em\CC}\leqfj\T$ and {\em $\fields{\CC}=\Field{\mbox{\em\CC}_1}{\f_1}\ldots\Field{\CC_n}{\f_n}$} and $\HasType{\Gamma}{\e_i}{\mbox{\em\CC}_i}$ for all $i\in 1..n$.
\item\label{lem:ilos:4} If  $\HasType{\Gamma}{\MethCall{\e_0}{\m}{\e_1,\ldots,\e_n}}{\T}$, then  $n=q+p$ and
$\HasType{\Gamma}{\e_i}{\mbox{\em\CC}_i}$ for all $i\in 0..q$ and $\e_{q+1}=\ldots=\e_n=\e$ and $\HasType{\Gamma}{\e}{\bigvee_{1\leq i\leq m}{\mbox{\em\DD}}_i}$ and {\em $\mtype{\CC_0}{\m}\leqfj\bigwedge_{1\leq i\leq m}(\funtype{\CC_1\ldots\CC_p\underbrace{\DD_i\ldots\DD_i}_ p}{\CC}) $} with $\mbox{\em\CC}\leqfj\T$.
\item\label{lem:ilos:5} If   $\HasType{\Gamma}{\mbox{\em \sf if}~\e~\mbox{\em \sf then}~\e_1~\mbox{\em \sf else}~\e_2}{\T}$, then $\HasType{\Gamma}{\e}{\boolType}$  and  $\HasType{\Gamma}{\e_1}{\T}$  and $\HasType{\Gamma}{\e_2}{\T}$.
\end{enumerate}
\end{lemma}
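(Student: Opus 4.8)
The plan is to prove each of the five items by induction on the typing derivation of its hypothesis, performing a case analysis on the last applied rule. The observation that makes this manageable is that, for a judgment whose subject is a variable, field access, constructor call, method call, or conditional, the only rules that can conclude it are the corresponding syntax-directed rule (\rn{t-var}, \rn{t-field-access}, \rn{t-new}, \rn{t-invk}, \rn{t-if}, respectively) and the subsumption rule \rn{t-sub}. Since \rn{t-sub} leaves the subject unchanged, within each item the induction stays on the same subject form, and the argument splits into exactly these two cases.

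First I would dispatch the syntax-directed cases, which are immediate: the premises and side conditions of the rule \emph{are} the conclusions claimed by the lemma, and each inequality of the form ``$\_\leqfj\T$'' holds by reflexivity of $\leqfj$, since there $\T$ is literally the type produced by the rule. The work then concentrates in the \rn{t-sub} case, which is uniform across all items: the derivation ends with a premise $\HasType{\Gamma}{\e}{\T''}$ and side condition $\T''\leqfj\T$. I apply the induction hypothesis to the subderivation of $\HasType{\Gamma}{\e}{\T''}$, obtaining the claimed decomposition but with $\T''$ in place of $\T$, and lift it to $\T$ according to the shape of each conclusion. A conclusion ``$S\leqfj\T''$'' with $S$ a fixed type (e.g.\ $\CC_i$ in item \refToItem{ilos}{2}, or $\CC$ in items \refToItem{ilos}{3} and \refToItem{ilos}{4}) becomes ``$S\leqfj\T$'' by transitivity with $\T''\leqfj\T$; a conclusion that is itself a typing judgment carrying the output type (the branch typings $\HasType{\Gamma}{\e_1}{\T''}$ and $\HasType{\Gamma}{\e_2}{\T''}$ in item \refToItem{ilos}{5}) is lifted by re-applying \rn{t-sub}; and every remaining side judgment (receiver and argument typings) does not mention the output type, hence carries over verbatim. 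Transitivity and reflexivity of $\leqfj$, which are exactly the closure properties assumed of the subtyping relation, are the only facts needed here.

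The step I expect to be the main obstacle is item \refToItem{ilos}{4}, the method-invocation case, purely for bookkeeping reasons. Inverting an instance of \rn{t-invk} requires recovering the split of the actual arguments into a class-typed prefix $\e_1,\ldots,\e_q$ and a suffix of $p$ syntactic copies of a single expression $\e$ (so that $n=q+p$), the union type $\bigvee_{1\leq i\leq m}\DD_i$ assigned to that repeated argument, and the method-type subtyping $\mtype{\CC_0}{\m}\leqfj\bigwedge_{1\leq i\leq m}(\funtype{\ldots}{\CC})$. All of this data is read directly off the \rn{t-invk} instance used in the derivation and is preserved unchanged by the induction hypothesis in the \rn{t-sub} case, where only the final inequality $\CC\leqfj\T$ is produced by transitivity; so beyond carefully carrying the indices of the argument split there is no genuine difficulty. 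Note also that no substitution or canonical-forms lemma is needed for this proof, as inversion reasons solely about the shape of derivations and the closure properties of $\leqfj$.
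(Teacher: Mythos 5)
Your proposal is correct: the case split into the unique syntax-directed rule versus \rn{t-sub}, with reflexivity in the former and the induction hypothesis plus transitivity of $\leqfj$ (or re-application of \rn{t-sub} for the branch typings in the conditional case) in the latter, is exactly the standard argument. The paper states this lemma without an explicit proof, treating it as routine, and proves the analogous inversion lemma for the intersection/union type system (\refToLem{iliu}) by the very same induction on derivations with case analysis on the last applied rule, so your approach matches the paper's.
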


\begin{lemma}[Substitution]\label{lem:sos}
If $\HasType{\SubstFun\Gamma{\tA'}\x} \e {\tA}$ and $\HasType\Gamma{\e'} {\tA'}$, then $\HasType\Gamma{\subst\e{\e'}\x} {\tA'}$.
\end{lemma}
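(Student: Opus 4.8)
The plan is to prove the lemma by induction on the derivation of the first hypothesis $\HasType{\SubstFun\Gamma{\tA'}\x}{\e}{\tA}$, proceeding by cases on the last applied typing rule. (I take the conclusion to be $\HasType\Gamma{\subst\e{\e'}\x}{\tA}$, as in the companion lemmas \refToLem{s} and \refToLem{siu}.) Induction on the derivation rather than on the structure of $\e$ is \emph{essential}, because rule \rn{t-sub} leaves the subject $\e$ unchanged and only a strictly smaller derivation is available there. In the base case \rn{t-var} the subject is a variable. If it is $\x$, then $\tA = (\SubstFun\Gamma{\tA'}\x)(\x) = \tA'$ and $\subst\x{\e'}\x = \e'$, so the second hypothesis $\HasType\Gamma{\e'}{\tA'}$ already gives the claim; if it is a variable distinct from $\x$, then the substitution is the identity on it and its type in $\SubstFun\Gamma{\tA'}\x$ coincides with its type in $\Gamma$, so \rn{t-var} re-derives the judgement directly in $\Gamma$.

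Since $\MiniFJOS$ has no binding constructs inside expressions, substitution is \emph{capture-free} and commutes with every expression former. Hence the congruence cases \rn{t-field-access}, \rn{t-new}, and \rn{t-if} are routine: I apply the induction hypothesis to each typed immediate subexpression and reassemble the derivation with the same rule, leaving the side conditions (which constrain the class table and the field/type shapes, not the environment) untouched. The subsumption case \rn{t-sub} is equally direct: from $\HasType{\SubstFun\Gamma{\tA'}\x}{\e}{\tB}$ with $\tB\leqfj\tA$ the induction hypothesis gives $\HasType\Gamma{\subst\e{\e'}\x}{\tB}$, and one further application of \rn{t-sub} restores the type $\tA$.

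The step I expect to be the main obstacle is the method-invocation case \rn{t-invk}, because of its non-standard shape. Here the subject is a method call $\MethCall{\e_0}{\m}{\e_1,\ldots,\e_n,\underbrace{\hat\e,\ldots,\hat\e}_{p}}$, whose last $p$ arguments are one and the same expression $\hat\e$ typed by a union $\bigvee_{1\leq i\leq m}\DD_i$, subject to $\mtype{\CC_0}{\m}\leqfj\bigwedge_{1\leq i\leq m}(\funtype{\CC_1\ldots\CC_n\DD_i\ldots\DD_i}{\CC})$. I must first observe that $\subst\cdot{\e'}\x$ acts \emph{uniformly} on all occurrences, so it preserves the repeated-argument pattern: the substituted subject is again $\MethCall{\subst{\e_0}{\e'}\x}{\m}{\ldots,\underbrace{\subst{\hat\e}{\e'}\x,\ldots,\subst{\hat\e}{\e'}\x}_{p}}$, exactly the form \rn{t-invk} requires. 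The induction hypothesis then retypes $\e_0,\ldots,\e_n$ at $\CC_0,\ldots,\CC_n$ and the shared argument $\hat\e$ at $\bigvee_{1\leq i\leq m}\DD_i$; applying it once to the single premise for $\hat\e$ and reusing the outcome for all $p$ identical copies is sound precisely because those copies coincide. Since neither the method-type side condition on $\mtype{\CC_0}{\m}$ nor the types $\CC_i$, $\DD_i$, $\CC$ mention the environment, they survive the substitution, so \rn{t-invk} re-derives the method-call judgement at type $\CC\leqfj\tA$. No genuine difficulty remains beyond this bookkeeping; the result is a routine structural substitution lemma.
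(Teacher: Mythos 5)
The paper states \refToLem{sos} without any proof, so there is no official argument to compare against; judged on its own terms, your proof is correct and is precisely the standard argument the authors are implicitly taking for granted. You rightly repair the statement first (the conclusion must read $\tA$ rather than $\tA'$, matching \refToLem{s} and \refToLem{siu} and the way the lemma is invoked in the proof of \refToThm{sdos}), and you rightly induct on the derivation rather than on $\e$ so that \rn{t-sub} can be handled. You also isolate the one point where $\MiniFJOS$ departs from a textbook substitution lemma: in the \rn{t-invk} case one must check that the substituted call still has the repeated-argument shape $\MethCall{\cdot}{\m}{\ldots,\hat\e,\ldots,\hat\e}$ that the rule demands, which holds because substitution acts uniformly on the $p$ identical copies, and the single typing premise for $\hat\e$ at the union type $\bigvee_{1\leq i\leq m}\DD_i$ is retyped by one application of the induction hypothesis while the side condition on $\mtype{\CC_0}{\m}$ is environment-independent. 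Since no typing rule of $\MiniFJOS$ extends the environment and expressions contain no binders, the induction hypothesis applies verbatim to every premise, so the remaining cases are indeed routine; the only cosmetic omission is the trivial cases for the boolean constants, whose typing rules the paper leaves implicit.
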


\begin{lemma}[Key]\label{lem:key}
If $\HasType{\Gamma}{\val}{\bigvee_{1\leq i\leq n}\CC_i}$, then $\HasType{\Gamma}{\val}{\CC_i}$ for some $i\in 1\ldots n$. 
\end{lemma}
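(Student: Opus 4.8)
The plan is to reduce the statement to a purely syntactic fact about subtyping, after observing that a value carrying a union-of-classes type must be an object. First I would dispose of the boolean values: by a routine inversion for the (omitted) rules typing $\true$ and $\false$, the only types derivable for them are supertypes of $\boolType$, so $\HasType{\Gamma}{\val}{\bigvee_{1\le i\le n}\CC_i}$ with $\val\in\{\true,\false\}$ would force $\boolType\leqfj\bigvee_{1\le i\le n}\CC_i$. I will rule this out by the very subtyping analysis used below, since $\bigvee_{1\le i\le n}\CC_i$ has no $\boolType$ leaf. Hence $\val=\ConstrCall{\DD}{\vals}$ is an object.

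Next I apply inversion \refToLemItem{ilos}{3} to $\HasType{\Gamma}{\ConstrCall{\DD}{\vals}}{\bigvee_{1\le i\le n}\CC_i}$, obtaining $\DD\leqfj\bigvee_{1\le i\le n}\CC_i$ together with the field typings, which let me re-derive $\HasType{\Gamma}{\ConstrCall{\DD}{\vals}}{\DD}$ by rule \rn{t-new}. It then suffices to establish the subtyping fact $(\star)$: \emph{for a class name $\DD$, if $\DD\leqfj\bigvee_{1\le i\le n}\CC_i$ then $\DD\leqfj\CC_j$ for some $j$}. Indeed, from $\HasType{\Gamma}{\val}{\DD}$ and $\DD\leqfj\CC_j$ rule \rn{t-sub} yields $\HasType{\Gamma}{\val}{\CC_j}$, which is the thesis. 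Note that subsumption alone cannot descend from $\bigvee_{1\le i\le n}\CC_i$ to $\CC_j$, which is precisely why the minimal type $\DD$ has to be recovered first.

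For $(\star)$ I would exploit that $\leqfj$ is the reflexive--transitive closure of the extends relation (on class names) together with the two union-\emph{injection} axioms $\T\leqfj\T\vee\Ts$ and $\T\leqfj\Ts\vee\T$, so a judgement $\DD\leqfj U$ amounts to a finite chain of single axiom steps $\DD=\T_0,\ldots,\T_k=U$. Along this chain I maintain the invariant that each $\T_\ell$ has a \emph{class-name leaf} $\CC$ (a maximal $\vee$-free subterm that is a class name) with $\DD\leqfj\CC$: it holds initially with $\CC=\DD$ by reflexivity; an extends step applies only when $\T_\ell$ is a class name, equal to its unique leaf $\CC$, sending it to a superclass $\CC'$ with $\DD\leqfj\CC\leqfj\CC'$ by transitivity; an injection step only adds new leaves, so the chosen leaf survives. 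Applying the invariant at $U=\bigvee_{1\le i\le n}\CC_i$, whose class-name leaves are exactly $\CC_1,\ldots,\CC_n$, gives $\DD\leqfj\CC_j$ for some $j$.

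The only real obstacle is $(\star)$, and the reason it goes through so smoothly --- and why the lemma can be called straightforward --- is that union subtyping here is limited to \emph{injection}: there is no rule splitting or distributing a union on the left, hence once a subtyping derivation leaves the class names it can never return, and the source class $\DD$ stays trapped below one fixed disjunct. Consequently no induction on typing derivations is needed beyond the single appeal to inversion.
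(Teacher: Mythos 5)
Your proof is correct and follows exactly the route the paper has in mind: the paper gives no detailed argument, only the remark that the lemma is ``straightforward, since values are new constructors,'' which is precisely your observation that a value of a union-of-classes type must be an object, hence by inversion and \rn{t-new} has a class name $\DD$ as a type with $\DD\leqfj\bigvee_{1\leq i\leq n}\CC_i$, and a class name can only sit below one disjunct because union subtyping is pure injection. Your explicit chain-invariant argument for the subtyping fact $(\star)$ and the dismissal of the boolean case are the right way to fill in the details the paper leaves implicit.
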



In order to prove soundness,  
 let $\RuleSet_4$ be  the big-step semantics defined in \refToFigure{FJOS-typesystem}, and let $\Pred4_T(\e)$ hold if $\HasType{\es}{\e}{\T}$, for $\T$  defined  in \refToFigure{FJOS-typesystem}. 

\begin{theorem}[Soundness]\label{thm:sdos}
The big-step semantics $\RuleSet_4$ and the indexed predicate $\Pred4$ satisfy the conditions {\em \refToSound{preservation}}, {\em \refToSound{progress-ex}} and {\em \refToSound{progress-all}} of \refToSect{sc}.
 \end{theorem}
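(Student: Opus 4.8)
The plan is to verify the three conditions by cases on the (instantiations of the) meta-rules \rn{field-access}, \rn{new}, \rn{invk}, and the omitted conditional rule, exactly as in Theorems \refToThm{sd}, \refToThm{sdfjl} and \refToThm{sdiu}, using the inversion lemma \refToLem{ilos}, the substitution lemma \refToLem{sos}, the Key lemma \refToLem{key}, and the (trivial) canonical forms property that a value of a class type is an object of a subclass. All cases except \rn{invk} are routine. For \refToSound{preservation} on \rn{field-access}, inversion \refToLemItem{ilos}{2} gives $\HasType{\es}{\e}{\CC}$ with $\fields{\CC}=\Field{\CC_1}{\f_1}\ldots\Field{\CC_n}{\f_n}$ and $\CC_i\leqfj\T$; assigning $\CC$ to the receiver premise, inversion \refToLemItem{ilos}{3} on the evaluated object together with \rn{t-sub} types the dummy continuation $\val_i$ with $\T$. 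For \rn{new}, inversion \refToLemItem{ilos}{3} provides the field types for the argument premises, and \rn{t-new} followed by \rn{t-sub} retypes the resulting object with $\T$. In both cases the continuation merely repackages values already typed by the premises.

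The core of the argument is \refToSound{preservation} for \rn{invk}. From $\HasType{\es}{\MethCall{\e_0}{\m}{\e_1,\ldots,\e_n}}{\T}$, inversion \refToLemItem{ilos}{4} splits the arguments into $q$ genuine ones of types $\CC_1,\ldots,\CC_q$ and $p$ occurrences of a single expression of union type $\bigvee_{1\leq j\leq m}\DD_j$, with $\mtype{\CC_0}{\m}\leqfj\bigwedge_{1\leq j\leq m}(\funtype{\CC_1\ldots\CC_q\DD_j\ldots\DD_j}{\CC})$ and $\CC\leqfj\T$. I would assign $\CC_0$ to the receiver premise, $\CC_i$ to the genuine-argument premises, and $\bigvee_j\DD_j$ to each occurrence premise (the latter being forced, since the repeated expression is only known to have the union type). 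Assuming the premise results are well-typed, the evaluated receiver $\ConstrCall{\CC'}{\vals'}$ has type $\CC_0$, so \refToLemItem{ilos}{3} gives $\CC'\leqfj\CC_0$, and class-table well-formedness transfers the signature to $\CC'$, so that the method body is typed under $\x_1{:}\CC_1,\ldots,\x_q{:}\CC_q,\x_{q+1}{:}\DD_{j_0},\ldots,\x_n{:}\DD_{j_0},\this{:}\CC'$ with result $\CC$. The substitution lemma \refToLem{sos} then types the continuation with $\CC$, hence with $\T$ via \rn{t-sub} --- \emph{provided} the values substituted for the $p$ occurrence-parameters all lie in one and the same disjunct $\DD_{j_0}$.

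This proviso is the main obstacle, and it is precisely what the example is designed to expose. The semantics evaluates the $p$ copies of the repeated argument in separate premises, so a priori they could produce values in different disjuncts of $\bigvee_j\DD_j$ --- the very situation that makes union elimination unsound in the non-deterministic calculus of \refToSect{iut}. Here it is excluded by \emph{determinism} of $\RuleSet_4$: in any derivation the repeated premises evaluate the same expression and therefore yield a single value $\val$, so one application of the Key lemma \refToLem{key} to $\val$ (of type $\bigvee_j\DD_j$) produces a common $\DD_{j_0}$ with $\HasType{\es}{\val}{\DD_{j_0}}$. This is exactly where the big-step formulation pays off: there is no intermediate term in which the copies have been reduced unequally, so determinism can be invoked directly and the union is eliminated soundly. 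I would therefore make this determinism-plus-Key step the explicit heart of the \rn{invk} case.

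Finally, \refToSound{progress-ex} is immediate, since every non-value expression (field access, constructor call, method call, conditional) is the consequence of a rule and a variable cannot be typed in the empty environment by \refToLemItem{ilos}{1}. For \refToSound{progress-all} I would check, rule by rule, that the evaluated subconfigurations have the shapes required by the side conditions: by preservation (\refToLem{sr}) and canonical forms the receiver of a field access reduces to an object of a subclass providing the field, and the receiver of a method call reduces to an object of a subclass for which \aux{mbody} is defined (again using class-table well-formedness); all remaining meta-variables for argument and result values may be instantiated freely.
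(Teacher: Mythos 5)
Your proof is correct and follows essentially the same route as the paper's: a case analysis on the meta-rules using the inversion, substitution and Key lemmas, with the crucial \rn{invk} step being a single application of \refToLem{key} to the one value of the repeated argument, selecting a disjunct $\DD_{j_0}$ under which the class-table assumption types the method body before applying \refToLem{sos} and \rn{t-sub}. The only divergence is that you explicitly justify $\val_{q+1}=\cdots=\val_n$ by determinism of $\RuleSet_4$, a step the paper's proof simply asserts; this is a welcome clarification, though strictly it presupposes that the repeated premises are derivable, which is exactly the situation in which condition \refToSound{preservation} is actually invoked (\refToLem{sr}, \refToProp{preservation}).
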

 \begin{proof}
Condition \refToSound{preservation}. The proof is by cases on  instantiations  of meta-rules. For rule \rn{invk} \refToLemItem{ilos}{4}  applied to $\HasType\es{\MethCall{\e_0}{\m}{\e_1,\ldots,\e_n}}{\T}$ implies $n=q+p$ and
$\HasType\es{\e_i}{\CC_i}$ for all $i\in 0..q$ and $\e_{q+1}=\ldots=\e_n=\e$ and $\HasType\es{\e}{\bigvee_{1\leq j\leq m}\DD_j}$ and\\ \centerline{$\mtype{\CC_0}{\m}\leqfj\bigwedge_{1\leq j\leq m}(\funtype{\CC_1\ldots\CC_p\underbrace{\DD_j\ldots\DD_j}_p}{\CC'}) $} with $\CC'\leqfj\T$. Then we get 
$\HasType{\es}{\ConstrCall{\CC}{\vals'}}{\CC_0}$ and $\HasType{\es}{\val_i} {\CC_i}$ for $i\in 1\ldots q$ and $\val_{q+1}=\ldots=\val_n=\val'$ and $\HasType\es{\val'}{\bigvee_{1\leq j\leq m}\DD_j}$, which implies $\HasType\es{\val'}{\DD_j}$ for some $j\in 1\ldots m$ by \refToLem{key}. The typing of the class table implies $\HasType{\x_1{:}\CC_1,\ldots,\x_q{:}\CC_q,\x_{q+1}:\DD_j, \ldots, \x_n:\DD_j,\this{:}\CC}{\e}{\CC'}$ for all $j\in 1\ldots m$.
\refToLem{sos} gives $\HasType{\es}{\subst{\subst{\subst\e{\val_1}{\x_1}\ldots}{\val_n}{\x_n}}{\ConstrCall{\CC}{\vals'}}{\this}}{\CC'}$. We can conclude $\HasType{\es}{\subst{\subst{\subst\e{\val_1}{\x_1}\ldots}{\val_n}{\x_n}}{\ConstrCall{\CC}{\vals'}}{\this}} {\T}$ using rule \rn{t-sub}.

Condition \refToSound{progress-ex}. All the closed expressions which are not values appear as conclusions in the reduction rules.

Condition \refToSound{progress-all}.  Rules \rn{field-access} and \rn{invk} require that the expression in the first premise reduces to an object for which the side-condition holds, and this can be proved exactly as in the corresponding cases in \refToThm{sdfjl}, by using the typing rules \rn{t-field-access}, and \rn{t-invk}, respectively, the validity of condition \refToSound{preservation} (which assures type preservation by \refToLem{sr}), the fact that canonical forms  of type $\CC$ are objects of a subclass, and the well-formedness of the class table. Other meta-variables for values can be freely instantiated. 

\end{proof}

\subsection{Imperative $\FJ$}\label{sect:ifj}

In \refToFigure{FJ-imp} and \refToFigure{FJ-impT} we show a minimal imperative extension of $\FJ$.  
We assume a well-typed class table and we use the notations introduced in \refToSect{fjl}. Expressions are enriched with field assignment and \emph{object identifiers} $\oid$, which only occur in runtime expressions.  A \emph{memory} $\mu$ maps object identifiers to \emph{object states}, which are expressions of shape $\ConstrCall{\CC}{\oid_1,\ldots\oid_n}$. Results are configurations of shape $\Conf{\mu}{\oid}$. We denote by $\UpdateMem{\mu}{\oid}{i}{\oid'}$ the memory obtained from $\mu$ by replacing by $\iota'$ the $i$-th field of the object state associated to $\oid$. The {\em type assignment} $\Sigma$ maps object identifiers into types (class names). We write $\HasType{\Sigma}{\e}{\CC}$ for $\HasTypeMem{\emptyset}{\Sigma}{\e}{\CC}$.   The subtyping relation $\leqfj$ is the reflexive and transitive closure of the $\aux{extends}$ relation.

\begin{figure}
\begin{small}
\begin{grammatica}
\produzione{\e}{\x\mid\FieldAccess{\e}{\f}\mid\ConstrCall{\CC}{\e_1,\ldots,\e_n}\mid
\MethCall{\e}{\m}{\e_1,\ldots,\e_n}\mid\FieldAssign{\e}{\f}{\e'}\mid\oid}{\qquad expression}\\
\produzione{\conf}{\Conf{\mu}{\e}}{\qquad configuration}\\
\end{grammatica}

\HSep
\begin{math}
\begin{array}{l}
\\
\MetaRule{field-access}{\eval{\Conf{\mu}{\e}}{\Conf{\mu'}{\oid}}}{
\eval{\Conf{\mu}{\FieldAccess{\e}{\f_i}}}{\Conf{\mu'}{\oid_i}}} \qquad 
\begin{array}{l}
\mu'(\oid)=\ConstrCall{\CC}{\oid_1,\ldots,\oid_n}\\
\fields{\CC}=\Field{\CC_1}{\f_1}\ldots\Field{\CC_n}{\f_n}\\
i\in 1..n\\
\end{array}
\\[2ex]
\MetaRule{new}{
\eval{\Conf{\mu_i}{\e_i}}{\Conf{\mu_{i+1}}{\oid_i}}\Space \forall i\in 1..n}{\eval{\Conf{\mu}{\ConstrCall{\CC}{\e_1,\ldots,\e_n}}}{\Conf{\mu'}{\oid}}}\qquad 
{\begin{array}{l}
\mu_1=\mu\\
\mu'=\SubstFun{\mu_{n+1}}{\ConstrCall{\CC}{\oid_1,\ldots,\oid_n}}{\oid}\\
\oid\ \mbox{fresh}
\end{array}
}\\[4ex]
\MetaRule{invk}{
\begin{array}{l}
\eval{\Conf{\mu_i}{\e_i}}{\Conf{\mu_{i+1}}{\oid_i}}\Space \forall i\in 0..n\\
\eval{\Conf{\mu_{n+1}}{\subst{\subst{\subst{\e}{\oid_1}{\x_1}\ldots}{\oid_n}{\x_n}}{\oid_0}{\this}}}{\Conf{\mu'}{\oid}}
\end{array}}{
\eval{\Conf{\mu}{\MethCall{\e_0}{\m}{\e_1,\ldots,\e_n}}}
{\Conf{\mu'}{\oid}}}\qquad 
{\begin{array}{l}
\mu_0=\mu\\
 \mu_1(\oid_0) =\ConstrCall{\CC}{\_}\\
\mbody{\CC}{\m}=\Pair{\x_1\ldots\x_n}{\e}\\
\mu'=\mu_{n+1}
\end{array}
}\\[5ex]
\MetaRule{field-assign}{\eval{\Conf{\mu}{\e}}{\Conf{\mu'}{\oid}}\Space\eval{\Conf{\mu'}{\e'}}{\Conf{\mu''}{\oid'}}}{
\eval{\Conf{\mu}{\FieldAssign{\e}{\f_i}{\e'}}}{\Conf{\UpdateMem{\mu''}{\oid}{i}{\oid'}}{\oid'}}}\qquad 
{\begin{array}{l}
\mu(\oid)=\ConstrCall{\CC}{\oid_1,\ldots,\oid_n}\\
\fields{\CC}=\Field{\CC_1}{\f_1}\ldots\Field{\CC_n}{\f_n}\\
i\in 1..n\\\\
\end{array}}
\end{array}
\end{math}
\HSep
\begin{math}
\begin{array}{l}
\metainlinerule{field-access}
{\eval{\Conf{\mu}{\e}}{\Conf{\mu'}{\oid}}}
{\eval{\Conf{\mu'}{\oid_i}}{\Conf{\mu'}{\oid_i}}}
{\Conf{\mu}{\FieldAccess{\e}{\f_i}}}\\
\hfill\begin{array}{l}
\mu'(\oid)=\ConstrCall{\CC}{\oid_1,\ldots,\oid_n}\\
\fields{\CC}=\Field{\CC_1}{\f_1}\ldots\Field{\CC_n}{\f_n}\\
i\in 1..n\\
\end{array}
\\[2ex]
\metainlinerule{new}
{\eval{\Conf{\mu_i}{\e_i}}{\Conf{\mu_{i+1}}{\oid_i}}\Space \forall i\in 1..n}
{\eval{\Conf{\mu'}{\oid}}{\Conf{\mu'}{\oid}}}
{\Conf{\mu}{\ConstrCall{\CC}{\e_1,\ldots,\e_n}}}\\
\hfill
\begin{array}{l}
\mu_1=\mu\\
\mu'=\SubstFun{\mu_{n+1}}{\ConstrCall{\CC}{\oid_1,\ldots,\oid_n}}{\oid}\\
\oid\ \mbox{fresh}
\end{array}
\\[5ex]
\metainlinerule{invk}
{\eval{\Conf{\mu_i}{\e_i}}{\Conf{\mu_{i+1}}{\oid_i}}\Space \forall i\in 0..n}
{\eval{\Conf{\mu_{n+1}}{\e'}}{\Conf{\mu'}{\oid}}}
{\Conf{\mu}{\MethCall{\e_0}{\m}{\e_1,\ldots,\e_n}}}\\
\hfill
\begin{array}{l}
\mu_0=\mu\\
 \mu_1(\oid_0)=\ConstrCall{\CC}{\_}\\
\mbody{\CC}{\m}=\Pair{\x_1\ldots\x_n}{\e}\\
\e'=\subst{\subst{\subst{\e}{\oid_1}{\x_1}\ldots}{\oid_n}{\x_n}}{\oid_0}{\this}\\
\mu'=\mu_{n+1}
\end{array}
\\[5ex]
\metainlinerule{field-assign}
{\eval{\Conf{\mu}{\e}}{\Conf{\mu'}{\oid}}\Space\eval{\Conf{\mu'}{\e'}}{\Conf{\mu''}{\oid'}}}
{\eval{\Conf{\mu'''}{\oid'}}{\Conf{\mu'''}{\oid'}}}
{\Conf{\mu}{\FieldAssign{\e}{\f_i}{\e'}}}\\
\hfill
\begin{array}{l}
\mu(\oid)=\ConstrCall{\CC}{\oid_1,\ldots,\oid_n}\\
\fields{\CC}=\Field{\CC_1}{\f_1}\ldots\Field{\CC_n}{\f_n}\\
i\in 1..n\\
\mu'''=\UpdateMem{\mu''}{\oid}{i}{\oid'}\\\\
\end{array}
\end{array}
\end{math}
\end{small}
\caption{Imperative $\FJ$: syntax and big-step semantics}\label{fig:FJ-imp}
\end{figure}
\begin{figure}
\begin{small}
\begin{math}
\begin{array}{l}
\\
\MetaRule{t-conf}{\HasType{\Sigma}{\mu(\oid)}{\Sigma(\oid)}\ \forall\oid\in\dom(\mu)\Space \HasType{\Sigma}{\e}{\CC}}{\HasType{\Sigma}{\Conf{\mu}{\e}}{\CC}}
{\dom(\Sigma)=\dom(\mu) }
\\[3ex]
\MetaRule{t-var}{}{\HasTypeMem{\Gamma}{\Sigma}{\x}{\CC}}{\Gamma(\x)=\CC}
\\[3ex]
\MetaRule{t-field-access}{\HasTypeMem{\Gamma}{\Sigma}{\e}{\CC}}{\HasTypeMem{\Gamma}{\Sigma}{\FieldAccess{\e}{\f_i}}{\CC_i}}
{\begin{array}{l}
\fields{\CC}=\Field{\CC_1}{\f_1}\ldots\Field{\CC_n}{\f_n}\\
i\in 1..n
\end{array}
}\\[4ex]
\MetaRule{t-new}{\HasTypeMem{\Gamma}{\Sigma}{\e_i}{\CC_i}\Space \forall i\in 1..n}{\HasTypeMem{\Gamma}{\Sigma}{\ConstrCall{\CC}{\e_1,\ldots,\e_n}}{\CC}}
{\fields{\CC}=\Field{\CC_1}{\f_1}\ldots\Field{\CC_n}{\f_n}
}
\\[3ex]
\MetaRule{t-invk}{\HasTypeMem{\Gamma}{\Sigma}{\e_i}{\CC_i}\Space \forall i\in 0..n}{\HasTypeMem{\Gamma}{\Sigma}{\MethCall{\e_0}{\m}{\e_1,\ldots,\e_n}}{\CC}}
{\mtype{\CC_0}{\m}=\funtype{\CC_1\ldots\CC_n}{\CC}
}
\\[3ex]
\MetaRule{t-field-assign}{
\begin{array}{l}
\HasTypeMem{\Gamma}{\Sigma}{\e}{\CC}\\
\HasTypeMem{\Gamma}{\Sigma}{\e'}{\CC_i}
\end{array}}
{\HasTypeMem{\Gamma}{\Sigma}{\FieldAssign{\e}{\f_i}{\e'}}{\CC_i}}
{\begin{array}{l}
\fields{\CC}=\Field{\CC_1}{\f_1}\ldots\Field{\CC_n}{\f_n}\\
i\in 1..n
\end{array}
}\\[4ex]
\MetaRule{t-oid}{}{\HasTypeMem{\Gamma}{\Sigma}{\oid}{\CC}}{\Sigma(\oid)=\CC}\BigSpace\MetaRule{t-sub}{\HasTypeMem{\Gamma}{\Sigma}{\e}{\CC}}{\HasTypeMem{\Gamma}{\Sigma}{\e}{\CC'}}{
\CC\leqfj\CC'}
\end{array}
\end{math}
\end{small}
\caption{Imperative $\FJ$: typing rules}\label{fig:FJ-impT}
\end{figure}
\begin{lemma}[Inversion]\label{lem:ilimp}\
\begin{enumerate}
\item\label{lem:ilimp:0} If $\HasTypeMem{\Gamma}{\Sigma}{\Conf{\mu}{\e}}{\mbox{\em\CC}}$, then $\HasTypeMem{\Gamma}{\Sigma}{\mu(\oid)}{\Sigma(\oid)}$ for all $\oid\in\dom(\mu)$ and  $\HasType{\Sigma}{\e}{\mbox{\em\CC}}$ and $\dom(\Sigma)=\dom(\mu)$.
\item\label{lem:ilimp:1} If  $\HasTypeMem{\Gamma}{\Sigma}{\x}{\mbox{\em\CC}}$, then $\Gamma(\x)\leqfj{\mbox{\em\CC}}$.
\item\label{lem:ilimp:2} If  $\HasTypeMem{\Gamma}{\Sigma}{\FieldAccess{\e}{\f_i}}{\mbox{\em\CC}}$, then $\HasTypeMem{\Gamma}{\Sigma}{\e}{\mbox{\em\DD}}$ and {\em $\fields{\DD}=\Field{\CC_1}{\f_1}\ldots\Field{\CC_n}{\f_n}$} and ${\mbox{\em\CC}}_i\leqfj {\mbox{\em\CC}}$ where 
$i\in 1..n$.
\item\label{lem:ilimp:3} If  $\HasTypeMem{\Gamma}{\Sigma}{\sf new~{\mbox{\em\CC}}(\e_1,\ldots,\e_n)}{\mbox{\em\DD}}$, then ${\mbox{\em\CC}}\leqfj{\mbox{\em\DD}}$ and {\em $\fields{\CC}=\Field{\CC_1}{\f_1}\ldots\Field{\CC_n}{\f_n}$} and $\HasTypeMem{\Gamma}{\Sigma}{\e_i}{{\mbox{\em\CC}}_i}$ for all $i\in 1..n$.
\item\label{lem:ilimp:4} If  $\HasTypeMem{\Gamma}{\Sigma}{\MethCall{\e_0}{\m}{\e_1,\ldots,\e_n}}{\mbox{\em\CC}}$, then  
$\HasTypeMem{\Gamma}{\Sigma}{\e_i}{{\mbox{\em\CC}}_i}$ for all $i\in 0..n$ and  {\em $\mtype{\CC_0}{\m}=\funtype{\CC_1\ldots\CC_n}{\DD}$} with ${\mbox{\em\DD}}\leqfj{\mbox{\em\CC}}$.
\item\label{lem:ilimp:5} If  $\HasTypeMem{\Gamma}{\Sigma}{\FieldAssign{\e}{\f_i}{\e'}}{\mbox{\em\CC}}$, then $\HasTypeMem{\Gamma}{\Sigma}{\e}{\mbox{\em\DD}}$ and {\em $\fields{\DD}=\Field{\CC_1}{\f_1}\ldots\Field{\CC_n}{\f_n}$} and $\HasTypeMem{\Gamma}{\Sigma}{\e'}{{\mbox{\em\CC}}_i}$  and ${\mbox{\em\CC}}_i\leqfj{\mbox{\em\CC}}$.
\item\label{lem:ilimp:6} If  $\HasTypeMem{\Gamma}{\Sigma}{\oid}{\mbox{\em\CC}}$, then $\Sigma(\oid)\leqfj{\mbox{\em\CC}}$.

\end{enumerate}
\end{lemma}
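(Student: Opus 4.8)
The plan is to prove all the items simultaneously by induction on the typing derivation, case-splitting on the last rule applied. The crucial observation is that every rule of \refToFigure{FJ-impT} is \emph{syntax-directed} with the sole exception of \rn{t-sub}: for each shape of expression there is exactly one matching rule, and \rn{t-conf} is the only rule whose conclusion is a configuration. Consequently, for a judgement about a fixed syntactic construct only two derivations are possible at the root, namely the syntax-directed rule for that construct or an application of \rn{t-sub}.

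I would first treat the configuration case separately. Since \rn{t-sub} derives judgements about expressions only, a derivation of $\HasTypeMem{\Gamma}{\Sigma}{\Conf{\mu}{\e}}{\CC}$ must end with \rn{t-conf}; reading off its premises and side condition yields $\HasTypeMem{\Gamma}{\Sigma}{\mu(\oid)}{\Sigma(\oid)}$ for all $\oid\in\dom(\mu)$, $\HasType{\Sigma}{\e}{\CC}$, and $\dom(\Sigma)=\dom(\mu)$ immediately, with no subsumption to unwind.

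For the six expression items the argument is uniform and splits into two cases. If the last rule is the syntax-directed rule matching the head constructor (\rn{t-var}, \rn{t-field-access}, \rn{t-new}, \rn{t-invk}, \rn{t-field-assign}, or \rn{t-oid}), then its premises and side conditions give directly all the structural consequences claimed --- the types of the subexpressions and the conditions on \aux{fields} and \aux{mtype} --- while the subtyping bound required in each conclusion holds by \emph{reflexivity} of $\leqfj$, because the syntax-directed rule assigns exactly the pivot type (e.g.\ $\Sigma(\oid)$ for an identifier, or the $i$-th field type for a field access). If instead the last rule is \rn{t-sub}, it derives the judgement at the target type $\CC$ from a judgement for the \emph{same} expression at some $\CC''\leqfj\CC$; applying the induction hypothesis to that premise furnishes the structural consequences unchanged together with a subtyping bound landing in $\CC''$, and one closes by \emph{transitivity} of $\leqfj$ to weaken it into $\CC$.

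The proof is essentially bookkeeping and presents no real obstacle. The only point that deserves a line of care is that the structural data reported by each item --- such as $\fields{\DD}=\Field{\CC_1}{\f_1}\ldots\Field{\CC_n}{\f_n}$ in the field-access and field-assignment cases, or $\mtype{\CC_0}{\m}=\funtype{\CC_1\ldots\CC_n}{\DD}$ in the invocation case --- constrains only the subexpressions and their types, which \rn{t-sub} leaves untouched; hence this data survives unchanged through any number of subsumption steps, and each such step merely appends one more use of transitivity to the final type bound. The reflexivity and transitivity appealed to throughout are exactly those built into $\leqfj$ as the reflexive and transitive closure of $\aux{extends}$.
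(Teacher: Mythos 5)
Your proof is correct and is exactly the standard argument the paper intends: the paper states this inversion lemma without proof (treating it, like the analogous Lemma~\ref{lem:iliu}, as a routine induction on the typing derivation with a case split on the last applied rule), and your handling of the single non-syntax-directed rule \rn{t-sub} via reflexivity/transitivity of $\leqfj$, together with the observation that \rn{t-conf} is the only rule concluding a configuration judgement, is precisely the intended bookkeeping.
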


\begin{lemma}[Substitution]\label{lem:sosimp}
If $\HasTypeMem{\SubstFun\Gamma{{\mbox{\em\CC}}'}\x}{\Sigma} \e {\mbox{\em\CC}}$ and $\HasTypeMem{\Gamma}{\Sigma}{\e'} {{\mbox{\em\CC}}'}$, then $\HasTypeMem{\Gamma}{\Sigma}{\subst\e{\e'}\x} {\mbox{\em\CC}}$.
\end{lemma}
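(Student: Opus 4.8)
The plan is to prove the statement by induction on the derivation of $\HasTypeMem{\SubstFun\Gamma{\CC'}\x}{\Sigma}{\e}{\CC}$, equivalently by structural induction on $\e$ with a case analysis on the last applied typing rule from \refToFigure{FJ-impT}. Since the expressions of imperative $\FJ$ contain no variable binders (method parameters and $\this$ are bound at the level of the class table, not inside expressions), the textual substitution $\subst{\e}{\e'}{\x}$ raises no capture issues, and plain structural induction suffices without strengthening the induction hypothesis or generalising over $\Gamma$. Throughout, the type assignment $\Sigma$ and the environment $\Gamma$ away from $\x$ are untouched, so only the treatment of $\x$ matters.

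The only genuinely informative case is the base case \rn{t-var}, where $\e$ is a variable $\y$ with $(\SubstFun\Gamma{\CC'}\x)(\y)=\CC$. If $\y=\x$, then $\CC=\CC'$ and $\subst\x{\e'}\x=\e'$, so the conclusion $\HasTypeMem{\Gamma}{\Sigma}{\e'}{\CC'}$ is exactly the second hypothesis. If $\y\neq\x$, then $\Gamma(\y)=\CC$ and $\subst\y{\e'}\x=\y$, so $\HasTypeMem{\Gamma}{\Sigma}{\y}{\CC}$ follows by reapplying \rn{t-var}. The case \rn{t-oid} is immediate: an object identifier $\oid$ contains no variables, hence $\subst\oid{\e'}\x=\oid$ and its typing via $\Sigma(\oid)$ is unchanged.

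For each compound expression former --- \rn{t-field-access}, \rn{t-new}, \rn{t-invk} and \rn{t-field-assign} --- the argument is the same congruence step: substitution distributes over the constructor (for instance $\subst{(\FieldAccess{\e_0}{\f_i})}{\e'}{\x}=\FieldAccess{(\subst{\e_0}{\e'}{\x})}{\f_i}$), the subexpressions are typed in the same extended environment $\SubstFun\Gamma{\CC'}\x$, so the induction hypothesis applies to each of them, and reapplying the same typing rule with the identical side conditions (which are class-table lookups through $\fields{\cdot}$ and $\mtype{\cdot}{\cdot}$, independent of $\Gamma$) yields the desired typing of the substituted expression. The subsumption case \rn{t-sub} is handled by applying the induction hypothesis to its premise $\HasTypeMem{\SubstFun\Gamma{\CC'}\x}{\Sigma}{\e}{\DD}$ with $\DD\leqfj\CC$, obtaining $\HasTypeMem{\Gamma}{\Sigma}{\subst\e{\e'}\x}{\DD}$, and then reapplying \rn{t-sub} with the same subtyping judgment.

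I do not expect any real obstacle: this is the routine substitution lemma, with all the content concentrated in the variable base case. The single point worth stating explicitly, which is what makes the inductive cases go through mechanically, is that every side condition of the typing rules depends only on the class table and never on the variable environment, so substitution can neither invalidate a side condition nor change the types assigned to object identifiers.
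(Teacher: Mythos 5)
Your proof is correct: the paper states this substitution lemma without proof, treating it as standard, and your argument is exactly the routine one it implicitly relies on. The case analysis is complete for the expression-level judgment (the configuration rule \rn{t-conf} does not arise here), the variable base case is handled properly in both subcases, and your key observation — that imperative $\FJ$ expressions have no binders and that all side conditions depend only on the class table, never on $\Gamma$ — is precisely what makes the congruence and subsumption cases go through.
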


\noindent
We can prove the soundness of the indexed predicate $\Pred_\CC$ defined by: $\Pred_\Pair{\Sigma}{\CC}(\Conf{\mu}{\e})$ holds if $\HasType{\Sigma'}{\Conf{\mu}{\e}}{\CC}$ for some $\Sigma'$ such that $\Sigma\subseteq\Sigma'$.  The type assignment $\Sigma'$
is needed, since memory can grow during evaluation. 

\begin{theorem}[Soundness]\label{thm:sdimp}
The big-step semantics of \refToFigure{FJ-imp} and the indexed predicate $\Pred_\Pair{\Sigma}{\CC}$ satisfy the conditions {\em \refToSound{preservation}}, {\em \refToSound{progress-ex}} and {\em \refToSound{progress-all}} of \refToSect{sc}.
 \end{theorem}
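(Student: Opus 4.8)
The plan is to verify the three conditions \refToSound{preservation}, \refToSound{progress-ex} and \refToSound{progress-all} by case analysis on (instantiations of) the meta-rules of \refToFigure{FJ-imp}, exactly as in the proofs of \refToThm{sdfjl} and \refToThm{sdos}. The ingredients needed are the inversion lemma \refToLem{ilimp}, the substitution lemma \refToLem{sosimp}, and the (trivial) canonical forms fact that the only values of a class type $\CC$ are object states $\ConstrCall{\DD}{\vals}$ with $\DD\leqfj\CC$. The genuinely new aspect with respect to the functional examples is the memory component: a configuration is $\Conf{\mu}{\e}$, results are $\Conf{\mu}{\oid}$, and the type assignment $\Sigma$ witnessing well-typedness \emph{grows} along the evaluation, which is exactly why the predicate $\Pred_\Pair{\Sigma}{\CC}$ is indexed by a pair and quantifies existentially over extensions $\Sigma'\supseteq\Sigma$.

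For condition \refToSound{preservation}, I would fix a rule with consequence $\eval{\Conf{\mu}{\e}}{\Conf{\mu'}{\oid}}$ and assume $\Conf{\mu}{\e}\in\Pred_\Pair{\Sigma}{\CC}$, that is $\HasType{\Sigma_0}{\Conf{\mu}{\e}}{\CC}$ for some $\Sigma_0\supseteq\Sigma$. Applying \refToLemItem{ilimp}{0} splits this into well-typedness of $\mu$ under $\Sigma_0$ (with $\dom(\Sigma_0)=\dom(\mu)$) and $\HasType{\Sigma_0}{\e}{\CC}$; the remaining inversion items then assign types $\CC_i$ to the subexpressions evaluated in the dependencies. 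I would take the index of the $k$-th premise to be $\Pair{\Sigma}{\CC_k}$ (and $\Pair{\Sigma}{\CC}$ for the dummy continuation), the point being to reconstruct, from the result types of the previous premises (assumed to lie in the corresponding $\Pred$), a single witness under which both the \emph{current} memory and the \emph{current} subexpression are typed. For \rn{new} the fresh object state is typed by \rn{t-new} and the result identifier by \rn{t-oid} after extending the witness with $\oid{:}\CC$, whence \rn{t-conf} yields the result type; for \rn{invk} one substitutes argument values and receiver for the parameters and $\this$ via \refToLem{sosimp}; for \rn{field-access} one reads off the field type using \refToLemItem{ilimp}{2}; and for \rn{field-assign} the crucial observation is that the assigned value has \emph{exactly} the declared field type $\CC_i$ (side condition of \rn{t-field-assign}), so the update $\UpdateMem{\mu''}{\oid}{i}{\oid'}$ preserves well-typedness of the memory.

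Conditions \refToSound{progress-ex} and \refToSound{progress-all} follow the familiar pattern. For \refToSound{progress-ex}, every closed non-result configuration $\Conf{\mu}{\e}$ with $\e$ not an object identifier has $\e$ of the form field access, constructor call, method call or field assignment, each of which is the consequence of a rule; a bare variable cannot be typed in the empty environment, and $\Conf{\mu}{\oid}$ is already a result. For \refToSound{progress-all}, in rules \rn{field-access}, \rn{invk} and \rn{field-assign} one checks that the first premise reduces to an object on which the side conditions hold: by preservation (\refToLem{sr}, justified by \refToSound{preservation}) together with \refToLemItem{ilimp}{2}, \refToLemItem{ilimp}{4}, \refToLemItem{ilimp}{5} and the well-formedness of the class table, the receiver evaluates to an object of a subclass carrying the required fields/method, exactly as in \refToThm{sdfjl}; all other premise meta-variables may be freely instantiated.

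The main obstacle I anticipate is the bookkeeping of the evolving type assignment in \refToSound{preservation}. Since each dependency is evaluated in a memory that is an extension or mutation of the previous one, the witness assignments obtained for the individual premises must be made to cohere: one needs that typing is stable under extension of $\Sigma$ (a weakening property), and that the domain of the witness grows monotonically with the memory, so that a subexpression typed under an earlier assignment remains typable under the later, larger one. Once this monotonicity is isolated as an auxiliary fact, each rule reduces to a routine inversion-plus-substitution computation, and soundness-must then follows from \refToThm{sound-wrong}.
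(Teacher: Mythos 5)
Your plan matches the paper's proof essentially step for step: the same case analysis on meta-rules using \refToLem{ilimp} and \refToLem{sosimp}, the same treatment of \rn{new}, \rn{invk} and \rn{field-assign} (including the key observation that the field update preserves the typing of the memory), and the same arguments for \refToSound{progress-ex} and \refToSound{progress-all} via \refToLem{sr} and the well-formedness of the class table. The coherence of the evolving type assignments that you flag as the main obstacle is exactly what the paper handles (somewhat implicitly) by threading a chain $\Sigma\subseteq\Sigma_1\subseteq\cdots\subseteq\Sigma_n$ of witnesses through the premises, so your proposal is correct and takes the same route.
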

\begin{proof}Condition \refToSound{preservation}.
The proof is by cases on instantiations of meta-rules. 
\\
Rule \rn{field-access}. \refToLemItem{ilimp}{0} applied to  $\HasType{\Sigma}{\Conf{\mu}{\FieldAccess{\e}{\f_i}}}{\CC}$ implies $\HasType{\Sigma}{\mu(\oid)}{\Sigma(\oid)}$ for all $\oid\in\dom(\mu)$ and  $\HasType{\Sigma}{\FieldAccess{\e}{\f_i}}{\CC}$ and $\dom(\Sigma)=\dom(\mu)$.
\refToLemItem{ilimp}{2} applied to $\HasType{\Sigma}{\FieldAccess{\e}{\f_i}}{\CC}$ implies $\HasType{\Sigma}{\e}{\DD}$ and $\fields{\DD}=\Field{\CC_1}{\f_1}\ldots\Field{\CC_n}{\f_n}$ and $\CC_i\leqfj \CC$ where 
$i\in 1..n$.  Since $\eval{\Conf{\mu}{\e}}{\Conf{\mu'}{\oid}}$ is a premise we assume $\HasType{\Sigma'}{\Conf{\mu'}{\oid}}\DD$ with $\Sigma\subseteq\Sigma'$. 
\refToLemItem{ilimp}{0} and \refToLemItem{ilimp}{6} imply
$\Sigma'(\oid)\leqfj\DD$. \refToLemItem{ilimp}{3} allows us to get 
 $\mu'(\oid)=\ConstrCall{\CC'}{\oid_1,\ldots\oid_m}$ with $n\leq m$ and $\CC'\leqfj\DD$ and  $\HasType{\Sigma'}{\oid_i}{\CC_i}$. So we conclude $\HasType{\Sigma'}{\Conf{\mu'}{\oid_i}}\CC$ by rules \rn{t-sub} and \rn{t-conf}.\\
Rule \rn{new}. \refToLemItem{ilimp}{0} applied to  $\HasType{\Sigma}{\Conf{\mu}{\ConstrCall{\CC}{\e_1,\ldots,\e_n}}}{\DD}$ implies $\HasType{\Sigma}{\mu(\oid)}{\Sigma(\oid)}$ for all $\oid\in\dom(\mu)$ and  $\HasType{\Sigma}{\ConstrCall{\CC}{\e_1,\ldots,\e_n}}{\DD}$ and $\dom(\Sigma)=\dom(\mu)$.
\refToLemItem{ilimp}{3} applied to $\HasType{\Sigma}{\ConstrCall{\CC}{\e_1,\ldots,\e_n}}{\DD}$ implies $\CC\leqfj\DD$ and $\fields{\CC}=\Field{\CC_1}{\f_1}\ldots\Field{\CC_n}{\f_n}$ and $\HasType{\Sigma}{\e_i}{\CC_i}$ for all $i\in 1..n$. Since $\eval{\Conf{\mu}{\e_i}}{\Conf{\mu_{i+1}}{\oid_i}}$ is a premise we assume 
$\HasType{\Sigma_i}{\Conf{\mu_{i+1}}{\oid_i}}{\CC_i}$ for all $i\in 1..n$ with $\Sigma\subseteq\Sigma_1\subseteq\cdots\subseteq\Sigma_n$.
\refToLemItem{ilimp}{0} and \refToLemItem{ilimp}{6} imply
$\Sigma_i(\oid_i)\leqfj\CC_i$  for all $i\in 1..n$. 
Using rules \rn{t-oid}, \rn{t-new} and \rn{t-sub} we derive $\HasType{\Sigma_n}{\ConstrCall{\CC}{\oid_1,\ldots,\oid_n}}{\DD}$. We then conclude $\HasType{\Sigma_n,\oid:\DD}{\Conf{\mu_{n+1}}\oid}{\DD}$ by rules \rn{t-oid}and \rn{t-conf}.\\
Rule \rn{invk}. \refToLemItem{ilimp}{0} applied to  $\HasType{\Sigma_0}{\Conf{\mu_0}{\MethCall{\e_0}{\m}{\e_1,\ldots,\e_n}}}{\CC}$ implies $\HasType{\Sigma_0}{\mu_0(\oid)}{\Sigma_0(\oid)}$ for all $\oid\in\dom(\mu_0)$ and  $\HasType{\Sigma_0}{\MethCall{\e_0}{\m}{\e_1,\ldots,\e_n}}{\CC}$ and $\dom(\Sigma_0)=\dom(\mu_0)$.
\refToLemItem{ilimp}{4} applied to $\HasType{\Sigma_0}{\MethCall{\e_0}{\m}{\e_1,\ldots,\e_n}}{\CC}$ implies  
$\HasType{\Sigma_i}{\e_i}{\CC_i}$ for all $i\in 0..n$ and  $\mtype{\CC_0}{\m}=\funtype{\CC_1\ldots\CC_n}{\DD}$ with $\DD\leqfj\CC$. Since $\eval{\Conf{\mu_i}{\e_i}}{\Conf{\mu_{i+1}}{\oid_i}}$ is a premise we assume $\HasType{\Sigma_i}{\Conf{\mu_{i+1}}{\oid_i}}{\CC_i}$ for all $i\in 0..n$ with $\Sigma_0\subseteq\cdots\subseteq\Sigma_n$. \refToLemItem{ilimp}{0} gives $\HasType{\Sigma_i}{\oid_i}{\CC_i}$ for all $i\in 0..n$. 
The typing of the class table implies $\HasType{\x_1{:}\CC_1,\ldots,\x_n{:}\CC_n,\this{:}\CC_0}{\e}{\DD}$.
\refToLem{sosimp} gives $\HasType{\Sigma_n}{\e'}\DD$ where $\e'=\subst{\subst{\subst{\e}{\oid_1}{\x_1}\ldots}{\oid_n}{\x_n}}{\oid_0}{\this}$. Using rules \rn{t-sub} and 
\rn{t-conf} we derive $\HasType{\Sigma_n}{\Conf{\mu_{n+1}}{\e'}}\CC$.
Since $\eval{\Conf{\mu_{n+1}}{\e'}}{\Conf{\mu'}{\oid}}$ is  a premise we conclude $\HasType{\Sigma'}{\Conf{\mu'}{\oid}}\CC$ with $\Sigma_n\subseteq\Sigma'$.\\
Rule \rn{field-assign}. \refToLemItem{ilimp}{0} applied to  $\HasType{\Sigma}{\Conf{\mu}{\FieldAssign{\e}{\f_i}{\e'}}}{\CC}$ implies $\HasType{\Sigma}{\mu(\oid)}{\Sigma(\oid)}$ for all $\oid\in\dom(\mu)$ and  $\HasType{\Sigma}{\FieldAssign{\e}{\f_i}{\e'}}{\CC}$ and $\dom(\Sigma)=\dom(\mu)$. \refToLemItem{ilimp}{5} applied to $\HasType{\Sigma}{\FieldAssign{\e}{\f_i}{\e'}}{\CC}$ implies $\HasType{\Sigma}{\e}{\DD}$ and $\fields{\DD}=\Field{\CC_1}{\f_1}\ldots\Field{\CC_n}{\f_n}$ and $\HasType{\Sigma}{\e'}{\CC_i}$  and $\CC_i\leqfj\CC$. Since $\eval{\Conf{\mu}{\e}}{\Conf{\mu'}{\oid}}$ and $\eval{\Conf{\mu'}{\e'}}{\Conf{\mu''}{\oid'}}$ are premises we assume  $\HasType{\Sigma'}{\Conf{\mu'}{\oid}}\DD$  and $\HasType{\Sigma''}{\Conf{\mu''}{\oid'}}\CC_i$, with $\Sigma\subseteq\Sigma'\subseteq\Sigma''$. 
Notice that $\mu''(\oid)$ and $\UpdateMem{\mu''}{\oid}{i}{\oid'}(\oid)$ have the same types for all $\oid$ by construction. We conclude $\HasType{\Sigma''}{\Conf{\UpdateMem{\mu''}{\oid}{i}{\oid'}}{\oid'}}\CC_i$.

Condition \refToSound{progress-ex}. All the closed expressions which are not values appear as conclusions in the reduction rules.

Condition \refToSound{progress-all}.  Rule \rn{field-access} requires that $\Conf{\mu}{\e}$ reduces to $\Conf{\mu'}{\oid}$ such that $\mu'(\oid)=\ConstrCall{\DD}{\oid_1,\ldots,\oid_m}$, $\fields{\DD}=\Field{\CC_1}{\f_1}\ldots\Field{\CC_m}{\f_m}$, and $i\in 1..m$. Since the configuration in the consequence is well-typed in $\Sigma$, by \refToLemItem{ilimp}{0} and \refToLemItem{ilimp}{2} we have $\HasType{\Sigma}{\e}{\CC}$ and $\fields{\CC}=\Field{\CC_1}{\f_1}\ldots\Field{\CC_n}{\f_n}$, and 
$i\in 1..n$.  The validity of condition \refToSound{preservation} (which assures type preservation by \refToLem{sr}) implies that $\HasType{\Sigma'}{\Conf{\mu'}{\oid}}{\DD}$ with $\Sigma\subseteq\Sigma'$ and $\DD$ subclass of $\CC$, hence, by rules \rn{t-conf} and \rn{t-oid}, $\HasType{\Sigma'}{\mu'(\oid)}{\DD}$, and the well-formedness of the class table implies that $\fields{\DD}=\Field{\CC_1}{\f_1}\ldots\Field{\CC_m}{\f_m}$ with $n\leq m$, hence $i\in 1..m$.\\
Rules \rn{invk} and \rn{field-assign} require that the expression in the first premise reduces to an object identifier for which the side-conditions hold, and this can be proved analogously. Other meta-variables for results can be freely instantiated.\\ 

\end{proof}

\section{The partial evaluation construction} \label{sect:partial-eval}\label{sect:PEV}
In this section, our aim is to provide a \emph{formal} justification that the constructions in \refToSect{constructions} are correct. 
For instance, for the $\Wrong$ semantics we would like to be sure that \emph{all} the cases are covered. 
To this end, we define  a \emph{third construction}, dubbed $\PEval$ for ``partial evaluation'', which makes explicit the \emph{computations} of a big-step semantics, intended as the sequences of execution steps of the naturally associated evaluation algorithm. 
Formally, we obtain a reduction relation on approximated proof trees, so non-termination and stuck computation are distinguished, and both soundness-must and soundness-may can be expressed.

 To this end, first of all we introduce a special result $\Unknown$, so that a judgment $\eval{\conf}{\Unknown}$ (called \emph{incomplete}, whereas a judgment in $\RuleSet$ is \emph{complete})  means that  the evaluation of $\conf$ is not completed yet.  
Analogously to the previous constructions, we define an augmented set of rules $\UnRuleSet$ for the judgment extended with $\Unknown$:

\begin{description}
\item [$\Unknown$ introduction rules] These rules derive $\Unknown$ whenever a rule is partially applied:
for each rule $\rho \equiv \inlinerule{\judg_1\ldots\judg_n}{\judg_{n+1}}{\conf}$ in $\RuleSet$, index $i \in 1..n+1$, and result $\res \in \ResSet$,  we define the rule $\unknownrule{\rho}{i}{\res}$ as\\
\centerline{$
\Rule{\judg_1 \Space \ldots \Space \judg_{i-1} \Space \eval{\ConfSet(\judg_i)}{\res} }{ \eval{\conf}{\Unknown} }
$}

\noindent
We also add an axiom $\Rule{}{\eval{\conf}{\Unknown}}$ for each configuration $\conf \in \ConfSet$.

\item [$\Unknown$ propagation rules] These rules propagate $\Unknown$ analogously to those for divergence and $\Wrong$ propagation:
for each $\rho \equiv \inlinerule{\judg_1\ldots\judg_n}{\judg_{n+1}}{\conf}$ in $\RuleSet$, and index $i \in 1..n+1$, we add the rule $\proprule{\rho}{i}{\Unknown}$ as follows:\\
\centerline{$
\Rule{
	\judg_1 \Space \ldots \Space \judg_{i-1}
	\Space 
	\eval{\ConfSet(\judg_i)}{\Unknown} 
}{ \eval{\conf}{\Unknown} }
$}
\end{description}
Finally, we consider the set $\TreeSet$ of the (finite) proof trees $\tr$ in $\UnRuleSet$. 
Each $\tr$ can be thought as a \emph{partial proof} or \emph{partial evaluation} of the root configuration. 
In particular, we say it is \emph{complete} if it is a proof tree in $\RuleSet$ (that is, it only contains complete judgments), \emph{incomplete} otherwise. 
We define a reduction relation $\ruleAr$   on $\TreeSet$ 
such that, starting from the initial proof tree $\Rule{}{\eval{\conf}{\Unknown}}$, we derive a sequence where, intuitively, at each step we detail the proof (evaluation). 
In this way, a sequence ending with a complete tree  $\Rule{\ldots}{\eval{\conf}{\res}}$ models terminating computation, whereas an infinite sequence (tending to an infinite proof tree) models divergence, and a stuck sequence models a stuck computation. 

The one-step reduction relation $\ruleAr$ on $\TreeSet$ is inductively defined by the rules in \refToFigure{?-rules}.  In this figure  $\#\rho$ denotes the number of premises of $\rho$, and $\rt{\tr}$ the root of $\tr$. We set $\ResSet_\Unknown(\eval\conf u)=u$ where $u\in\ResSet\cup\{\Unknown\}$. Finally, $\sim_i$ is the \emph{equivalence up-to an index} of rules, introduced at the beginning of \refToSect{wrong}.
\begin{figure}[t]
\begin{small}
\begin{math}
\begin{array}{l}
\MetaRuleTree{\res_\Unknown}{\Rule{}{\eval{\res}{\Unknown}}}{\res}{\Rule{}{\eval{\res}{\res}}}{}
\BigSpace
\MetaRuleTree{\conf_\Unknown}{\Rule{}{\eval{\conf}{\Unknown}}}{\proprule{\rho}{1}{\Unknown}}{\Rule{\eval{\conf'}{\Unknown}}{\eval{\conf}{\Unknown}}}{
\ConfSet(\rho) = \conf \\
 \ConfSet(\rho, 1) = \conf'
} \\[3ex]
\MetaRuleTree{\unknownrule{\rho}{i}{\res}}{\Rule{\tr_1\ \ldots\ \tr_i}{\eval{\conf}{\Unknown}}}{\rho'}{\Rule{\tr_1\ \ldots\ \tr_i}{\eval{\conf}{\res}}}{
\rho' \sim_i \rho  \\
\ResSet(\rho', i) = \res \\
\#\rho' = i}\\[3ex]
\MetaRuleTree{\unknownrule{\rho}{i}{\res}}{\Rule{\tr_1\ \ldots\ \tr_i}{\eval{\conf}{\Unknown}}}{\proprule{\rho'}{i+1}{\Unknown}}{\Rule{\tr_1\ \ldots \ \tr_i\ \eval{\conf'}{\Unknown}}{\eval{\conf}{\Unknown}}}{
\rho' \sim_i \rho \\
\ResSet(\rho', i) = \res \\
 \ConfSet(\rho', i+1) = \conf' 
} \\[3ex]
\MetaRuleTree{\proprule{\rho}{i}{\Unknown}}{\Rule{\tr_1\ \ldots\ \tr_i}{\eval{\conf}{\Unknown}}}{\proprule{\rho}{i}{\Unknown}}{\Rule{\tr_1\ \ldots\ \tr_{i-1}\ \tr'_i}{\eval{\conf}{\Unknown}}}{
\tr_i \ruleAr \tr'_i \\
\ResSet_\Unknown(\rt{\tr'_i}) = \Unknown
} \\[3ex]
\MetaRuleTree{\proprule{\rho}{i}{\Unknown}}{\Rule{\tr_1\ \ldots\ \tr_i}{\eval{\conf}{\Unknown}}}{\unknownrule{\rho}{i}{\res}}{\Rule{\tr_1\ \ldots\ \tr_{i-1}\ \tr'_i}{\eval{\conf}{\Unknown}}}{
\tr_i \ruleAr \tr'_i \\
\ResSet_\Unknown(\rt{\tr'_i}) = \res
}
\end{array}
\end{math}
\end{small}
\caption{Reduction relation on $\TreeSet$}\label{fig:?-rules}
\end{figure}
As said above, each reduction step makes ``less incomplete'' the proof tree. Notably, reduction rules apply to nodes with consequence $\eval{\conf}{\Unknown}$, whereas subtrees with root $\eval{\conf}{\res}$ represent terminated evaluation.  
In detail:
\begin{itemize}
\item If the last applied rule is an axiom, and the configuration is a result $\res$, then we can evaluate $\res$ to itself.   
Otherwise, we have to find a rule $\rho$ with $\conf$ in the consequence and start evaluating the first premise of such rule.
\item If the last applied rule is $\unknownrule{\rho}{i}{\res}$, then all subtrees are complete, hence, to continue the evaluation, we have to find another rule $\rho'$, having, for each $k \in 1..i$, as $k$-th premise the root of $\tr_k$. 
Then there are two possibilities: if there is an $i+1$-th premise, we start evaluating it, otherwise, we propagate to the conclusion the result $\res$ of $\tr_i$.
\item If the last applied rule is a propagation rule $\proprule{\rho}{i}{\Unknown}$, then we simply propagate the step made by $\tr_i$. 
\end{itemize}
In \refToFigure{ex-pev-reduction} we report an example of $\PEval$ reduction. 
\begin{figure}
\begin{small}
\begin{align*}
\Rule{}{\eval{(\Lam{\x}\x)\appop n}{\Unknown}}  
&\ruleAr \Rule{\eval{\Lam{\x}\x}{\Unknown}}{\eval{(\Lam{\x}\x)\appop n}{\Unknown}}
 \ruleAr \Rule{\eval{\Lam{\x}\x}{\Lam{\x}\x}}{\eval{(\Lam{\x}\x)\appop n}{\Unknown}}
 \ruleAr \Rule{\eval{\Lam{\x}\x}{\Lam{\x}\x}\quad \eval n \Unknown}{\eval{(\Lam{\x}\x)\appop n}{\Unknown}} \\
&\ruleAr \Rule{\eval{\Lam{\x}\x}{\Lam{\x}\x}\quad \eval n n}{\eval{(\Lam{\x}\x)\appop n}{\Unknown}}
 \ruleAr \Rule{\eval{\Lam{\x}\x}{\Lam{\x}\x}\quad \eval n n \quad \eval n \Unknown}{\eval{(\Lam{\x}\x)\appop n}{\Unknown}} \\
&\ruleAr \Rule{\eval{\Lam{\x}\x}{\Lam{\x}\x}\quad \eval n n \quad \eval n n}{\eval{(\Lam{\x}\x)\appop n}{\Unknown}}
 \ruleAr \Rule{\eval{\Lam{\x}\x}{\Lam{\x}\x}\quad \eval n n \quad \eval n n}{\eval{(\Lam{\x}\x)\appop n}{n}}
\end{align*}
\end{small}
\caption{The evaluation in $\PEval$ of $(\Lam{\x}\x) \appop n$. } \label{fig:ex-pev-reduction} 
\end{figure}

The remaining of this section is split into three parts. In \refToSect{ppt} we show properties  of proof trees starting from a formal account of them. The coherence of our approach through the equivalence of the three constructions is the content of  \refToSect{etwp}. Lastly \refToSect{spev} discusses soundness of $\PEval$ semantics.

\subsection{Properties of proof trees}\label{sect:ppt}

We give a formal account of proof trees, which is useful to state and to prove following technical results. The account
follows \cite{Courcelle83,Dagnino19}, but it is adjusted to our specific setting.

Set $\NPos$ the set of positive natural numbers and $\LabelSet$ a set of labels. 
A \emph{tree} labelled in $\LabelSet$ is a partial function $\fun{\tr}{\List{\NPos}}{\LabelSet}$ such that 
$\dom(\tr)$ is not empty, and, for each $\alpha \in \List{\NPos}$ and $n \in \NPos$, 
if $\alpha n \in \dom(\tr)$ then 
$\alpha \in \dom(\tr)$ and, for all $k \le n$, $\alpha k \in \dom(\tr)$. 
Given a tree $\tr$ and $\alpha \in \dom(\tr)$, set $\trbr_\tr(\alpha)=\max\{n \in \N \mid \alpha n \in \dom(\tr)\}$ the \emph{branching} of $\tr$ at $\alpha$, and 
 $\subtree{\tr}{\alpha}$ the \emph{subtree} of $\tr$ rooted at $\alpha$, that is, $\subtree{\tr}{\alpha}(\beta) = \tr(\alpha\beta)$. In particular, 
 $\tr(\EList)=\rt\tr $
 is the \mbox{\emph{root} of $\tr$. }
 Finally, we write $\Rule{\tr_1\Space \ldots \Space \tr_n}{\judg}$ for the tree $\tr$ defined by $\tr(\EList)=\judg$, and $\tr(i\alpha) = \tr_{i}(\alpha)$  \mbox{for all $i \in 1..n$.}
 
Assume now that labels in $\LabelSet$ are the judgments of an inference system $\GenRuleSet$ (where premises of rules are sorted). 
Then, a tree labelled in $\LabelSet$ is a proof tree in $\GenRuleSet$
if, for each $\alpha \in \dom(\tr)$, there is a rule $\Rule{\tr(\alpha 1)\Space\ldots\Space\tr(\alpha\trbr_\tr(\alpha))}{\tr(\alpha)} \in \GenRuleSet$.

The following proposition assures two key properties of proof trees in $\UnRuleSet$. First, if there is some $\Unknown$, then it is propagated to ancestor nodes. 
Second, for each level of the tree there is at most one $\Unknown$. We set $\Len{\alpha}$ the length of $\alpha \in \List{\NPos}$.
\begin{proposition} \label{prop:tree-unknown}
Let $\tr$ be a proof tree in $\UnRuleSet$, then the following hold:
\begin{enumerate}
\item\label{prop:tree-unknown:1} for all $\alpha n \in \dom(\tr)$, if $\ResSet_\Unknown(\tr(\alpha n)) = \Unknown$ then $\ResSet_\Unknown(\tr(\alpha)) = \Unknown$.
\item\label{prop:tree-unknown:2} for all $n \in \N$, there is at most one $\alpha \in \dom(\tr)$ with $\Len{\alpha} = n$ such that $\ResSet_\Unknown(\tr(\alpha)) = \Unknown$.
\end{enumerate}
\end{proposition}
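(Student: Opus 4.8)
The plan is to reduce both statements to a single local observation about the rule schemes of $\UnRuleSet$. Inspecting the rules --- those of $\RuleSet$, the $\Unknown$-axioms, the $\Unknown$-introduction rules $\unknownrule{\rho}{i}{\res}$, and the $\Unknown$-propagation rules $\proprule{\rho}{i}{\Unknown}$ --- I would note that the only rules possessing a premise whose result is $\Unknown$ are the propagation rules, that each propagation rule carries \emph{exactly one} such premise (the last, $\eval{\ConfSet(\judg_i)}{\Unknown}$), and that a propagation rule always has an $\Unknown$-judgment as its conclusion. All the remaining rules either have no premises or have premises with results in $\ResSet$.

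For part~\ref{prop:tree-unknown:1} I would argue purely locally, with no induction. Given $\alpha n \in \dom(\tr)$, the definition of proof tree guarantees $\alpha \in \dom(\tr)$ and that $\tr(\alpha)$ is the conclusion of a rule of $\UnRuleSet$ whose premises are exactly the labels $\tr(\alpha 1), \ldots, \tr(\alpha\,\trbr_{\tr}(\alpha))$, one of which is $\tr(\alpha n)$. If $\ResSet_\Unknown(\tr(\alpha n)) = \Unknown$, that rule has a premise with result $\Unknown$, so by the observation it is a propagation rule, whence $\ResSet_\Unknown(\tr(\alpha)) = \Unknown$. The same observation yields a companion fact I would record for the next part: since a propagation rule has a unique $\Unknown$-premise and no other rule has any, \emph{every node of $\tr$ has at most one child whose result is $\Unknown$}.

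For part~\ref{prop:tree-unknown:2} I would then do an induction on the level $n$. The case $n = 0$ is immediate, $\EList$ being the only position of length $0$. For the step, assume level $n$ carries at most one $\Unknown$-node and take $\gamma, \delta \in \dom(\tr)$ of length $n+1$ with $\ResSet_\Unknown(\tr(\gamma)) = \ResSet_\Unknown(\tr(\delta)) = \Unknown$. Writing $\gamma = \alpha k$ and $\delta = \beta h$, part~\ref{prop:tree-unknown:1} gives that the parents $\alpha$ and $\beta$, both of length $n$, are $\Unknown$-nodes, so the induction hypothesis forces $\alpha = \beta$; the companion fact then says this common parent has at most one $\Unknown$-child, whence $\gamma = \delta$.

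I do not expect any genuine obstacle here: the proof is a finite case analysis on rule shapes followed by a short induction on depth, with no reasoning on the reduction relation $\ruleAr$ or on derivations in $\RuleSet$. The only point demanding care is the structural observation itself --- establishing that propagation rules are the \emph{sole} source of $\Unknown$-premises and that they contribute \emph{exactly one} --- since both parts of the proposition rest entirely on it.
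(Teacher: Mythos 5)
Your proposal is correct and follows essentially the same route as the paper: part~1 via the local observation that only $\Unknown$-propagation rules have an $\Unknown$-premise (exactly one, and with $\Unknown$ conclusion), and part~2 by induction on the level, using part~1 to identify the common parent and the uniqueness of the $\Unknown$-premise to conclude. No gaps.
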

\begin{proof}
To prove 1, it is enough to note that the only rules having a premise $\judg$ with $\ResSet_\Unknown(\judg) = \Unknown$ are $\Unknown$-propagation rules, which also have conclusion $\judg'$ with $\ResSet_\Unknown(\judg') = \Unknown$; hence the thesis is immediate.
To prove 2, we proceed by induction on $n$. 
For $n = 0$, there is only one  $\alpha \in \List{\NPos}$ with $\Len{\alpha} = 0$, hence the thesis is trivial.
Consider $\alpha = \alpha' k \in \dom(\tr)$ with $\Len{\alpha} = n+1$; 
if $\ResSet_\Unknown(\tr(\alpha)) = \Unknown$, then, by point 1, $\ResSet_\Unknown(\tr(\alpha')) = \Unknown$, and $\alpha'$ is unique by induction hypothesis. 
Therefore, another node $\beta \in \dom(\tr)$, with $\Len{\beta} = n+1$ and $\ResSet_\Unknown(\tr(\beta)) = \Unknown$, must satisfy $\beta = \alpha' h$ for some $h \in \NPos$;
hence, since $\tr$ is a proof tree, $\tr(\alpha)$ and $\tr(\beta)$ are two premises of the same rule with $\Unknown$ as result, thus they must coincide, since the rules have at most one premise \mbox{with $\Unknown$.}
\end{proof}

\begin{corollary} \label{cor:complete-tree}
Let $\tr$ be a finite proof tree in $\UnRuleSet$, then 
$\ResSet_\Unknown(\rt\tr) \in \ResSet$ implies $\tr$ is complete. 
\end{corollary}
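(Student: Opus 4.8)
The plan is to show that the result component $\Unknown$ cannot occur at \emph{any} node of $\tr$, and then to observe that a proof tree whose judgments are all complete is necessarily a proof tree in $\RuleSet$, which is exactly the definition of completeness.

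First I would prove, by induction on the length $\Len{\alpha}$ of the address $\alpha \in \dom(\tr)$, that $\ResSet_\Unknown(\tr(\alpha)) \in \ResSet$ for every node $\alpha$. For the base case $\alpha = \EList$ we have $\tr(\EList) = \rt\tr$, so the claim is precisely the hypothesis $\ResSet_\Unknown(\rt\tr) \in \ResSet$. For the inductive step, consider $\alpha = \beta n \in \dom(\tr)$; by induction hypothesis $\ResSet_\Unknown(\tr(\beta)) \in \ResSet$, hence $\ResSet_\Unknown(\tr(\beta)) \ne \Unknown$. By the contrapositive of point~\ref{prop:tree-unknown:1} of \refToProp{tree-unknown}, from $\ResSet_\Unknown(\tr(\beta)) \ne \Unknown$ we obtain $\ResSet_\Unknown(\tr(\beta n)) \ne \Unknown$; since the result component of any judgment lies in $\ResSet \cup \{\Unknown\}$, this yields $\ResSet_\Unknown(\tr(\alpha)) \in \ResSet$, as needed. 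The induction is well-founded because every address in $\dom(\tr)$ is a finite sequence in $\List{\NPos}$.

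Then I would conclude as follows. By construction, every rule in $\UnRuleSet \setminus \RuleSet$ (the $\Unknown$-introduction axioms and rules $\unknownrule{\rho}{i}{\res}$, and the $\Unknown$-propagation rules $\proprule{\rho}{i}{\Unknown}$) has a conclusion of shape $\eval{\conf}{\Unknown}$. Since the previous step shows that no node of $\tr$ carries the result $\Unknown$, at each $\alpha \in \dom(\tr)$ the rule justifying the instance $\Rule{\tr(\alpha 1)\,\ldots\,\tr(\alpha\trbr_\tr(\alpha))}{\tr(\alpha)}$ must already belong to $\RuleSet$. Hence $\tr$ is a proof tree in $\RuleSet$, that is, $\tr$ is complete.

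I do not expect a genuine obstacle here: the statement is an almost immediate consequence of \refToProp{tree-unknown}, whose point~\ref{prop:tree-unknown:1} propagates $\Unknown$ upward, so a root free of $\Unknown$ forbids $\Unknown$ at every descendant. The only points deserving a little care are formalising ``complete'' through the node labelling $\tr$, and recording the observation that a complete conclusion can only be produced by a rule of $\RuleSet$, since all the additional rules of $\UnRuleSet$ introduce $\Unknown$ in their conclusion.
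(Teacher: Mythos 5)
Your proof is correct and follows exactly the route the paper intends: the paper states this as an unproved corollary of \refToProp{tree-unknown}, whose point~1 (read contrapositively, as you do) propagates the absence of $\Unknown$ from the root down to every node, after which completeness follows because all rules in $\UnRuleSet\setminus\RuleSet$ have $\Unknown$ in their conclusion. Your write-up just makes the downward induction and that final observation explicit.
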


As said above, the definition of $\ruleAr$ given in \refToFigure{?-rules} nicely models a ``small-step'' version of an interpreter driven by the big-step rules. In other words, the one-step reduction relation on $\TreeSet$ specifies an algorithm of incremental proof/ evaluation.\footnote{Non-determinism can only be caused by intrinsic non-determinism of the big-step semantics, if any.} However, to carry out some proofs on \PEval\ semantics, it is convenient to consider a more abstract relation.

The relation $\TreeOrd$ on (finite or infinite) trees\footnote{A slight variation of analogous relations is considered in \cite{Courcelle83,Dagnino19}.} labelled by semantic judgements is defined by:
\begin{quote}
\begin{tabular}{ll}
$\tr \TreeOrd \tr'$ if 
& $\dom(\tr) \subseteq \dom(\tr')$\\
&for all $\alpha \in \dom(\tr)$, $\ConfSet(\tr(\alpha)) = \ConfSet(\tr'(\alpha))$ and $\ResSet_\Unknown(\tr(\alpha)) \in \ResSet$
implies $\subtree{\tr}{\alpha} = \subtree{\tr'}{\alpha}$.
\end{tabular}
\end{quote}
Intuitively, $\tr\TreeOrd\tr'$ means that $\tr'$ can be obtained from $\tr$ by adding new branches or replacing some $\Unknown$s with results. 
We use $\StrTreeOrd$ for the strict version of $\TreeOrd$. 
It is easy to check that $\TreeOrd$ is a partial order and, if $\tr\TreeOrd\tr'$, then, for all $\alpha \in \dom(\tr)$, $\subtree{\tr}{\alpha} \TreeOrd \subtree{\tr'}{\alpha}$. 
{The following proposition shows some, less trivial, properties of $\TreeOrd$. }

\begin{proposition} \label{prop:tree-ord}
The following properties hold:
\begin{enumerate}
\item\label{prop:tree-ord:1} for all trees $\tr$ and $\tr'$, if $\tr \TreeOrd \tr'$ and $\ResSet_\Unknown(\rt\tr) \in \ResSet$, then $\tr = \tr'$
\item\label{prop:tree-ord:2} for each increasing sequence $(\tr_i)_{i \in \N}$ of trees, there is a least upper bound $\tr = \Treelub \tr_n$.
\end{enumerate}
\end{proposition}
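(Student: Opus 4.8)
The plan is to treat the two claims separately: the first is an immediate instantiation of the definition of $\TreeOrd$, whereas the second calls for an explicit construction of the least upper bound.

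For the first claim, I would apply the defining clause of $\TreeOrd$ at the root, i.e.\ at the empty position $\EList$. Since $\dom(\tr)$ is nonempty and closed under prefixes, $\EList \in \dom(\tr)$, and $\tr(\EList) = \rt\tr$. From $\tr \TreeOrd \tr'$ together with the hypothesis $\ResSet_\Unknown(\rt\tr) \in \ResSet$ we then obtain $\subtree{\tr}{\EList} = \subtree{\tr'}{\EList}$; as $\subtree{\tr}{\EList} = \tr$ and $\subtree{\tr'}{\EList} = \tr'$ by definition of subtree, this is exactly $\tr = \tr'$.

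For the second claim, the plan is to build $\tr$ by hand and then verify that it is an upper bound and that it is the least one. I would put $\dom(\tr) = \bigcup_{n} \dom(\tr_n)$; this is a legitimate tree domain, since it is nonempty and the closure axiom is inherited from the $\tr_n$ (any $\alpha k \in \dom(\tr)$ lies in some $\dom(\tr_n)$, which already contains $\alpha$ and every $\alpha h$ with $h \le k$). For the labels, I exploit that along an increasing chain the configuration component at a fixed position is invariant, so I may set $\ConfSet(\tr(\alpha)) = \ConfSet(\tr_n(\alpha))$ for any $n$ with $\alpha \in \dom(\tr_n)$; and I set $\ResSet_\Unknown(\tr(\alpha)) = \res$ when some $\tr_n$ satisfies $\ResSet_\Unknown(\tr_n(\alpha)) = \res \in \ResSet$, and $\ResSet_\Unknown(\tr(\alpha)) = \Unknown$ otherwise. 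The label $\tr(\alpha)$ is then taken to be $\eval{\ConfSet(\tr(\alpha))}{\ResSet_\Unknown(\tr(\alpha))}$.

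The fact underpinning this definition, which I expect to be the real crux, is a \emph{freezing property} read off directly from $\TreeOrd$: if $\ResSet_\Unknown(\tr_i(\alpha)) \in \ResSet$, then $\subtree{\tr_i}{\alpha} = \subtree{\tr_j}{\alpha}$ for every $j \ge i$ (via $\tr_i \TreeOrd \tr_j$), and likewise $\subtree{\tr_j}{\alpha} = \subtree{\tr_i}{\alpha}$ whenever $j \le i$ and $\tr_j$ already carries a result at $\alpha$. Thus once a node acquires a genuine result its whole subtree is fixed throughout the chain, which makes $\ResSet_\Unknown(\tr(\alpha))$ well defined, since two indices producing a result at $\alpha$ produce the same subtree and hence the same result. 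Using the same property I would then discharge the remaining obligations: for $\tr_i \TreeOrd \tr$, domain inclusion and configuration agreement are immediate, and at a position $\alpha$ with $\ResSet_\Unknown(\tr_i(\alpha)) \in \ResSet$ freezing forces $\subtree{\tr}{\alpha} = \subtree{\tr_i}{\alpha}$; for leastness, given any upper bound $\tr'$, domain inclusion and configurations are direct, and if $\ResSet_\Unknown(\tr(\alpha)) = \res \in \ResSet$ then some $\tr_i$ realizes this result, so $\subtree{\tr}{\alpha} = \subtree{\tr_i}{\alpha} = \subtree{\tr'}{\alpha}$ by $\tr_i \TreeOrd \tr$ and $\tr_i \TreeOrd \tr'$. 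The only genuine labour is the bookkeeping behind these subtree equalities, namely checking that the two subtrees share domain and labels at every position below $\alpha$, splitting on whether the relevant index sits above or below $i$ in the chain; this is precisely where the freezing property is invoked repeatedly.
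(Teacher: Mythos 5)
Your proposal is correct and follows essentially the same route as the paper: point~1 by instantiating the definition of $\TreeOrd$ at the root, and point~2 by constructing the candidate least upper bound positionwise (the paper packages your ``freezing property'' as the choice of the least index $k_\alpha$ at which the label at $\alpha$ is most defined, then sets $\tr(\alpha)=\tr_{k_\alpha}(\alpha)$, but the underlying observation and the subsequent verification of the upper-bound and leastness conditions are the same). No gaps.
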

\begin{proof}
Point 1 is immediate by definition of $\TreeOrd$.
To prove point 2, first note that, since for all $n \in \N$, $\tr_n \TreeOrd \tr_{n+1}$, 
for all $\alpha \in \List{\NPos}$ we have that, for all $n \in \N$, if $\tr_n(\alpha)$ is defined, then, 
for all $k \ge n$, $\ConfSet(\tr_k(\alpha)) = \ConfSet(\tr_n(\alpha))$, and, 
if $\ResSet_\Unknown(\tr_n(\alpha)) \in \ResSet$, then $\tr_k(\alpha) = \tr_n(\alpha)$. 
Hence, for all $n \in \N$, there are only three possibilities for $\tr_n(\alpha)$: it is either undefined, or equal to $\eval{\conf}{\Unknown}$, or equal to $\eval{\conf}{\res}$, where the configuration is the same. 
Let us denote by $k_\alpha$ the least index where $\tr_n(\alpha)$ is most defined, hence, for all $n \ge k_\alpha$, we have that $\tr_n(\alpha) = \tr_{k_\alpha}(\alpha)$.
Then, consider a tree $\tr$ defined by  $\tr(\alpha) = \tr_{k_\alpha}(\alpha)$. 
It is easy to check that $\dom(\tr) = \bigcup_{n \in \N} \dom(\tr_n)$. 
We now check that, for all $n \in \N$, $\tr_n \TreeOrd \tr$. 
For all $\alpha \in \dom(\tr_n)$, we have $\alpha \in \dom(\tr)$ and we distinguish two cases:
\begin{itemize}
\item if $\tr_n(\alpha) = \eval{\conf}{\Unknown}$, then $k_\alpha \ge n$, hence, since $\tr_n \TreeOrd \tr_{k_\alpha}$, we get $\ConfSet(\tr(\alpha))=\ConfSet(\tr_{k_\alpha}(\alpha)) = \ConfSet(\tr_n(\alpha)) = \conf$
\item if $\tr_n(\alpha) = \eval{\conf}{\res}$, then $k_\alpha \le n$, hence, since $\tr_{k_\alpha} \TreeOrd \tr_n$, we get $\ConfSet(\tr(\alpha)) = \ConfSet(\tr_{k_\alpha}(\alpha))  = \ConfSet(\tr_n(\alpha)) = \conf$, thus we have only to check that $\subtree{\tr_n}{\alpha} = \subtree{\tr}{\alpha}$, that is easy,
because, for all $\beta \in \dom(\subtree{\tr_n}{\alpha})$, we have $\subtree{\tr_n}{\alpha}(\beta) = \tr_n(\alpha\beta) = \eval{\conf'}{\res'}$, hence $k_{\alpha\beta} \ge n$, hence $\subtree{\tr}{\alpha}(\beta) = \tr(\alpha\beta) = \tr_{k_{\alpha\beta}}(\alpha\beta) = \tr_n(\alpha\beta)$, as needed. 
\end{itemize}
This proves that $\tr$ is an upper bound of the sequence, we have still to prove that it is the least one. 
To this end, let $\tr'$ be an upper bound of the sequence: we have to show that $\tr \TreeOrd \tr'$. 
Since $\tr'$ is an upper bound, for all $n \in \N$ we have $\dom(\tr_n) \subseteq \dom(\tr')$, hence $\dom(\tr) \subseteq \dom(\tr')$, and, especially, for all $\alpha \in \List{\NPos}$ we have $\tr_{k_\alpha} \TreeOrd \tr'$. 
Hence, for all $\alpha \in \dom(\tr)$, we have $\ConfSet(\tr(\alpha)) = \ConfSet(\tr_{k_\alpha}(\alpha)) = \ConfSet(\tr'(\alpha))$, and, if $\ResSet_\Unknown(\tr(\alpha)) = \res$, since $\tr_{k_\alpha} \TreeOrd \tr$ and $\tr_{k_\alpha} \TreeOrd \tr'$, we have $\subtree{\tr_{k_\alpha}}{\alpha}  = \subtree{\tr}{\alpha}$ and $\subtree{\tr_{k_\alpha}}{\alpha} =  \subtree{\tr'}{\alpha}$, hence $\subtree{\tr}{\alpha} = \subtree{\tr'}{\alpha}$ as needed.  
\end{proof}

Finally, the next proposition formally proves that $\TreeOrd$ is an abstraction of $\ruleAr$. 
\begin{proposition} \label{prop:tree-ord-arrow}
For all $\tr,\tr'\in\TreeSet$, the following hold:
\begin{enumerate}
\item\label{prop:tree-ord-arrow:1} if $\tr\ruleAr \tr'$ then $\tr \StrTreeOrd \tr'$
\item\label{prop:tree-ord-arrow:2} if $\tr \TreeOrd \tr'$ then $\tr \ruleArStar \tr'$.
\end{enumerate}
\end{proposition}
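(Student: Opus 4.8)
The two statements point in opposite directions, so I would prove point~\ref{prop:tree-ord-arrow:1} by a direct induction on the derivation of $\tr\ruleAr\tr'$ and point~\ref{prop:tree-ord-arrow:2} by a well-founded induction whose inductive step is a one-step factorisation lemma. For point~\ref{prop:tree-ord-arrow:1}, the key remark is that every rule of \refToFigure{?-rules} modifies the tree in exactly one of two ways: it turns an $\Unknown$ at a single node into a result leaving all subtrees untouched (rules $\res_\Unknown$ and the first $\unknownrule{\rho}{i}{\res}$ variant), or it attaches one fresh leaf $\eval{\conf'}{\Unknown}$ as a new last child of a node (rules $\conf_\Unknown$ and the second $\unknownrule{\rho}{i}{\res}$ variant). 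In the former $\dom(\tr)=\dom(\tr')$ and the only relabelled node carries $\Unknown$ in $\tr$, so the defining implication of $\TreeOrd$ is vacuous there and trivial elsewhere; in the latter $\dom(\tr)\subsetneq\dom(\tr')$ and again the affected node carries $\Unknown$. Both yield $\tr\TreeOrd\tr'$ with $\tr\neq\tr'$, i.e.\ $\tr\StrTreeOrd\tr'$. The two propagation rules are the inductive cases: there the $i$-th subtree reduces, $\subtree{\tr}{i}\ruleAr\subtree{\tr'}{i}$, so the induction hypothesis gives $\subtree{\tr}{i}\StrTreeOrd\subtree{\tr'}{i}$, and since $\TreeOrd$ is a congruence under plugging a subtree below an otherwise unchanged $\Unknown$-conclusion, we again obtain $\tr\StrTreeOrd\tr'$.

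For point~\ref{prop:tree-ord-arrow:2} I would induct on the measure
\[ d(\tr,\tr')=\bigl(|\dom(\tr')|-|\dom(\tr)|\bigr)+\#\{\alpha\in\dom(\tr')\mid \ResSet_\Unknown(\tr'(\alpha))\in\ResSet,\ \alpha\notin\dom(\tr)\ \text{or}\ \ResSet_\Unknown(\tr(\alpha))=\Unknown\}, \]
which, assuming $\tr\TreeOrd\tr'$, is a natural number that vanishes exactly when $\tr=\tr'$; the ``only if'' direction uses the first point of \refToProp{tree-ord} to exclude a node being a result in $\tr$ while $\Unknown$ in $\tr'$. The crux is the factorisation lemma: \emph{if $\tr\StrTreeOrd\tr'$ then $\tr\ruleAr\tr''\TreeOrd\tr'$ for some $\tr''$}. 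Granting it, point~\ref{prop:tree-ord-arrow:1} gives $\tr\StrTreeOrd\tr''$, and since the step either supplies one missing node or fills one missing result, $d(\tr'',\tr')<d(\tr,\tr')$; the induction hypothesis then yields $\tr''\ruleArStar\tr'$, hence $\tr\ruleArStar\tr'$, the base case $d=0$ being $\tr=\tr'$.

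To prove the factorisation lemma I would first locate the node at which to reduce. Since $\tr\neq\tr'$, the first point of \refToProp{tree-ord} forces $\ResSet_\Unknown(\rt\tr)=\Unknown$, so $\tr$ is incomplete; by \refToProp{tree-unknown} its $\Unknown$-nodes form a single descending branch from the root to a deepest $\Unknown$-node $\alpha$, and every intermediate $\Unknown$-node carries a propagation rule, so the branch always proceeds through the \emph{last} child. By Corollary~\ref{cor:complete-tree} the children of $\alpha$ are complete, hence, being result nodes of $\tr$, are copied verbatim into $\tr'$; propagating the last-child remark up the branch shows that $\tr$ and $\tr'$ agree outside $\subtree{\tr}{\alpha}$ and, in particular, would be equal if they agreed on $\subtree{\cdot}{\alpha}$ -- a contradiction, so $\subtree{\tr}{\alpha}\neq\subtree{\tr'}{\alpha}$. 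A case analysis on $\subtree{\tr'}{\alpha}$ (the same number of children with a result at the root, versus strictly more children) then selects exactly one applicable base rule at $\alpha$ -- $\res_\Unknown$, $\conf_\Unknown$, or one of the two $\unknownrule{\rho}{i}{\res}$ variants, instantiated with the rule actually used at $\alpha$ in $\tr'$ -- giving a local reduction $\subtree{\tr}{\alpha}\ruleAr\subtree{\tr''}{\alpha}\TreeOrd\subtree{\tr'}{\alpha}$. Wrapping this base step in the propagation rules along the branch realises $\tr\ruleAr\tr''$ with $\tr''\TreeOrd\tr'$.

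I expect the main obstacle to be the factorisation lemma, and within it the two structural facts about incomplete trees -- that the $\Unknown$-nodes form one descending branch through last children, and that $\tr$ and $\tr'$ genuinely differ at the deepest $\Unknown$-node -- because these are precisely what guarantee the existence of a reduction step that is at once legal and directed towards $\tr'$. Once they are in place, the case analysis picking the base rule and the verification that $d$ strictly decreases are routine.
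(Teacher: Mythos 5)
Your proof is correct, and point~\ref{prop:tree-ord-arrow:1} is handled exactly as in the paper (a case/induction analysis on the rules of \refToFigure{?-rules}). For point~\ref{prop:tree-ord-arrow:2}, however, you take a genuinely different route. The paper inducts on the structure of the target tree $\tr'$: after reducing to the case where both roots carry $\Unknown$, it observes via \refToPropItem{tree-unknown}{2} that only the \emph{last} subtree of $\tr$ can be incomplete, so the first $k-1$ subtrees already coincide with those of $\tr'$; the outer induction hypothesis completes the $k$-th subtree, and an inner induction on the difference in branching at the root appends the missing premises one at a time. You instead induct on a global distance measure and isolate a one-step factorisation lemma ($\tr \StrTreeOrd \tr'$ implies $\tr \ruleAr \tr'' \TreeOrd \tr'$), proved by locating the deepest $\Unknown$-node as the unique redex and firing one base rule there under a stack of propagation steps. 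Your route costs more bookkeeping (well-definedness and strict decrease of the measure, the structural analysis of the $\Unknown$-branch), but it buys a sharper picture of the operational content: the factorisation lemma pinpoints exactly where the next step of the interpreter happens, which the paper's recursive argument leaves implicit. One imprecision worth fixing: the claim that $\tr$ and $\tr'$ ``agree outside $\subtree{\tr}{\alpha}$'' is not literally true — $\tr'$ may already carry results, or extra children, at ancestors of $\alpha$, since $\TreeOrd$ does not constrain nodes where the smaller tree has $\Unknown$. What is true, and what your argument actually uses, is the conditional: if $\subtree{\tr}{\alpha} = \subtree{\tr'}{\alpha}$, then each ancestor of $\alpha$, having its last-in-$\tr$ child still labelled $\Unknown$ in $\tr'$, must apply a propagation rule in $\tr'$ with the same premises, so $\tr = \tr'$ follows by working up the branch; hence $\subtree{\tr}{\alpha} \ne \subtree{\tr'}{\alpha}$ and your case analysis at $\alpha$ goes through unchanged.
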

\begin{proof}
Point 1 can be easily proved by induction on the definition of $\ruleAr$. 
To prove point 2, we proceed by induction on $\tr'$.
We can assume $\ResSet_\Unknown(\rt\tr) = \Unknown$, since in the other case, by \refToProp{tree-ord}(\ref{prop:tree-ord:1}), we have $\tr = \tr'$, hence the thesis is trivial. 
We can also assume $\ResSet_\Unknown(\rt{\tr'}) = \Unknown$, since, if $\tr' = \Rule{\tr'_1\Space\ldots\Space\tr'_n}{\eval{\conf}{\res}}$, then we always have $\tr'' = \Rule{\tr'_1\Space\ldots\Space\tr'_n}{\eval{\conf}{\Unknown}} \ruleAr \tr'$ and $\tr \TreeOrd \tr''$. 
Now, if $\tr' = \Rule{}{\eval{\conf}{\Unknown}}$ (base case), then, since $\dom(\tr) \subseteq \dom(\tr')$ and $\ConfSet(\rt\tr) = \ConfSet(\rt{\tr'})$ by definition of $\TreeOrd$, we have $\tr = \tr'$, hence the thesis is trivial. 

Let us assume $\tr = \Rule{\tr_1\Space\ldots\tr_k}{\eval{\conf}{\Unknown}}$ and $\tr' = \Rule{\tr'_1\Space\ldots\Space\tr'_i}{\eval{\conf}{\Unknown}}$, with, necessarily,  $k \le i$ by definition of $\TreeOrd$. 
By \refToPropItem{tree-unknown}{2}, at most $\tr_k$ is incomplete, hence, for all $h < k$, $\tr_h$ is complete, hence $\ResSet_\Unknown(\rt{\tr_h}) \in \ResSet$, thus, by definition of $\TreeOrd$, we have $\tr_h = \tr'_h$. 
We now show, concluding the proof, by induction on $i-k$, that $\tr\ruleArStar \tr'$. 
Since $\tr_k \TreeOrd \tr'_k$, by $HI$, 
we get $\tr_k \ruleArStar \tr'_k$, hence $\tr \ruleArStar \tr'' = \Rule{\tr'_1\Space\ldots\Space\tr'_k}{\eval{\conf}{\Unknown}}$.
If $i-k = 0$, hence $i = k$, we have $\tr'' = \tr'$, hence the thesis is immediate. 
If $i-k > 0$, hence $i > k$, again by \refToProp{tree-unknown}(\ref{prop:tree-unknown:2}), we have $\ResSet_\Unknown(\rt{\tr_k}) \in \ResSet$, hence $\tr'' \ruleAr \Rule{\tr'_1\Space\ldots\Space\tr'_k\Space\eval{\conf'}{\Unknown}}{\eval{\conf}{\Unknown}} = \hat{\tr}$, where $\conf' = \ConfSet(\rt{\tr'_{k+1}})$.
Finally, by $HI$, we get $\hat{\tr} \ruleArStar \tr'$, \mbox{as needed. }
\end{proof}

\subsection{Equivalence of traces, $\Wrong$ and $\PEval$ semantics}\label{sect:etwp}

We prove  the three constructions to be equivalent to each other, thus providing a coherency result of the approach.
 In particular, first we show that $\PEval$ is conservative with respect to $\RuleSet$, and this ensures the three constructions are equivalent for finite computations. 
Then, we prove traces and $\Wrong$ constructions to be equivalent to $\PEval$ for diverging and stuck computations, respectively, and this ensures they  
 cover all possible cases. 
 
 \begin{theorem} \label{thm:eq-finite_2} 
$\valid{\RuleSet}{\eval{\conf}{\res}}$ iff $\Rule{}{\eval{\conf}{\Unknown}} \ruleArStar \tr$, where $\rt\tr = \eval{\conf}{\res}$.
\end{theorem}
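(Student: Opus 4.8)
The plan is to prove the two implications separately, exploiting the characterisation of complete trees (Corollary~\ref{cor:complete-tree}) for the right-to-left direction, and the fact that $\TreeOrd$ abstracts $\ruleAr$ (\refToProp{tree-ord-arrow}) for the left-to-right direction. Throughout I use two facts that hold by construction of $\UnRuleSet$: the initial tree $\Rule{}{\eval{\conf}{\Unknown}}$ is a single-node proof tree in $\UnRuleSet$ (there is an $\Unknown$-axiom for every configuration), and the original rules are included in $\UnRuleSet$, so that \emph{complete} trees are precisely the proof trees in $\RuleSet$. Moreover every tree reachable from $\Rule{}{\eval{\conf}{\Unknown}}$ via $\ruleAr$ lies in $\TreeSet$, hence is a \emph{finite} proof tree in $\UnRuleSet$.

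For the right-to-left direction, assume $\Rule{}{\eval{\conf}{\Unknown}} \ruleArStar \tr$ with $\rt\tr = \eval{\conf}{\res}$ and $\res \in \ResSet$. Since $\tr \in \TreeSet$ is a finite proof tree in $\UnRuleSet$ and $\ResSet_\Unknown(\rt\tr) = \res \in \ResSet$, Corollary~\ref{cor:complete-tree} yields that $\tr$ is complete, i.e. a proof tree in $\RuleSet$. Its root being $\eval{\conf}{\res}$, this immediately gives $\valid{\RuleSet}{\eval{\conf}{\res}}$.

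For the left-to-right direction, assume $\valid{\RuleSet}{\eval{\conf}{\res}}$. By the inductive interpretation of $\RuleSet$ there is a \emph{finite} proof tree $\tr$ in $\RuleSet$ with $\rt\tr = \eval{\conf}{\res}$; since $\RuleSet \subseteq \UnRuleSet$, we have $\tr \in \TreeSet$. I would then check directly from the definition of $\TreeOrd$ that the initial tree $\tr_0 = \Rule{}{\eval{\conf}{\Unknown}}$ satisfies $\tr_0 \TreeOrd \tr$: indeed $\dom(\tr_0) = \{\EList\} \subseteq \dom(\tr)$, and at the unique node $\EList$ we have $\ConfSet(\tr_0(\EList)) = \conf = \ConfSet(\tr(\EList))$, while $\ResSet_\Unknown(\tr_0(\EList)) = \Unknown \notin \ResSet$, so the side condition on subtrees holds vacuously. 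Applying \refToPropItem{tree-ord-arrow}{2} gives $\tr_0 \ruleArStar \tr$, and since $\rt\tr = \eval{\conf}{\res}$ this is exactly the required reduction. I do not expect a genuinely hard step here: the real content has been front-loaded into \refToProp{tree-ord-arrow} and Corollary~\ref{cor:complete-tree}, and the only point needing care is finiteness — derivability in $\RuleSet$ guarantees a finite $\tr$ so that \refToProp{tree-ord-arrow} applies, while $\ruleArStar$ never leaves $\TreeSet$ so that the hypothesis of Corollary~\ref{cor:complete-tree} is met.
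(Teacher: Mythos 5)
Your proof is correct and follows essentially the same route as the paper's: the right-to-left direction via Corollary~\ref{cor:complete-tree}, and the left-to-right direction by observing $\Rule{}{\eval{\conf}{\Unknown}} \TreeOrd \tr$ for the finite $\RuleSet$-proof tree $\tr$ and invoking \refToPropItem{tree-ord-arrow}{2}. The only difference is that you spell out the (easy) verification of $\tr_0 \TreeOrd \tr$, which the paper leaves implicit.
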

\begin{proof}
$\valid{\RuleSet}{\eval{\conf}{\res}}$ implies $\Rule{}{\eval{\conf}{\Unknown}} \ruleArStar \tr$ where $\rt\tr = \eval{\conf}{\res}$. By definition, if $\valid{\RuleSet}{\eval{\conf}{\res}}$ holds, then there is a finite proof tree $\tr$ in $\RuleSet$ such that $\rt\tr = \eval{\conf}{\res}$. 
Since $\RuleSet \subseteq \UnRuleSet$, $\tr$ is a proof tree in $\UnRuleSet$ as well;
furthermore, $\Rule{}{\eval{\conf}{\Unknown}} \TreeOrd \tr$, hence by \refToPropItem{tree-ord-arrow}{2} we get the thesis. \\
$\Rule{}{\eval{\conf}{\Unknown}} \ruleArStar \tr$ where $\rt\tr = \eval{\conf}{\res}$ implies $\valid{\RuleSet}{\eval{\conf}{\res}}$. Since $\rt\tr = \eval{\conf}{\res}$, by \refToCor{complete-tree}, $\tr$ is complete, hence, 
it is a proof tree in $\RuleSet$, thus $\valid{\RuleSet}{\eval{\conf}{\res}}$ holds.
\end{proof}
 
 To relate trace semantics with $\PEval$, first we show that, in the $\PEval$ semantics, proof trees obtained as limits of infinite sequences of $\ruleAr$ steps can be characterised as the infinite proof trees which are \emph{well-formed}, in the sense that there is a unique infinite path, entirely labelled by incomplete judgments.
Formally, $\tr$ is \emph{well-formed}, if, for all $n \in \N$, there is $\alpha \in \dom(\tr)$ such that $\Len{\alpha} = n$ and $\tr(\alpha) = \eval{\conf}{\Unknown}$ for some $\conf \in \ConfSet$, and, 
for all $\alpha \in \dom(\tr)$, if $\ResSet_\Unknown(\tr(\alpha)) \in \ResSet$, then $\subtree{\tr}{\alpha}$ is finite. 

\begin{proposition} \label{prop:pt-seq}
The following properties hold:
\begin{enumerate}
\item\label{prop:pt-seq:1} for each increasing sequence $(\tr_n)_{n \in \N}$ of proof trees, the least upper bound $\Treelub \tr_n$ is a \mbox{proof tree}
\item\label{prop:pt-seq:2} for each strictly increasing sequence $(\tr_n)_{n \in \N}$ of finite proof trees, the least upper bound $\Treelub \tr_n$ is infinite and well-formed
\item\label{prop:pt-seq:3} for each well-formed infinite proof tree $\tr$, there is a strictly increasing sequence $(\tr_n)_{n \in \N}$ of finite proof trees such that $\tr = \Treelub \tr_n$.
\end{enumerate}
\end{proposition}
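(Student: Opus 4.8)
The plan is to rely throughout on the explicit description of the least upper bound $\Treelub\tr_n$ obtained in the proof of \refToProp{tree-ord}(\ref{prop:tree-ord:2}): for each $\alpha$ there is a least index $k_\alpha$ after which the label $\tr_n(\alpha)$ is constant and equal to $(\Treelub\tr_n)(\alpha)$, and $\dom(\Treelub\tr_n)=\bigcup_n\dom(\tr_n)$. The other main ingredients are \refToProp{tree-unknown} (propagation of $\Unknown$ to ancestors, and at most one $\Unknown$ per level), the finite branching of proof trees in $\UnRuleSet$ (the number of premises of every rule is bounded by $b$, by condition {\sf BP}, so $\trbr_\tr(\alpha)\le b$ at every node), and K\"onig's lemma for finitely branching trees.

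For point \ref{prop:pt-seq:1}, write $\tr=\Treelub\tr_n$ and fix $\alpha\in\dom(\tr)$ with branching $m=\trbr_\tr(\alpha)\le b$. I would take $N=\max\{k_\alpha,k_{\alpha 1},\dots,k_{\alpha m}\}$, so that $\tr_N$ agrees with $\tr$ on $\alpha$ and on all its children; since $\dom(\tr_N)\subseteq\dom(\tr)$ forces $\trbr_{\tr_N}(\alpha)=m$, the node $\alpha$ has in $\tr_N$ exactly the same conclusion and the same premises as in $\tr$. As $\tr_N$ is a proof tree, the required rule of $\UnRuleSet$ exists at $\alpha$, and since $\alpha$ was arbitrary, $\tr$ is a proof tree.

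For point \ref{prop:pt-seq:2}, I first show $\tr=\Treelub\tr_n$ has infinite domain: otherwise $\dom(\tr)$ is finite, the domains $\dom(\tr_n)$ stabilise, and beyond that point each strict step $\tr_n\StrTreeOrd\tr_{n+1}$ must turn some $\Unknown$ into a result; only finitely many such changes are possible on a finite domain, contradicting strictness. Well-formedness has two clauses. The clause ``every result node has a finite subtree'' follows from \refToProp{tree-ord}(\ref{prop:tree-ord:1}): if $\ResSet_\Unknown(\tr(\alpha))\in\ResSet$ then, by the lub description, $\subtree{\tr_{k_\alpha}}{\alpha}=\subtree{\tr}{\alpha}$, and the left-hand side is finite because $\tr_{k_\alpha}$ is. For the clause ``each level carries an $\Unknown$ node'', I apply K\"onig's lemma to the infinite, finitely branching proof tree $\tr$ to obtain an infinite path; every node on it must carry $\Unknown$, since a result node has a finite subtree and hence cannot lie on an infinite path. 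Together with \refToProp{tree-unknown}(\ref{prop:tree-unknown:2}) this path is the unique thread of $\Unknown$ nodes, one per level.

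Point \ref{prop:pt-seq:3} is the converse and, I expect, the main obstacle, since the approximants must be built by hand. Let $\alpha_n$ be the unique $\Unknown$ node at level $n$ (existence and uniqueness from well-formedness and \refToProp{tree-unknown}(\ref{prop:tree-unknown:2})); these form an infinite path $\alpha_0=\EList\prec\alpha_1\prec\cdots$. I would define $\tr_n$ by cutting $\tr$ at $\alpha_n$, namely $\dom(\tr_n)=\{\beta\in\dom(\tr)\mid \alpha_n\text{ is not a proper prefix of }\beta\}$ with $\tr_n(\beta)=\tr(\beta)$; thus $\tr_n$ keeps the finite initial path together with the finite subtrees hanging off the result-children of $\alpha_0,\dots,\alpha_{n-1}$ (finite by well-formedness), and turns $\alpha_n$ into an $\Unknown$ leaf. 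Each $\tr_n$ is finite; it is a proof tree because every retained node keeps all of its children, so its local rule is unchanged, while $\alpha_n$ becomes an instance of the $\Unknown$ axiom $\Rule{}{\eval{\conf}{\Unknown}}$. The sequence is strictly increasing, since $\tr_{n+1}$ additionally contains the children of $\alpha_n$; and checking $\tr_n\TreeOrd\tr_{n+1}$ and $\tr_n\TreeOrd\tr$ only uses that a result node is incomparable with $\alpha_n$, so its subtree is untouched by the cut. Finally $\Treelub\tr_n=\tr$: every $\beta\in\dom(\tr)$ lies in $\dom(\tr_n)$ as soon as $n>\Len{\beta}$ (then $\alpha_n$ is too long to be a prefix of $\beta$), whence $\dom(\Treelub\tr_n)=\dom(\tr)$ and the labels agree, so equality follows by antisymmetry of $\TreeOrd$. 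The delicate bookkeeping is precisely in these incomparability arguments, which separate the unique $\Unknown$ path from the finite result-subtrees grafted onto it.
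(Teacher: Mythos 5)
Your proof is correct and follows essentially the same route as the paper's: the same pointwise description of $\Treelub\tr_n$ via the stabilisation indices $k_\alpha$ for points \ref{prop:pt-seq:1} and \ref{prop:pt-seq:2}, and the same ``cut at the unique $\Unknown$ node of level $n$'' construction for point \ref{prop:pt-seq:3}. The only cosmetic difference is that for the ``one $\Unknown$ per level'' clause of well-formedness you invoke K\"onig's lemma on the finitely branching infinite tree, where the paper runs the equivalent induction on levels directly (using that all siblings of the $\Unknown$ node are result nodes with finite subtrees).
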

\begin{proof}
To prove point 1, set $\tr = \Treelub \tr_n$ and recall from \refToPropItem{tree-ord}{2} that $\tr(\alpha) = \tr_{k_\alpha}(\alpha)$, where $k_\alpha \in \N$ is the least index $n$ where $\tr_n(\alpha)$ is most defined. 
Note that, for all $\alpha \in \dom(\tr)$, $\trbr_\tr(\alpha)$ is finite, since, by definition of $\tr$, we have $\trbr_\tr(\alpha) = \max \{ \trbr_{\tr_n}(\alpha) \mid \alpha \in \dom(\tr_n) \}$, and this value is bounded because $\trbr_{\tr_n}(\alpha)$ is the number of premises of a rule, which is bounded by definition, see condition {\sf BP} at page \pageref{bp}.
Then, since $\trbr_\tr(\alpha)$ is finite, there is an index $n \in \N$ such that $\trbr_\tr(\alpha) = \trbr_{\tr_n}(\alpha)$, and, especially, this holds for $n = \max\{k_\alpha, k_{\alpha \trbr_\tr(\alpha)}\}$.
Therefore, we have that $\Rule{\tr(\alpha 1)\Space\ldots\Space\tr(\alpha \trbr_\tr(\alpha))}{\tr(\alpha)} = \Rule{\tr_n(\alpha 1)\Space\ldots\Space\tr_n(\alpha\trbr_\tr(\alpha))}{\tr_n(\alpha)} \in \UnRuleSet$, since $\tr_n$ is a proof tree in $\UnRuleSet$. 

To prove point 2, set $\tr = \Treelub \tr_n$, then, by point 1, we have that $\tr$ is a proof tree, hence we have only to check it is infinite and well-formed. 
Since the sequence is strictly increasing, we have that for all $n \in \N$ there is $h > n$ such that $\dom(\tr_n) \subset \dom(\tr_h)$. 
This can be proved by induction on the number of $\Unknown$ in $\tr$, which is finite since $\tr_n$ is finite, noting that, if $\dom(\tr_n) = \dom(\tr_{n+1})$, since $\tr_n \StrTreeOrd \tr_{n+1}$, there is at least one node $\alpha \in \dom(\tr_n)$ such that $\ResSet_\Unknown(\tr_n(\alpha)) = \Unknown$ and $\ResSet_\Unknown(\tr_{n+1}(\alpha)) = \res$. 
Therefore, $\dom(\tr) = \bigcup_{n \in \N} \dom(\tr_n)$ is infinite, that is, $\tr$ is infinite. 
To show that $\tr$ is well-formed, first note that for all $\alpha \in \dom(\tr)$ such that $\tr(\alpha) = \eval{\conf}{\res}$, since $\tr_{k_\alpha} \TreeOrd \tr$ and $\tr_{k_\alpha}(\alpha) = \tr(\alpha)$, by definition of $\TreeOrd$, we get $\subtree{\tr_{k_\alpha}}{\alpha} = \subtree{\tr}{\alpha}$, hence, $\subtree{\tr}{\alpha}$ is finite. 
Then, we still have only to prove that, for each $n \in \N$, there is $\alpha \in \dom(\tr)$ such that $\Len{\alpha} = n$ and  $\ResSet_\Unknown(\tr(\alpha)) = \Unknown$. 
We proceed by induction on $n$. 
For $n = 0$, we have $\ResSet_\Unknown(\rt\tr) = \Unknown$, since, otherwise, we would have $\ResSet_\Unknown(\rt{\tr_{k_\EList}}) = \res$, hence, by \refToPropItem{tree-ord}{1}, we would get $\tr_{k_\EList} = \tr_{k_\EList + 1}$ which is not possible, because the sequence is strictly increasing. 
Now, by induction hypothesis, we know there is $\alpha \in \dom(\tr)$ such that $\Len{\alpha} = n$ and $\ResSet_\Unknown(\tr(\alpha)) = \Unknown$. 
By \refToPropItem{tree-unknown}{1}, we also know that, if there is $\beta \in \dom(\tr)$ such that $\Len{\beta} = n+1$ and $\ResSet_\Unknown(\tr(\beta)) = \Unknown$, then $\beta = \alpha h$ for some $h \in \NPos$. 
If such $h$ did not exist, then, for all $k \in \NPos$ such that $\alpha k \in \dom(\tr)$, we would have $\ResSet_\Unknown(\tr(\alpha k)) \in \ResSet$, hence, as we have just proved, $\subtree{\tr}{\alpha k}$  would be finite, and this would imply that $\tr$ is finite, which is not possible. 
Hence, $\tr$ is well-formed as needed. 

To prove point 3, for all $n \in \N$, consider the proof tree $\tr_n$ defined as follows:
let $\alpha_n \in \dom(\tr)$ be the (unique thanks to \refToPropItem{tree-unknown}{2}) node such that $\Len{\alpha_n} = n$ and $\ResSet_\Unknown(\tr(\alpha_n)) = \Unknown$, 
then define $\tr_n(\beta) = \tr(\beta)$ for all $\beta \ne \alpha_n \beta'$, with $\beta' \in \NEList{\NPos}$, and undefined otherwise.  
We have $\tr_n \TreeOrd \tr_{n+1}$, since, by \refToPropItem{tree-unknown}{1}, $\alpha_{n+1} = \alpha_n i$ for some $i \in \NPos$. 
Finally, by construction, we have $\tr = \Treelub \tr_n$, as needed. 
\end{proof}

Then, we define a function $\mathsf{erase}$ that transforms a proof tree in $\TrRuleSet$ in one in $\UnRuleSet$, by essentially erasing traces.
The definition is given coinductively by the following equations:

\begin{quote}
\begin{small}
$\EraseTr{\DecoratedTree{\res}{\Rule{}{\evaltr{\res}{\res}}}}=\DecoratedTree{\res}{\Rule{}{\eval{\res}{\res}}}$\\[2ex]
$\EraseTr{\DecoratedTree{\tracerule{\rho}{t_1,\,\ldots,\,t_{n+1}}}{\Rule{\tr_1\Space\ldots\Space\tr_{n+1}}{\evaltr{\conf}{t'\cdot \res}}}}=\DecoratedTree{\rho}{\Rule{\EraseTr{\tr_1}\Space\ldots\Space\EraseTr{\tr_{n+1}}}{\eval{\conf}{\res}}}$\\[2ex]
$\EraseTr{\DecoratedTree{\divtracerule{\rho}{i}{t_1,\,\ldots,\,t_{i-1}}{t}}{\Rule{\tr_1\Space\ldots\Space\tr_i}{\evaltr{\conf}{t'}}}}=\DecoratedTree{\proprule{\rho}{i}{\Unknown}}{\Rule{\EraseTr{\tr_1}\Space\ldots\Space\EraseTr{\tr_i}}{\eval{\conf}{\Unknown}}}$
\end{small}
\end{quote}

By construction, $\EraseTr{\tr}$ is a proof tree in $\UnRuleSet$ and it is infinite and well-formed iff $\tr$ \mbox{is infinite.}

\begin{lemma} \label{lem:trace-to-unknown}
If $\valid{\TrRuleSet}{\evaltr{\conf}{t}}$ holds by an infinite proof tree $\tr^\trlb$, then there is a sequence $(\tr'_n)_{n \in \N}$ such that $\tr'_n \ruleAr \tr'_{n+1}$ for all $n\in \N$, $\tr'_0 = \Rule{}{\eval{\conf}{\Unknown}}$,  and $\Treelub \tr'_n = \EraseTr{\tr^\trlb}$.
\end{lemma}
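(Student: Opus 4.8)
The plan is to reduce the statement to the abstract machinery already established about $\TreeOrd$ and the $\mathsf{erase}$ function, keeping explicit manipulation of $\ruleAr$ to a minimum. First I would invoke the property of $\mathsf{erase}$ recorded just before the lemma: since $\tr^\trlb$ is an infinite proof tree, $\EraseTr{\tr^\trlb}$ is an infinite, well-formed proof tree in $\UnRuleSet$. Its root configuration is $\conf$ and its root judgement is $\eval{\conf}{\Unknown}$: indeed, $\tr^\trlb$ being infinite means, by \refToProp{evaltr-fin-tr}, that $t$ is infinite, so the root of $\tr^\trlb$ is a divergence propagation rule, which $\mathsf{erase}$ sends to an $\Unknown$-propagation rule with conclusion $\eval{\conf}{\Unknown}$.

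Next I would apply \refToPropItem{pt-seq}{3} to $\EraseTr{\tr^\trlb}$, obtaining a strictly increasing sequence $(\tr_n)_{n\in\N}$ of finite proof trees in $\UnRuleSet$ with $\Treelub \tr_n = \EraseTr{\tr^\trlb}$. Because $\tr_0 \TreeOrd \EraseTr{\tr^\trlb}$ forces $\ConfSet(\rt{\tr_0}) = \conf$, and since $\Unknown \notin \ResSet$ makes the subtree clause of the definition vacuous, the initial tree satisfies $\Rule{}{\eval{\conf}{\Unknown}} \TreeOrd \tr_0$ directly. Thus I have the increasing chain $\Rule{}{\eval{\conf}{\Unknown}} \TreeOrd \tr_0 \StrTreeOrd \tr_1 \StrTreeOrd \cdots$, without needing the precise syntactic shape of $\tr_0$.

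I would then convert this $\TreeOrd$-chain into a sequence of single $\ruleAr$ steps. By \refToPropItem{tree-ord-arrow}{2}, each comparison $\Rule{}{\eval{\conf}{\Unknown}} \TreeOrd \tr_0$ and $\tr_n \TreeOrd \tr_{n+1}$ yields a \emph{finite} reduction in $\ruleArStar$; concatenating these finite chains produces one sequence $(\tr'_m)_{m\in\N}$ of one-step reductions with $\tr'_0 = \Rule{}{\eval{\conf}{\Unknown}}$. This concatenation is genuinely infinite because $(\tr_n)$ is strictly increasing, so $\tr_n \ne \tr_{n+1}$ guarantees at least one $\ruleAr$ step between consecutive $\tr_n$ (each $\ruleAr$ step being strict by \refToPropItem{tree-ord-arrow}{1}). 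To identify the limit, note that $(\tr'_m)$ is increasing, hence admits a least upper bound $\tr^\star$ by \refToPropItem{tree-ord}{2}. Every $\tr_n$ occurs among the $\tr'_m$, so $\tr_n \TreeOrd \tr^\star$ and therefore $\EraseTr{\tr^\trlb} = \Treelub \tr_n \TreeOrd \tr^\star$; conversely each $\tr'_m$ lies $\TreeOrd$-below the next $\tr_n$ in the chain, so $\tr'_m \TreeOrd \EraseTr{\tr^\trlb}$ and thus $\tr^\star \TreeOrd \EraseTr{\tr^\trlb}$. Antisymmetry of $\TreeOrd$ gives $\tr^\star = \EraseTr{\tr^\trlb}$, as required.

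The step I expect to demand the most care is the bookkeeping that refining a $\ruleArStar$-chain into single $\ruleAr$ steps preserves the least upper bound, that is, the cofinality argument showing that $(\tr_n)$ stays cofinal inside the concatenated one-step sequence so that the two limits agree. Everything else is a direct appeal to the previously established results on $\mathsf{erase}$, on $\TreeOrd$, and on the interplay between $\TreeOrd$ and $\ruleAr$.
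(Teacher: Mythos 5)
Your proposal is correct and follows essentially the same route as the paper's proof: erase the infinite trace derivation to obtain a well-formed infinite proof tree in $\UnRuleSet$, apply \refToPropItem{pt-seq}{3} to get a strictly increasing sequence of finite approximants, refine each $\StrTreeOrd$ step into nonempty $\ruleAr$ chains via \refToPropItem{tree-ord-arrow}{2}, and conclude that the least upper bound is preserved. You supply two details the paper leaves implicit — the justification that the root of $\EraseTr{\tr^\trlb}$ is $\eval{\conf}{\Unknown}$ and the cofinality argument that refining into single steps does not change the limit — but the underlying argument is the same.
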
 
\begin{proof}
Since $\tr^\trlb$ is infinite,  $\EraseTr{\tr^\trlb} = \tr$ is a well-formed infinite proof treee in $\UnRuleSet$ and, by \refToPropItem{pt-seq}{3}, there is a strictly increasing sequence $(\tr_n)_{n \in \N}$ of finite proof trees in $\UnRuleSet$ such that $\Treelub \tr_n = \tr$ and $\tr_0 = \Rule{}{\eval{\conf}{\Unknown}}$. 
By \refToPropItem{tree-ord-arrow}{2},  since for all $n \in \N$ we have $\tr_n \StrTreeOrd \tr_{n+1}$, we get $\tr_n \ruleArStar \tr_{n+1}$, and, since $\tr_n \ne \tr_{n+1}$, this sequence of steps is not empty. 
Hence, we can construct a sequence $(\tr'_n)_{n \in \N}$ such that $\tr'_0 = \Rule{}{\eval{\conf}{\Unknown}}$, $\tr'_n \ruleAr \tr'_{n+1}$ and $\Treelub \tr'_n = \tr$, as needed. 
\end{proof}

\begin{lemma} \label{lem:unknown-to-trace}
If $\tr$ is a well-formed infinite proof tree in $\UnRuleSet$ with $\rt\tr = \eval{\conf}{\Unknown}$, then there is an infinite trace $t_\tr \in \InfList{\ConfSet}$ and an infinite proof tree $\tr^\trlb$ in $\TrRuleSet$ such that $\rt{\tr^\trlb} = \evaltr{\conf}{t_\tr}$. 
\end{lemma}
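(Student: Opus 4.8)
The plan is to read off the unique infinite branch of $\tr$ guaranteed by well-formedness, extract from it a guarded system of equations defining the candidate infinite trace, and then build $\tr^\trlb$ coinductively, mirroring the construction used in \refToLem{div-consistency}. First I would analyse the branch. Since $\tr$ is well-formed, by \refToPropItem{tree-unknown}{2} each level carries at most one node with $\Unknown$, and these nodes form a single infinite path $\alpha_0 = \EList, \alpha_1, \alpha_2, \ldots$ with $\tr(\alpha_m) = \eval{\conf_m}{\Unknown}$ and $\conf_0 = \conf$. At each $\alpha_m$ the applied rule cannot be an axiom (it would be a leaf) nor an $\Unknown$-introduction rule (all its premises are complete, so the path could not continue); hence it is an $\Unknown$-propagation rule $\proprule{\rho_m}{i_m}{\Unknown}$, with $\rho_m \equiv \inlinerule{\judg^m_1\ldots\judg^m_{n_m}}{\judg^m_{n_m+1}}{\conf_m}$, the branch continuing through its last ($i_m$-th) premise, so $\alpha_{m+1} = \alpha_m i_m$ and $\conf_{m+1} = \ConfSet(\judg^m_{i_m})$. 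The remaining premises $\judg^m_k$ for $k < i_m$ are complete judgments $\eval{\ConfSet(\judg^m_k)}{\ResSet(\judg^m_k)}$; by well-formedness their subtrees are finite, so by \refToCor{complete-tree} they are complete, i.e.\ proof trees in $\RuleSet$. Thus $\valid{\RuleSet}{\eval{\ConfSet(\judg^m_k)}{\ResSet(\judg^m_k)}}$, and by \refToThm{eq-finite_0} there is a finite trace $t^m_k \in \List{\ConfSet}$ with $\valid{\TrRuleSet}{\evaltr{\ConfSet(\judg^m_k)}{t^m_k \cdot \ResSet(\judg^m_k)}}$.

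Next I would set up the trace. Introducing one variable $X_m$ per branch level (indexing by depth $m$ rather than by configuration, to avoid clashes when a configuration recurs along the branch), I take the equations
\[
X_m = \conf_m \cdot t^m_1 \cdot \ResSet(\judg^m_1) \cdot\ \cdots\ \cdot\, t^m_{i_m-1}\cdot \ResSet(\judg^m_{i_m-1}) \cdot X_{m+1}.
\]
Each right-hand side is prefixed by the nonempty finite sequence beginning with $\conf_m$, so the system is guarded and admits a unique solution $s$, with each $s(m) \in \InfList{\ConfSet}$ infinite since the recursion has no base case. I set $t_\tr = s(0)$.

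Finally I would establish $\valid{\TrRuleSet}{\evaltr{\conf}{t_\tr}}$ by coinduction; since $t_\tr$ is infinite, \refToProp{evaltr-fin-tr} then forces the derivation to be an infinite proof tree $\tr^\trlb$, as required. I take the candidate set $S = \{\Pair{\conf_m}{s(m)} \mid m \in \N\} \cup \{\Pair{\conf'}{t'} \mid t' \text{ finite and } \valid{\TrRuleSet}{\evaltr{\conf'}{t'}}\}$ and check it is consistent with the coinductive interpretation of $\TrRuleSet$. For a pair $\Pair{\conf_m}{s(m)}$ the witnessing rule is $\divtracerule{\rho_m}{i_m}{t^m_1, \ldots, t^m_{i_m-1}}{s(m+1)}$: its conclusion is $\evaltr{\conf_m}{s(m)}$ by the defining equation for $s(m)$, its first $i_m-1$ premises are the finite trace judgments above (hence in $S$), and its last premise is $\evaltr{\conf_{m+1}}{s(m+1)} \in S$. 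For a pair with $t'$ finite, \refToProp{evaltr-fin-tr} shows the top rule is a trace-introduction rule (or axiom) whose premises are again finite derivable trace judgments, hence in $S$. Therefore $\Pair{\conf}{t_\tr} = \Pair{\conf_0}{s(0)} \in S$ is derivable, yielding the desired $\tr^\trlb$.

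I expect the main obstacle to lie in the first paragraph: precisely justifying that the unique infinite path exists and that \emph{every} node on it is an $\Unknown$-propagation rule, and correctly bookkeeping the indices (by depth $m$ rather than by configuration) so that the guarded system is well defined even when configurations repeat along the branch. The coinductive step itself is essentially identical to the one in \refToLem{div-consistency} and should pose no difficulty once the branch data $(\rho_m, i_m, t^m_1, \ldots, t^m_{i_m-1})$ has been extracted.
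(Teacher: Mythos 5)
Your proof is correct and follows essentially the same route as the paper: extract the unique infinite $\Unknown$-path, observe that every node on it is an $\Unknown$-propagation rule with finite (hence complete, hence $\RuleSet$-derivable) earlier premises, and then obtain the infinite trace by a guarded system of equations plus coinduction. The only difference is that the paper invokes \refToLem{div-consistency} as a black box on the set $\Spec = \{\conf_n \mid n \in \N\}$, whereas you inline that lemma's proof with variables indexed by depth rather than by configuration — a harmless (and arguably cleaner) variation, since the lemma as stated only needs one chosen rule per configuration anyway.
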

\begin{proof}
Let us denote by $\alpha_n$ the (unique thanks to \refToPropItem{tree-unknown}{2}) node in $\tr$ such that $\Len{\alpha_n} = n$ and $\tr(\alpha_n) = \eval{\conf_n}{\Unknown}$, 
and denote by $\hat{\rho}_n \equiv \proprule{\rho_n}{k_n}{\Unknown}$ the rule applied at $\alpha_n$, where $\rho_n \equiv \inlinerule{\judg_1^n\ldots\judg_{h_n}^n}{\judg_{h_n+1}^n}{\conf_n}$ and $k_n \le h_n +1$. 
Since $\tr$ is well-formed, for all $k < k_n$, we have $\subtree{\tr}{\alpha_n k}$ is finite, hence it is a valid proof tree in $\RuleSet$, thus $\valid{\RuleSet}{\judg_k^n}$ holds, and $\ConfSet(\judg_{k_n}^n) = \conf_{n+1}$ by \refToProp{tree-unknown}. 
Then, the set $\Spec = \{ \conf_n \mid  n \in \N\}$ with the rules $\rho_n$ and indexes $k_n$ satisfies the hypothesis of \refToLem{div-consistency}, hence we get that, for all $n \in \N$, there is an infinite trace $t_n \in \InfList{\ConfSet}$ such that $\valid{\TrRuleSet}{\evaltr{\conf_n}{t_n}}$. 
Now, set $t = t_0$, since $\conf = \conf_0$ by construction, we have just proved that $\evaltr{\conf}{t}$ is derivable in $\TrRuleSet$ by a proof tree $\tr^\trlb$ and, since $t$ is infinite, $\tr^\trlb$ is infinite as well. 
\end{proof}

\begin{theorem}  \label{thm:eq-trace-unknown} 
$\valid{\TrRuleSet}{\evaltr{\conf}{t}}$ for some $t \in \ConfSet^\omega$
iff  $\Rule{}{\eval{\conf}{\Unknown}} \ruleArInf$.
\end{theorem}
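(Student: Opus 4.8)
The plan is to prove both implications by bridging through well-formed infinite proof trees in $\UnRuleSet$, using the two transfer lemmas \refToLem{trace-to-unknown} and \refToLem{unknown-to-trace} together with \refToProp{evaltr-fin-tr} and \refToProp{pt-seq}. The reading of the right-hand side is that there exists an infinite $\ruleAr$-reduction sequence starting from the initial tree $\Rule{}{\eval{\conf}{\Unknown}}$.

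For the forward implication, I would assume $\valid{\TrRuleSet}{\evaltr{\conf}{t}}$ for some $t \in \ConfSet^\omega$. Since $t$ is infinite, \refToProp{evaltr-fin-tr} (the equivalence between $1$ and $3$, read contrapositively) tells us this judgement admits no finite derivation, so it is witnessed by an infinite proof tree $\tr^\trlb$ in $\TrRuleSet$. Applying \refToLem{trace-to-unknown} to $\tr^\trlb$ then yields a sequence $(\tr'_n)_{n \in \N}$ with $\tr'_0 = \Rule{}{\eval{\conf}{\Unknown}}$ and $\tr'_n \ruleAr \tr'_{n+1}$ for every $n$; this is precisely an infinite reduction from the initial tree, i.e. $\Rule{}{\eval{\conf}{\Unknown}} \ruleArInf$.

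For the backward implication, I would fix an infinite sequence $(\tr_n)_{n \in \N}$ of finite proof trees in $\UnRuleSet$ with $\tr_0 = \Rule{}{\eval{\conf}{\Unknown}}$ and $\tr_n \ruleAr \tr_{n+1}$. By \refToPropItem{tree-ord-arrow}{1}, each step gives $\tr_n \StrTreeOrd \tr_{n+1}$, so the sequence is strictly increasing; hence by \refToPropItem{pt-seq}{2} its least upper bound $\tr = \Treelub \tr_n$ is an infinite, well-formed proof tree in $\UnRuleSet$. I would then argue that $\rt\tr = \eval{\conf}{\Unknown}$: the conclusion configuration at the root is invariant under $\ruleAr$ (every reduction rule keeps the same consequence configuration $\conf$), and well-formedness forces the unique infinite path to begin at the root, so $\ResSet_\Unknown(\rt\tr) = \Unknown$. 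This makes $\tr$ a legal input to \refToLem{unknown-to-trace}, producing an infinite trace $t_\tr \in \ConfSet^\omega$ and an infinite proof tree in $\TrRuleSet$ rooted at $\evaltr{\conf}{t_\tr}$, whence $\valid{\TrRuleSet}{\evaltr{\conf}{t_\tr}}$ with $t_\tr$ infinite, as required.

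The main obstacle is the bookkeeping in the backward direction: one must carefully verify that the limit of the reduction sequence is exactly a well-formed infinite proof tree rooted at $\eval{\conf}{\Unknown}$, since only then does \refToLem{unknown-to-trace} apply. This step rests entirely on \refToProp{tree-ord-arrow}, to convert $\ruleAr$-steps into strict $\TreeOrd$-increases, and on \refToProp{pt-seq}, to characterise the suprema of such chains as well-formed infinite trees. By contrast, the forward direction is essentially immediate once \refToProp{evaltr-fin-tr} supplies an infinite witnessing derivation and \refToLem{trace-to-unknown} is invoked.
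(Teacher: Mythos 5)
Your proposal is correct and follows essentially the same route as the paper's proof: both directions go through \refToProp{evaltr-fin-tr}, \refToLem{trace-to-unknown}, \refToProp{tree-ord-arrow}, \refToProp{pt-seq}, and \refToLem{unknown-to-trace} in the same order. Your extra care in checking that the limit tree is rooted at $\eval{\conf}{\Unknown}$ is a detail the paper leaves implicit, but it is the right thing to verify.
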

\begin{proof}
$\valid{\TrRuleSet}{\evaltr{\conf}{t}}$ for some $t \in \ConfSet^\omega$ implies $  \Rule{}{\eval{\conf}{\Unknown}} \ruleArInf$. Since $\valid{\TrRuleSet}{\evaltr{\conf}{t}}$ holds and $t$ is infinite, by (a consequence of) \refToProp{evaltr-fin-tr}, there is an infinite proof tree $\tr^\trlb$ in $\TrRuleSet$ such that $\rt{\tr^\trlb} = \evaltr{\conf}{t}$. 
Then by \refToLem{trace-to-unknown} we get the thesis. \\
$\Rule{}{\eval{\conf}{\Unknown}} \ruleArInf$ implies $ \valid{\TrRuleSet}{\evaltr{\conf}{t}}$ for some $t \in \ConfSet^\omega$.  By definition of $\ruleArInf$, there is an infinite sequence $(\tr_n)_{n \in \N}$ such that $\tr_0 = \Rule{}{\eval{\conf}{\Unknown}}$ and, for all $n \in \N$, $\tr_n \ruleAr \tr_{n+1}$, hence, by \refToPropItem{tree-ord-arrow}{1}, we get $\tr_n \StrTreeOrd \tr_{n+1}$. 
By \refToPropItem{pt-seq}{2}, we have that $\tr = \Treelub \tr_n$ is a well-formed infinite proof tree, hence, by \refToLem{unknown-to-trace}, we get the thesis. 
\end{proof}

We give now a lemma useful to prove the equivalence for wrong computations.
We say that a (finite) proof tree $\tr$ in $\UnRuleSet$ is \emph{irreducible} if there is no $\tr'$ such that $\tr \ruleAr \tr'$, and it is \emph{stuck} if it is irreducible and $\ResSet_\Unknown(\rt\tr) = \Unknown$. 
Note that, by \refToPropItem{tree-ord}{1} and \refToPropItem{tree-ord-arrow}{1}, a complete proof tree $\tr$  is irreducible. 

\begin{lemma} \label{lem:stuck-to-wrong}
If $\tr$ is a stuck proof tree with $\rt\tr = \eval{\conf}{\Unknown}$, then {\em $\valid{\WrRuleSet}{\eval{\conf}{\Wrong}}$} holds.
\end{lemma}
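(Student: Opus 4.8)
The plan is to proceed by structural induction on the finite proof tree $\tr$, with a case analysis on the rule of $\UnRuleSet$ applied at its root. Since $\rt\tr = \eval{\conf}{\Unknown}$ carries the incomplete result $\Unknown$, and the rules of $\RuleSet \subseteq \UnRuleSet$ only derive complete judgments, the root rule must be one of three kinds: the axiom $\Rule{}{\eval{\conf}{\Unknown}}$, an $\Unknown$-introduction rule $\unknownrule{\rho}{i}{\res}$, or an $\Unknown$-propagation rule $\proprule{\rho}{i}{\Unknown}$. In each case the key move is to read off from the assumption that $\tr$ is \emph{stuck} (irreducible, with $\Unknown$ at the root) exactly the side condition that enables the matching rule of $\WrRuleSet$.

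First I would treat the axiom case, where $\tr = \Rule{}{\eval{\conf}{\Unknown}}$. If $\conf$ were a result, the reduction rule $\res_\Unknown$ would apply, and if some rule of $\RuleSet$ had $\conf$ in its conclusion, the reduction rule $\conf_\Unknown$ would apply; both contradict irreducibility. Hence $\conf$ is neither a result nor the conclusion of any rule, which is precisely the condition under which the $\Wrong$ construction supplies the axiom $\Rule{}{\eval{\conf}{\Wrong}}$, giving $\valid{\WrRuleSet}{\eval{\conf}{\Wrong}}$ at once.

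For the $\Unknown$-introduction case, with root rule $\unknownrule{\rho}{i}{\res}$ and subtrees $\tr_1, \ldots, \tr_i$, the roots of the first $i-1$ subtrees are the complete dependencies $\judg_1, \ldots, \judg_{i-1}$ of $\rho$, and the root of $\tr_i$ is $\eval{\ConfSet(\judg_i)}{\res}$ with $\res \in \ResSet$; by \refToCor{complete-tree} all these subtrees are then complete, hence valid in $\RuleSet$ and a fortiori in $\WrRuleSet$ since $\RuleSet \subseteq \WrRuleSet$. The crucial observation is that irreducibility rules out \emph{both} $\unknownrule{\rho}{i}{\res}$-reduction steps; because $\rho' \sim_i \rho$ forces $\#\rho' \ge i$, the step with $\#\rho' = i$ and the step introducing an $(i+1)$-th premise together cover every possible $\rho' \sim_i \rho$ with $\ResSet(\rho', i) = \res$. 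So stuckness means there is \emph{no} such $\rho'$, which is exactly the side condition generating the $\Wrong$-introduction rule $\wrongrule{\rho}{i}{\res}$; applying it to the already-valid premises derives $\eval{\conf}{\Wrong}$. The $\Unknown$-propagation case is analogous: stuckness of $\tr$ forces the incomplete subtree $\tr_i$ to be itself stuck, so the induction hypothesis yields $\valid{\WrRuleSet}{\eval{\ConfSet(\judg_i)}{\Wrong}}$, and the $\Wrong$-propagation rule $\proprule{\rho}{i}{\Wrong}$, together with the complete earlier premises, delivers $\eval{\conf}{\Wrong}$.

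I expect the main obstacle to be the bookkeeping in the $\Unknown$-introduction case: one must argue precisely that the failure of every reduction step of the form $\unknownrule{\rho}{i}{\res}$ is equivalent to the non-existence of a matching rule $\rho'$, using both branches of the reduction (the $\#\rho' = i$ branch and the $\ConfSet(\rho', i+1)$ branch) and the constraint $\#\rho' \ge i$ imposed by $\sim_i$. Lining this up against the exact quantifier form of the $\Wrong$-introduction side condition is where care is needed; the remaining two cases become routine once \refToCor{complete-tree} is invoked to guarantee that the auxiliary subtrees are complete.
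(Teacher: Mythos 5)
Your proposal is correct and follows essentially the same route as the paper: induction on the stuck tree with a case split on the last applied rule of $\UnRuleSet$, reading off from irreducibility exactly the side conditions of the axiom, $\Wrong$-introduction, and $\Wrong$-propagation rules of $\WrRuleSet$ (with the induction hypothesis used only in the propagation case). The extra detail you supply in the $\Unknown$-introduction case — that the two reduction branches together exhaust all $\rho' \sim_i \rho$ with $\ResSet(\rho',i)=\res$, so irreducibility yields the non-existence condition for $\wrongrule{\rho}{i}{\res}$ — is exactly the justification the paper leaves implicit.
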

\begin{proof}
We proceed by induction on $\tr$, analyzing cases on the last applied rule. 
There are three cases:
\begin{description}
\item [axiom] If an axiom is applied, then, since $\tr$ is stuck, there is no rule $\rho \in \RuleSet$ such that $\ConfSet(\rho) = \conf$, hence $\valid{\WrRuleSet}{\eval{\conf}{\Wrong}}$ holds, by applying the axiom.
\item [$\Unknown$-introduction] If $\unknownrule{\rho}{i}{\res}$ is applied, then, since $\tr$ is stuck, there is no rule $\rho' \sim_i \rho$ with $\ResSet(\rho', i) = \res$, hence $\wrongrule{\rho}{i}{\res} \in \WrRuleSet$, and applying this rule we get $\valid{\WrRuleSet}{\eval{\conf}{\Wrong}}$. 
\item [$\Unknown$-propagation] If $\proprule{\rho}{i}{\Unknown}$ is applied, set $\conf_i = \ConfSet(\rho, i)$, then, since $\tr$ is stuck, the subtree $\subtree{\tr}{i}$ is stuck as well and $\rt{\subtree{\tr}{i}} = \eval{\conf_i}{\Unknown}$;
hence, by induction hypothesis, we get $\valid{\WrRuleSet}{\eval{\conf_i}{\Wrong}}$ holds, thus, applying the rule $\proprule{\rho}{i}{\Wrong}$, we get the thesis. 
\end{description}
\end{proof}

\begin{theorem}   \label{thm:eq-wrong-unknown}
{\em $\valid{\WrRuleSet}{\eval{\conf}{\Wrong}}$}
iff $\Rule{}{\eval{\conf}{\Unknown}}\ruleArStar \tr$, where $\tr$ is stuck.
\end{theorem}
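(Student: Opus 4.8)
The plan is to prove the two implications separately, and to offload all reachability reasoning onto the abstract order $\TreeOrd$ via \refToProp{tree-ord-arrow}, so that the real work becomes the combinatorial matching between the side conditions of the $\Wrong$-rules and those of the reduction relation.

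For the right-to-left implication, suppose $\Rule{}{\eval{\conf}{\Unknown}} \ruleArStar \tr$ with $\tr$ stuck. By \refToPropItem{tree-ord-arrow}{1} every $\ruleAr$ step is a $\StrTreeOrd$ step, and since $\TreeOrd$ preserves the configuration of each node (in particular of the root), I get $\ConfSet(\rt\tr) = \conf$. Because $\tr$ is stuck we have $\ResSet_\Unknown(\rt\tr) = \Unknown$, so $\rt\tr = \eval{\conf}{\Unknown}$, and \refToLem{stuck-to-wrong} directly gives $\valid{\WrRuleSet}{\eval{\conf}{\Wrong}}$. This direction is essentially immediate.

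For the left-to-right implication, I would argue by induction on the derivation of $\eval{\conf}{\Wrong}$ in $\WrRuleSet$, producing in each case a stuck tree $\tr$ with $\rt\tr = \eval{\conf}{\Unknown}$. Reachability is then free: for any proof tree $\tr$ with root configuration $\conf$ and root result $\Unknown$ one has $\Rule{}{\eval{\conf}{\Unknown}} \TreeOrd \tr$ (the subtree-equality clause in the definition of $\TreeOrd$ is vacuous at a single node whose result is $\Unknown$), whence $\Rule{}{\eval{\conf}{\Unknown}} \ruleArStar \tr$ by \refToPropItem{tree-ord-arrow}{2}. The last applied rule of the $\Wrong$ derivation falls into three cases. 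If it is a $\Wrong$ axiom, then $\conf$ is the conclusion of no rule and is not a result, so neither reduction rule $\res_\Unknown$ nor $\conf_\Unknown$ can fire on the one-node tree $\Rule{}{\eval{\conf}{\Unknown}}$, which is therefore already stuck. If it is $\wrongrule{\rho}{i}{\res}$ with $\rho \equiv \inlinerule{\judg_1\ldots\judg_n}{\judg_{n+1}}{\conf}$, its premises $\judg_1,\ldots,\judg_{i-1}$ and $\eval{\ConfSet(\judg_i)}{\res}$ are complete and derivable in $\WrRuleSet$, hence, by conservativity (\refToThm{eq-finite_1}), in $\RuleSet$, yielding complete trees $\tr_1,\ldots,\tr_i$ in $\UnRuleSet$, and I set $\tr = \Rule{\tr_1\ \ldots\ \tr_i}{\eval{\conf}{\Unknown}}$ using $\unknownrule{\rho}{i}{\res}$. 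If it is $\proprule{\rho}{i}{\Wrong}$, the induction hypothesis on the premise $\eval{\ConfSet(\judg_i)}{\Wrong}$ supplies a stuck tree $\tr_i$ rooted at $\eval{\ConfSet(\judg_i)}{\Unknown}$, the earlier complete premises give complete trees $\tr_1,\ldots,\tr_{i-1}$, and I set $\tr = \Rule{\tr_1\ \ldots\ \tr_i}{\eval{\conf}{\Unknown}}$ using $\proprule{\rho}{i}{\Unknown}$.

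The crux, and the step I expect to be the main obstacle, is verifying that these candidate trees are irreducible, because this is exactly where the introduction conditions of the $\Wrong$-rules have to mirror the firing conditions of \refToFigure{?-rules}. For the tree built with $\unknownrule{\rho}{i}{\res}$, the only reductions that could apply are the third and fourth rules of \refToFigure{?-rules}, both of which demand a rule $\rho' \sim_i \rho$ with $\ResSet(\rho', i) = \res$; but the precise condition under which $\wrongrule{\rho}{i}{\res}$ was added is that no such $\rho'$ exists, so the tree is irreducible, hence stuck. For the tree built with $\proprule{\rho}{i}{\Unknown}$, the only candidate reductions are the fifth and sixth rules, both requiring a step $\tr_i \ruleAr \tr_i'$; since $\tr_i$ is stuck and complete subtrees are themselves irreducible, no such step exists, so again the tree is stuck. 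Once this condition-matching is spelled out, everything else — configuration preservation and reachability — is routine given $\TreeOrd$ and \refToProp{tree-ord-arrow}.
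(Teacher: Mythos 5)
Your proposal is correct and follows essentially the same route as the paper's proof: the right-to-left direction is discharged by \refToLem{stuck-to-wrong}, and the left-to-right direction builds a stuck tree by induction on the $\WrRuleSet$ derivation (axiom, $\Wrong$-introduction via $\unknownrule{\rho}{i}{\res}$ with conservativity for the complete premises, $\Wrong$-propagation via $\proprule{\rho}{i}{\Unknown}$) and then obtains reachability from \refToPropItem{tree-ord-arrow}{2}. Your explicit condition-matching between the $\Wrong$-rule side conditions and the firing conditions of \refToFigure{?-rules} is exactly the argument the paper leaves more implicit.
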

\begin{proof}
$\valid{\WrRuleSet}{\eval{\conf}{\Wrong}}$ implies $ \Rule{}{\eval{\conf}{\Unknown}}\ruleArStar \tr$ where $\tr$ is stuck. We prove that there is a stuck tree $\tr$ with $\rt\tr = \eval{\conf}{\Unknown}$, then the thesis follows immediately from \refToPropItem{tree-ord-arrow}{2}. 
The proof is by induction on rules.
It is enough to consider only rules with $\Wrong$ in the conclusion, hence we have the following three cases: 
\begin{description}
\item [axiom] By definition, there is no rule $\rho \in \RuleSet$ such that $\ConfSet(\rho) = \conf$, hence $\Rule{}{\eval{\conf}{\Unknown}}$ is stuck.
\item [$\Wrong$-introduction] By definition of $\wrongrule{\rho}{i}{\res}$, with $\rho \equiv \inlinerule{\judg_1\ldots\judg_n}{\judg_{n+1}}{\conf}$, there is no rule $\rho' \sim_i \rho$ such that $\ResSet(\rho', i) = \res$;
then, by \refToThm{eq-finite_1} for each $\judg_k$, with $k \le i$, there is a finite proof tree $\tr_k$, with $\rt{\tr_k} = \judg_k$, hence by applying the rule $\unknownrule{\rho}{i}{\res}$ we get a proof tree which is stuck, by definition of $\ruleAr$. 
\item [$\Wrong$-propagation] For a rule $\proprule{\rho}{i}{\Unknown}$ with $\rho \equiv \inlinerule{\judg_1\ldots\judg_n}{\judg_{n+1}}{\conf}$ and $\conf_i = \ConfSet(\judg_i)$, we have, by induction hypothesis, that there is a stuck tree $\tr'$ such that $\rt{\tr'} = \eval{\conf_i}{\Unknown}$; 
then, by \refToThm{eq-finite_1}, for each $k<i$, there is a proof tree $\tr_k$ such that $\rt{\tr_k} = \judg_k$, hence, by applying $\proprule{\rho}{i}{\Unknown}$ to $\tr_1, \ldots, \tr_{i-1}, \tr'$ we get a stuck tree. 
\end{description}
$\Rule{}{\eval{\conf}{\Unknown}}\ruleArStar \tr$ where $\tr$ is stuck implies $\valid{\WrRuleSet}{\eval{\conf}{\Wrong}}$. It follows immediately from \refToLem{stuck-to-wrong}, since $\rt\tr = \eval{\conf}{\Unknown}$ by hypothesis. 
\end{proof}

\subsection{Soundness with respect to partial evaluation semantics}\label{sect:spev}

 $\PEval$ semantics enjoys both soundness-must and soundness-may properties, giving a way to establish an explicit link between the previous two constructions. 
The statements are the following: 
\begin{description}
\item[soundness-must] $\PEval$ If $\conf \in \Pred$ and $\Rule{}{\eval{\conf}{\Unknown}}\ruleArStar \tr$, then 
either $\tr$ is complete or there is $\tr'$ such that $\tr \ruleAr \tr'$. 
\item[soundness-may] $\PEval$ If $\conf \in \Pred$, then either $\Rule{}{\eval{\conf}{\Unknown}}\ruleArStar \tr$ 
where $\tr$ is complete or $\Rule{}{\eval{\conf}{\Unknown}} \ruleArInf$. 
\end{description}
Hence, we could also prove the correctness of the proposed proof techniques using the $\PEval$ approach. 
Here we report the proof for soundness-must, as it is useful to show where, in the evaluation process, the three conditions come into play. 

Recall that $\UnRuleSet$ is the extension of $\RuleSet$ with incomplete judgements $\eval{\conf}{\Unknown}$. 
In this approach, the semantics is modelled by a reduction relation on finite proof trees in $\UnRuleSet$. 
 We extend  the indexed predicate $(\Pred_\idx)_{\idx\in \IdxSet}$ to finite proof trees $\tr$  and  
we will write $\TreeSat{\tr}{\Pred}$ if 
$\ConfSet(\rt\tr) \in \Pred$.

Soundness-must  with respect to  the $\PEval$ semantics follows, as usual, from progress and subject reduction. Note that, for the reduction relation on proof trees, the latter is trivial since the configuration at the root never changes in a reduction sequence. 
For the proof of progress we need the following proposition.

\begin{proposition} \label{prop:tree-preservation}
For any {proof tree} $\tr$ in $\UnRuleSet$, if $\TreeSat{\tr}{\Pred}$, then, for all $\alpha \in \dom(\tr)$, $\ConfSet(\tr(\alpha)) \in \Pred$. 
\end{proposition}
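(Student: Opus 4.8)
The plan is to prove the statement by induction on the length $\Len{\alpha}$ of the node $\alpha \in \dom(\tr)$, pushing $\Pred$-membership downward from the root through the tree. I will treat $\tr$ as a finite proof tree (as the elements of $\TreeSet$ are), so that \refToCor{complete-tree} applies to its subtrees. The base case $\alpha = \EList$ is precisely the hypothesis $\TreeSat{\tr}{\Pred}$, i.e. $\ConfSet(\rt\tr) \in \Pred$. For the inductive step I would write $\alpha = \alpha' k$ with $k \in \NPos$, obtain $\ConfSet(\tr(\alpha')) \in \Pred$ from the induction hypothesis, and then show that the rule of $\UnRuleSet$ applied at the parent $\alpha'$ transmits membership in $\Pred$ to its $k$-th child. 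The local engine for this transmission is \refToProp{preservation} (which packages condition \refToSound{preservation}), whose derivability side-condition I will discharge via \refToCor{complete-tree}.

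The core is a case analysis on the rule $\hat\rho$ applied at $\alpha'$. Since $\alpha' k \in \dom(\tr)$, the node $\alpha'$ is not a leaf, so $\hat\rho$ is not an axiom; hence it is either an original rule $\rho \equiv \inlinerule{\judg_1\ldots\judg_n}{\judg_{n+1}}{\conf} \in \RuleSet$, or an $\Unknown$-introduction rule $\unknownrule{\rho}{i}{\res}$, or an $\Unknown$-propagation rule $\proprule{\rho}{i}{\Unknown}$, each backed by an underlying original rule $\rho$. In all three cases the conclusion configuration is $\conf = \ConfSet(\tr(\alpha')) \in \Pred$, and, inspecting the rule shapes, the $k$-th premise has configuration exactly $\ConfSet(\judg_k)$, that of $\rho$'s $k$-th premise: for the introduction and propagation rules this holds because their $i$-th premise is $\eval{\ConfSet(\judg_i)}{\res}$ respectively $\eval{\ConfSet(\judg_i)}{\Unknown}$, sharing the configuration of $\judg_i$, and $k$ ranges over $1..i$.

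It then remains to supply \refToProp{preservation} with its hypothesis, namely $\valid{\RuleSet}{\judg_h}$ for every $h < k$. Here I would note that each such $\judg_h$ is a \emph{complete} judgment of $\rho$ — in the introduction and propagation cases one has $h < k \le i$, so $\judg_h$ is among the untouched premises $\judg_1, \ldots, \judg_{i-1}$ — and its subtree $\subtree{\tr}{\alpha' h}$ is a finite proof tree in $\UnRuleSet$ whose root result $\ResSet(\judg_h)$ lies in $\ResSet$. By \refToCor{complete-tree} this subtree is therefore complete, i.e. a proof tree in $\RuleSet$, so $\valid{\RuleSet}{\judg_h}$. Applying \refToProp{preservation} to $\rho$ at index $k$, with $\conf \in \Pred$ and these derivability facts, yields $\ConfSet(\judg_k) = \ConfSet(\tr(\alpha' k)) \in \Pred$, which closes the induction.

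The main obstacle is essentially bookkeeping: reconciling the truncated premise lists of the $\Unknown$-introduction and $\Unknown$-propagation rules with the full premise sequence of the underlying rule $\rho$, so that the index-$k$ requirement of \refToProp{preservation} is met exactly by the completeness of the already-evaluated premises. The conceptual content is light — subject reduction is trivial for this reduction relation since the configuration at each node never changes — but one must check that the single possibly-incomplete premise never lies strictly before the child under consideration, which is precisely why only the complete prefix $\judg_1, \ldots, \judg_{k-1}$ is ever invoked.
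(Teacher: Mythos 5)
Your proof is correct and follows essentially the same route as the paper's: induction on the length of $\alpha$, using the fact that in any rule of $\UnRuleSet$ all premises before the $k$-th are complete, discharging the derivability hypotheses of \refToProp{preservation} via \refToCor{complete-tree}, and then applying \refToProp{preservation} to push $\Pred$-membership to the $k$-th child. Your explicit case analysis on the three rule shapes and the remark on finiteness are just a more detailed spelling-out of what the paper's proof leaves implicit.
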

\begin{proof}
The proof is by induction on the length of $\alpha$, namely, on the level of the node $\alpha$ in the tree $\tr$. 
If $\Len{\alpha} = 0$, then $\alpha = \EList$, hence $\ConfSet(\tr(\alpha)) = \ConfSet(\rt\tr)$, hence the thesis holds by hypothesis, since $\TreeSat{\tr}{\Pred}$. 
Now, assume the thesis for $\alpha$ and prove it for $\alpha k$ for some $k \in \NPos$. 
Since $\tr$ is a proof tree, there is a rule $\rho^\Unknown$ with conclusion $\tr(\alpha)$ and such that $\tr(\alpha k)$ is its $k$-th premise. 
By construction of rules in $\UnRuleSet$, for all $h < k$, we have $\ResSet_\Unknown(\tr(\alpha h)) \in \ResSet$, hence, by \refToCor{complete-tree}, $\subtree{\tr}{\alpha h}$ is complete,  thus it is a finite proof tree in $\RuleSet$, and so we get $\valid{\RuleSet}{\tr(\alpha h)}$.  
Then, by induction hypothesis, we have $\ConfSet(\tr(\alpha)) \in \Pred$, hence, by \refToProp{preservation}, we get $\ConfSet(\tr(\alpha k)) \in \Pred$, as needed. 
\end{proof}

\begin{lemma}[Progress for $\ruleAr$] \label{lem:progress-ruleAr}
For each finite proof tree $\tr$ in $\UnRuleSet$, if $\TreeSat{\tr}{\Pred}$, then either $\tr$ is complete or there is $\tr'$ such that $\tr \ruleAr \tr'$. 
\end{lemma}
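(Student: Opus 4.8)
The plan is to argue by structural induction on the finite proof tree $\tr$, splitting on the shape of its root and, when the root is incomplete, on the last rule of $\UnRuleSet$ applied there. First I would dispose of the trivial situation: if $\ResSet_\Unknown(\rt\tr) \in \ResSet$, then by \refToCor{complete-tree} the finite tree $\tr$ is already complete and there is nothing to prove. So I may assume $\rt\tr = \eval{\conf}{\Unknown}$, where $\conf = \ConfSet(\rt\tr) \in \Pred$ by the hypothesis $\TreeSat{\tr}{\Pred}$. The only rules of $\UnRuleSet$ with conclusion $\eval{\conf}{\Unknown}$ are the axiom $\Rule{}{\eval{\conf}{\Unknown}}$, an $\Unknown$-introduction rule $\unknownrule{\rho}{i}{\res}$, or an $\Unknown$-propagation rule $\proprule{\rho}{i}{\Unknown}$, and these three give the case analysis.

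In the axiom case $\tr$ is a single leaf: if $\conf \in \ResSet$ the reduction rule $\res_\Unknown$ fires, and otherwise $\conf \in \Pred \setminus \ResSet$, so \refToSound{progress-ex} supplies a rule $\rho$ with $\ConfSet(\rho) = \conf$; since every rule has at least its continuation as a premise, $\ConfSet(\rho,1)$ is defined and $\conf_\Unknown$ fires. In the $\Unknown$-introduction case $\tr = \Rule{\tr_1 \ldots \tr_i}{\eval{\conf}{\Unknown}}$ with $\rho \equiv \inlinerule{\judg_1 \ldots \judg_n}{\judg_{n+1}}{\conf}$, where $\rt{\tr_h} = \judg_h$ for $h < i$ and $\rt{\tr_i} = \eval{\ConfSet(\judg_i)}{\res}$. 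All these roots carry results, so by \refToCor{complete-tree} the subtrees are complete, i.e.\ proof trees in $\RuleSet$; hence $\valid{\RuleSet}{\judg_h}$ for all $h < i$ and $\valid{\RuleSet}{\eval{\ConfSet(\judg_i)}{\res}}$. This is precisely the hypothesis of \refToSound{progress-all} at index $i$ (with $\conf \in \Pred$), which yields a rule $\rho' \sim_i \rho$ with $\ResSet(\rho',i) = \res$. Since $\rho' \sim_i \rho$ forces $i \le \#\rho'$, a reduction is enabled either completing the conclusion to $\eval{\conf}{\res}$ (when $\#\rho' = i$) or opening the fresh $(i{+}1)$-th premise $\eval{\ConfSet(\rho',i+1)}{\Unknown}$ (when $\#\rho' > i$); either way $\tr \ruleAr \tr'$ for some $\tr'$.

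The remaining, and technically most delicate, case is $\Unknown$-propagation, $\tr = \Rule{\tr_1 \ldots \tr_i}{\eval{\conf}{\Unknown}}$ with $\ResSet_\Unknown(\rt{\tr_i}) = \Unknown$: here progress for $\tr$ reduces to progress for the strictly smaller subtree $\tr_i$. To invoke the induction hypothesis I must know $\TreeSat{\tr_i}{\Pred}$, and this is exactly where \refToProp{tree-preservation} does the work: applied to $\tr$ it gives $\ConfSet(\tr(\alpha)) \in \Pred$ for every node $\alpha$, in particular $\ConfSet(\rt{\tr_i}) \in \Pred$. Since $\rt{\tr_i}$ has result $\Unknown$, $\tr_i$ is not complete, so the induction hypothesis gives $\tr_i \ruleAr \tr'_i$, and one of the two propagation reductions (yielding $\proprule{\rho}{i}{\Unknown}$ or $\unknownrule{\rho}{i}{\res}$ according to whether $\ResSet_\Unknown(\rt{\tr'_i})$ is $\Unknown$ or a result) lifts this step to $\tr \ruleAr \Rule{\tr_1 \ldots \tr_{i-1}\, \tr'_i}{\eval{\conf}{\Unknown}}$. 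The main obstacle is not any single case but the bookkeeping that makes \refToSound{progress-all} applicable in the introduction case: one must check that the already-built premises are genuinely derivable in $\RuleSet$ rather than merely in $\UnRuleSet$ — which is the content of \refToCor{complete-tree} — and that $\sim_i$ guarantees the matching rule $\rho'$ really has an $i$-th premise, so that one of the two $\Unknown$-introduction reductions is always enabled.
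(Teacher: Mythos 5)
Your proof is correct and follows essentially the same route as the paper's: induction on the tree, case analysis on the last applied rule, with \refToCor{complete-tree} justifying that the already-built subtrees yield derivability in $\RuleSet$ so that \refToSound{progress-ex}/\refToSound{progress-all} apply, and \refToProp{tree-preservation} supplying $\TreeSat{\tr_i}{\Pred}$ in the propagation case. Your treatment is if anything slightly more explicit than the paper's (e.g.\ on why $\conf_\Unknown$ is always enabled and why all premises $\judg_h$, $h<i$, are derivable in $\RuleSet$).
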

\begin{proof}
The proof is by induction on $\tr$.
We split cases on the last applied rule.
\begin{itemize}
\item If $\tau=\DecoratedTree{\rho}{\Rule{\ldots}{\eval{\conf}{u}}}$, for $\rho \in \RuleSet$, then $u\in \ResSet$, hence, by \refToCor{complete-tree}, $\tr$ is complete. 
\item If $\tr = \DecoratedTree{\res_\Unknown}{\Rule{}{\eval{\res}{\Unknown}}}$, then $\tr \ruleAr \Rule{}{\eval{\res}{\res}}$.
\item If $\tr = \DecoratedTree{\conf_\Unknown}{\Rule{}{\eval{\conf}{\Unknown}}}$ with $\conf \notin \ResSet$,  then, since $\TreeSat{\tr}{\Pred}$, we have $\conf \in \Pred$, hence, by condition \refToSound{progress-ex}, there is $\rho \equiv \inlinerule{\judg_1\ldots\judg_n}{\judg_{n+1}}{\conf}$.
Therefore, we get $\tr \ruleAr \Rule{\eval{\ConfSet(\judg_1)}{\Unknown}}{\eval{\conf}{\Unknown}}$. 
\item If $\tr =\DecoratedTree{\unknownrule{\rho}{i}{\res}}{ \Rule{\tr_1\ \ldots\ \tr_i}{\eval{\conf}{\Unknown}}}$, then $\rt{\tr_i} = \eval{\ConfSet(\rho, i)}{\res}$ and so $\ResSet_\Unknown(\rt{\tr_i}) = \res$. Hence, by \refToCor{complete-tree}, $\tr_i$ is complete, thus we get $\valid{\RuleSet}{\eval{\ConfSet(\rho, i)}{\res}}$.
Then,  by condition \refToSound{progress-all}, there is $\rho' \sim_i \rho$ such that $\ResSet(\rho', i) = \res$ and there are two cases:
\begin{itemize}
\item if $\#
(\rho') = i$, then $\tr \ruleAr \Rule{\tr_1\ \ldots\ \tr_i}{\eval{\conf}{\res}}$
\item if $\#
(\rho') > i$, then $\tr \ruleAr \Rule{\tr_1\ \ldots\ \tr_i\ \eval{\ConfSet(\rho', i+1)}{\Unknown}}{\eval{\conf}{\Unknown}}$. 
\end{itemize}
\item If $\tr =\DecoratedTree{\proprule{\rho}{i}{\Unknown}}{\Rule{\tr_1\ \ldots\ \tr_i}{\eval{\conf}{\Unknown}}}$,  then $\rt{\tr_i} = \eval{\conf'}{\Unknown}$ and, since $\TreeSat{\tr}{\Pred}$, by \refToProp{tree-preservation}, we get $\conf' \in \Pred$, that is, $\TreeSat{\tr_i}{\Pred}$. 
Then, by induction hypothesis, either $\tr_i$ is complete, or $\tr_i \ruleAr \tr'_i$ for some $\tr'_i$;
but $\tr_i$ cannot be complete since $\rt{\tr_i} = \eval{\conf'}{\Unknown}$, hence $\tr_i \ruleAr \tr'_i$, 
and this implies $\tr \ruleAr \Rule{\tr_1\ \ldots\ \tr_{i-1}\ \tr'_i}{\eval{\conf}{\Unknown}}$. 
\end{itemize}
\end{proof}

\begin{theorem} \label{thm:sound-unknown}
If $\TreeSat{\tr}{\Pred}$ and $\tr \ruleArStar \tr'$, then 
either $\tr'$ is complete or there is $\tr''$ such that $\tr' \ruleAr \tr''$. 
\end{theorem}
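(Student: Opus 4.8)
The plan is to follow the standard soundness recipe of \emph{progress} plus \emph{subject reduction}, where essentially all the effort has already been absorbed into the progress statement (\refToLem{progress-ruleAr}). The remaining task is merely to propagate the hypothesis $\TreeSat{\tr}{\Pred}$ along the reduction sequence, so that progress can be invoked at the endpoint $\tr'$.

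First I would observe that subject reduction for $\ruleAr$ is immediate: if $\tr \ruleAr \tr'$, then $\ConfSet(\rt\tr) = \ConfSet(\rt{\tr'})$. Indeed, inspecting the rules in \refToFigure{?-rules}, in every case the configuration occurring in the conclusion of the root judgement is left untouched; a step only refines the subtrees, or turns an $\Unknown$ at the root into a result (or carries up a result from a completed subtree), but it never alters $\ConfSet(\rt{\cdot})$. For the two $\Unknown$-propagation rules this is seen together with the inductive definition of $\ruleAr$, but again the conclusion configuration $\conf$ is syntactically the same above and below the step. Consequently $\TreeSat{\tr}{\Pred}$ implies $\TreeSat{\tr'}{\Pred}$, since both unfold to $\ConfSet(\rt\tr) = \ConfSet(\rt{\tr'}) \in \Pred$.

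Then I would prove the theorem by induction on the length of the reduction sequence witnessing $\tr \ruleArStar \tr'$. In the base case $\tr' = \tr$, so $\TreeSat{\tr'}{\Pred}$ holds by hypothesis. In the inductive step, writing $\tr \ruleArStar \hat{\tr} \ruleAr \tr'$, the induction hypothesis applied to $\tr \ruleArStar \hat{\tr}$ gives $\TreeSat{\hat{\tr}}{\Pred}$, and one further application of subject reduction to $\hat{\tr} \ruleAr \tr'$ yields $\TreeSat{\tr'}{\Pred}$. Since $\ruleAr$ is a relation on $\TreeSet$ and $\tr \in \TreeSet$, the tree $\tr'$ is again a finite proof tree in $\UnRuleSet$, so \refToLem{progress-ruleAr} applies to $\tr'$ and delivers exactly the required dichotomy: either $\tr'$ is complete, or there is $\tr''$ with $\tr' \ruleAr \tr''$.

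The only genuine content of this argument lives in \refToLem{progress-ruleAr}, which is where conditions \refToSound{progress-ex}, \refToSound{progress-all} and, through \refToProp{tree-preservation}, \refToSound{preservation} are actually used. The part I am responsible for here is entirely routine, and I do not expect any real obstacle; the sole point needing care is making the triviality of subject reduction precise, namely checking that no reduction rule in \refToFigure{?-rules} ever rewrites the root configuration.
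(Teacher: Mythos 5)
Your proposal is correct and follows essentially the same route as the paper: induction on the length of the reduction sequence, using the trivial subject-reduction fact that $\ruleAr$ never changes the root configuration, and then invoking \refToLem{progress-ruleAr} at the end; the only (inessential) difference is that you peel off the last step of the sequence while the paper peels off the first.
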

\begin{proof}
By induction on the number of steps in $\tr \ruleArStar \tr'$: 
if it is equal to $0$, then $\tr = \tr'$ and the thesis follows by \refToLem{progress-ruleAr}, 
otherwise, we have $\tr \ruleAr \tr_1 \ruleArStar \tr'$ and, since $\ConfSet(\rt\tr) = \ConfSet(\rt{\tr_1})$ and $\TreeSat{\tr}{\Pred}$, we have $\TreeSat{\tr_1}{\Pred}$, hence we get the thesis by induction hypothesis. 
\end{proof}

\section{Related work}\label{sect:rw}
\paragraph{Modeling divergence} The issue of modelling divergence in big-step semantics dates back to \citet{CousotCousot92}, where a stratified approach with a separate coinductive judgment for divergence is proposed,
also investigated in \citet{LeroyGrall09}.

In \citet{Ancona12} the authors models divergence by interpreting coinductively standard big-step rules and considering also non-well-founded values. 
In \citet{Chargueraud13} a similar technique is exploited, 
by adding a special result modelling divergence.
Flag-based big-step semantics \cite{PoulsenMosses17} captures divergence
by interpreting the same semantic rules both inductively and coinductively. 
In all these approaches, spurious judgements can be derived  for diverging computations.  

Other proposals \cite{OwensMKT16,AminRompf17}
are inspired by the notion of definitional interpreter  \cite{R98}, 
where a counter limits the number of steps
 of  a computation. 
 Thus, divergence can be modelled on top of an inductive judgement: a program diverges if the timeout is raised for any value of the counter, hence it is not directly modelled in the definition.  
 Instead, \citet{Danielsson12} provides a way to directly model divergence using definitional interpreters, relying on the coinductive partiality monad \cite{Capretta05}. 
  
The trace semantics in \refToSect{traces} has been inspired by \cite{KusmierekB10}. Divergence propagation rules are very similar to those used in \cite{AnconaDZ@OOPSLA17,AnconaDZ@ECOOP18} to define a big-step judgment which directly includes divergence as result. However, this direct definition relies on  a non-standard notion of inference system, allowing \emph{corules}  \cite{AnconaDZ@ESOP17,Dagnino19}, whereas for the trace semantics presented in this work standard coinduction is enough, since all rules are \emph{productive}, that is, they always add an element \mbox{to the trace.} 


Differently from all the previously cited papers which consider specific examples, the work \citet{Ager04} shares with us the aim of providing a \emph{generic construction} to model non-termination, basing on an arbitrary big-step semantics. 
Ager considers a class of big-step semantics identified by a specific shape of rules, and defines, in a small-step style, a proof-search algorithm which follows the big-step rules;
in this way, converging, diverging and stuck computations are distinguished. 
This approach is somehow similar to our \PEval\ semantics, even tough the transition system we propose is directly defined on proof trees.

 There is an extensive body of work on coalgebraic techniques, where the difference between semantics can be simply expressed by a change of functor. In this paper we take a set-theoretic approach, simple and accessible to a large audience. Furthermore, as far as we know \cite{Rutten00}, coalgebras abstract several kinds of transition systems, thus being more similar to a small-step approach.  In our understanding, the coalgebra models a single computation step with possible effects, and from this it is possible to derive a unique morphism into the final coalgebra modelling the ``whole'' semantics. Our trace semantics, being big-step, seems to roughly correspond to directly get this whole semantics. In other words, we do not have a coalgebra structure \mbox{on configurations.}

\paragraph{Proving soundness} As we have discussed, also proving (type) soundness with respect to a big-step semantics is a challenging task, and some approaches have been proposed in the literature.
 In \cite{EOC06}, to show soundness of large steps semantics, they prove a coverage lemma, which ensures that
the rules cover all cases, including error situations. 
In \citet{LeroyGrall09} the authors prove a soundness property similar to \refToThm{sound-traces}, but  by  using a separate judgment to represent divergence, thus avoiding  using traces.
In \citet{Ancona12} there is a proof of soundness of a coinductive type system with respect to a coinductive big-step semantics for a Java-like language, defining a relation between derivations in the type system and in the big-step semantics. 
In \citet{AnconaDZ@OOPSLA17} there is a proof principle, used to show type soundness with respect to a big-step semantics defined by an inference system with  corules \cite{AnconaDZ@ESOP17}. 
In \citet{AminRO14} the proof of type soundness of a calculus formalising path-dependent types relies on a big-step semantics, while 
in \cite{AminRompf17} soundness is shown for the polymorphic type systems $F_{<:}$, and for the DOT calculus, using definitional interpreters to model the semantics. 
In both cases they extend the original semantics adding error and timeout, and adopt inductive proof strategies,  as in \citet{Siek13}. 
A similar approach is followed by \citet{OwensMKT16} to show type soundness of the Core ML language. 

Also \citet{Ancona14} proposes an inductive proof of type soundness  for  the big-step semantics of a Java-like language, but relying  on a notion of approximation of infinite derivation in the \mbox{big-step semantics.}

 Pretty big-step semantics \citet{Chargueraud13} aims at providing an efficient representation of big-step semantics, so that it can be easily extended without duplication of meta-rules. 
In order to define and prove soundness, they propose a generic error rule based on a \emph{progress judgment}, whose definition can be
easily derived manually from the set of evaluation rules. This is partly similar to our $\Wrong$ extension, with two main differences. First, by factorising rules, they introduce intermediate steps as in small-step semantics, hence there are similar problems when intermediate steps are ill-typed (as in \refToSect{fjl}, \refToSect{fjos}). Second, $\Wrong$ introduction is handled by the progress judgment, that is, at the level of side-conditions. 
 Moreover, in \cite{BodinJS15} there is a formalisation of the pretty-big-step rules for performing a generic reasoning on big-step semantics by using abstract interpretation. However, the authors say that they interpret rules inductively, hence non-terminating computations are not modelled.
 
 Finally, some (but not all) infinite trees of our trace semantics can be seen as cyclic proof trees, see end of \refToSect{traces}. Proof systems supporting cyclic proofs can be found, e.g., in \cite{Brotherston05,BrotherstonS11} for classical first order logic with inductive definitions.

\section{Conclusion and future work}\label{sect:conclu}

The most important contribution is a general approach  for reasoning  on soundness  with respect to a  big-step operational semantics. 
Conditions can be proven by a case analysis on the semantic (meta-)rules avoiding small-step-style intermediate configurations. This can be crucial since there are  calculi  where the property to be checked is \emph{not preserved} by such intermediate configurations, whereas it holds for the final result,   as illustrated in \refToSect{examples}. 

In future work, we plan to use the meta-theory in \refToSect{framework} as basis to investigate yet other constructions, notably the approach relying on  corules  \cite{AnconaDZ@OOPSLA17,AnconaDZ@ECOOP18}, and that, adding a counter, based on timeout \cite{OwensMKT16,AminRompf17}.

We also plan to compare our proof technique for proving soundness with the standard one for small-step semantics: 
if a predicate satisfies progress and subject reduction with respect to a small-step semantics, does it satisfy our soundness conditions  with respect to  an equivalent big-step semantics? 
To formally prove such a statement, the first step will be to express equivalence between small-step and big-step  semantics. 
On the other hand, 
the converse does not hold, as  shown 
by the examples in \refToSect{fjl} and \refToSect{fjos}. 

For what concerns significant applications, we plan to use the approach to prove soundness  for  the $\lambda$-calculus with full reduction and intersection/union types \cite{BDL95}. The interest of this example lies in the failure of  the  subject reduction, as discussed in \refToSect{fjos}.
In another direction, we want to enhance $\MiniFJOS$ with {$\lambda$-abstractions} and  allowing everywhere intersection and union types \cite{DGV19}. This will extend typability of shared expressions.  
 We plan to apply our approach to the big-step semantics of the statically typed virtual classes calculus developed in \cite{EOC06}, discussing also the non terminating computations not considered there. 

 With regard to proofs,  we plan to investigate if we can simplify them by means of enhanced conductive techniques. 

 As a proof-of-concept, we provided a mechanisation\footnote{Available at \url{https://github.com/fdgn/soundness-big-step-semantics}.} in Agda of \refToLem{sr}. 
The mechanisations of the other proofs is similar.
However, as future work, we think it would be more interesting  to provide a software  for writing  big-step definitions and  for checking  that the soundness conditions hold.

\bibliography{main}

\begin{thebibliography}{10}

\bibitem{Aczel77}
Peter Aczel.
\newblock An introduction to inductive definitions.
\newblock In {\em Handbook of Mathematical logic}, pages 739--782, Amsterdam,
  1977. North Holland.

\bibitem{Ager04}
Mads~Sig Ager.
\newblock From natural semantics to abstract machines.
\newblock In Sandro Etalle, editor, {\em LOPSTR 2014 - 14th International
  Symposium on Logic Based Program Synthesis and Transformation}, volume 3573
  of {\em Lecture Notes in Computer Science}, pages 245--261, Berlin, 2004.
  Springer.
\newblock \href {http://dx.doi.org/10.1007/11506676\_16}
  {\path{doi:10.1007/11506676\_16}}.

\bibitem{AminRompf17}
Nada Amin and Tiark Rompf.
\newblock Type soundness proofs with definitional interpreters.
\newblock In Giuseppe Castagna and Andrew~D. Gordon, editors, {\em POPL'17 -
  ACM Symp. on Principles of Programming Languages}, pages 666--679, New York,
  2017. {ACM} Press.
\newblock \href {http://dx.doi.org/10.1145/3009837}
  {\path{doi:10.1145/3009837}}.

\bibitem{AminRO14}
Nada Amin, Tiark Rompf, and Martin Odersky.
\newblock Foundations of path-dependent types.
\newblock In Andrew~P. Black and Todd~D. Millstein, editors, {\em OOPSLA'14 -
  {ACM} International Conference on Object Oriented Programming Systems
  Languages and Applications}, pages 233--249, New York, 2014. {ACM} Press.
\newblock \href {http://dx.doi.org/10.1145/2660193.2660216}
  {\path{doi:10.1145/2660193.2660216}}.

\bibitem{Ancona12}
Davide Ancona.
\newblock Soundness of object-oriented languages with coinductive big-step
  semantics.
\newblock In James Noble, editor, {\em ECOOP'12 - Object-Oriented Programming},
  volume 7313 of {\em Lecture Notes in Computer Science}, pages 459--483,
  Berlin, 2012. Springer.
\newblock \href {http://dx.doi.org/10.1007/978-3-642-31057-7\_21}
  {\path{doi:10.1007/978-3-642-31057-7\_21}}.

\bibitem{Ancona14}
Davide Ancona.
\newblock How to prove type soundness of {Java-like} languages without forgoing
  big-step semantics.
\newblock In David~J. Pearce, editor, {\em FTfJP'14 - Formal Techniques for
  Java-like Programs}, pages 1:1--1:6, New York, 2014. {ACM} Press.
\newblock \href {http://dx.doi.org/10.1145/2635631.2635846}
  {\path{doi:10.1145/2635631.2635846}}.

\bibitem{AnconaDZ@ESOP17}
Davide Ancona, Francesco Dagnino, and Elena Zucca.
\newblock Generalizing inference systems by coaxioms.
\newblock In Hongseok Yang, editor, {\em ESOP 2017 - European Symposium on
  Programming}, volume 10201 of {\em Lecture Notes in Computer Science}, pages
  29--55, Berlin, 2017. Springer.
\newblock \href {http://dx.doi.org/10.1007/978-3-662-54434-1_2}
  {\path{doi:10.1007/978-3-662-54434-1_2}}.

\bibitem{AnconaDZ@OOPSLA17}
Davide Ancona, Francesco Dagnino, and Elena Zucca.
\newblock Reasoning on divergent computations with coaxioms.
\newblock {\em PACMPL}, 1({OOPSLA}):81:1--81:26, 2017.
\newblock \href {http://dx.doi.org/10.1145/3133905}
  {\path{doi:10.1145/3133905}}.

\bibitem{AnconaDZ@ECOOP18}
Davide Ancona, Francesco Dagnino, and Elena Zucca.
\newblock Modeling infinite behaviour by corules.
\newblock In Todd~D. Millstein, editor, {\em ECOOP'18 - Object-Oriented
  Programming}, volume 109 of {\em LIPIcs}, pages 21:1--21:31, Dagstuhl, 2018.
  Schloss Dagstuhl - Leibniz-Zentrum f\"ur Informatik.
\newblock \href {http://dx.doi.org/10.4230/LIPIcs.ECOOP.2018.21}
  {\path{doi:10.4230/LIPIcs.ECOOP.2018.21}}.

\bibitem{BDL95}
Franco Barbanera, Mariangiola Dezani{-}Ciancaglini, and Ugo de'Liguoro.
\newblock Intersection and union types: Syntax and semantics.
\newblock {\em Information and Computation}, 119(2):202--230, 1995.
\newblock \href {http://dx.doi.org/10.1006/inco.1995.1086}
  {\path{doi:10.1006/inco.1995.1086}}.

\bibitem{BDS13}
Hendrik~Pieter Barendregt, Wil Dekkers, and Richard Statman.
\newblock {\em Lambda Calculus with Types}.
\newblock Perspectives in logic. Cambridge University Press, Cambridge, 2013.

\bibitem{BettiniBDGV18}
Lorenzo Bettini, Viviana Bono, Mariangiola Dezani{-}Ciancaglini, Paola
  Giannini, and Betti Venneri.
\newblock Java {\&} {L}ambda: a {F}eatherweight story.
\newblock {\em Logical Methods in Computer Science}, 14(3), 2018.
\newblock \href {http://dx.doi.org/10.23638/LMCS-14(3:17)2018}
  {\path{doi:10.23638/LMCS-14(3:17)2018}}.

\bibitem{BodinJS15}
Martin Bodin, Thomas Jensen, and Alan Schmitt.
\newblock Certified abstract interpretation with pretty-big-step semantics.
\newblock In Xavier Leroy and Alwen Tiu, editors, {\em {CPP}'15 - Proceedings
  of the 2015 Conference on Certified Programs and Proofs}, pages 29--40, New
  York, 2015. ACM.
\newblock \href {http://dx.doi.org/10.1145/2676724.2693174}
  {\path{doi:10.1145/2676724.2693174}}.

\bibitem{Brotherston05}
James Brotherston.
\newblock Cyclic proofs for first-order logic with inductive definitions.
\newblock In Bernhard Beckert, editor, {\em Automated Reasoning with Analytic
  Tableaux and Related Methods, International Conference, {TABLEAUX} 2005},
  volume 3702 of {\em Lecture Notes in Computer Science}, pages 78--92.
  Springer, 2005.
\newblock \href {http://dx.doi.org/10.1007/11554554\_8}
  {\path{doi:10.1007/11554554\_8}}.

\bibitem{BrotherstonS11}
James Brotherston and Alex Simpson.
\newblock Sequent calculi for induction and infinite descent.
\newblock {\em Journal of Logic and Computation}, 21(6):1177--1216, 2011.
\newblock \href {http://dx.doi.org/10.1093/logcom/exq052}
  {\path{doi:10.1093/logcom/exq052}}.

\bibitem{Capretta05}
Venanzio Capretta.
\newblock General recursion via coinductive types.
\newblock {\em Logical Methods in Computer Science}, 1(2), 2005.
\newblock \href {http://dx.doi.org/10.2168/LMCS-1(2:1)2005}
  {\path{doi:10.2168/LMCS-1(2:1)2005}}.

\bibitem{Chargueraud13}
Arthur Chargu{\'{e}}raud.
\newblock Pretty-big-step semantics.
\newblock In Matthias Felleisen and Philippa Gardner, editors, {\em ESOP 2013 -
  European Symposium on Programming}, volume 7792 of {\em Lecture Notes in
  Computer Science}, pages 41--60, Berlin, 2013. Springer.
\newblock \href {http://dx.doi.org/10.1007/978-3-642-37036-6\_3}
  {\path{doi:10.1007/978-3-642-37036-6\_3}}.

\bibitem{Courcelle83}
Bruno Courcelle.
\newblock Fundamental properties of infinite trees.
\newblock {\em Theoretical Computer Science}, 25:95--169, 1983.
\newblock \href {http://dx.doi.org/10.1016/0304-3975(83)90059-2}
  {\path{doi:10.1016/0304-3975(83)90059-2}}.

\bibitem{CousotCousot92}
Patrick Cousot and Radhia Cousot.
\newblock Inductive definitions, semantics and abstract interpretations.
\newblock In Ravi Sethi, editor, {\em POPL'92 - ACM Symp. on Principles of
  Programming Languages}, pages 83--94, New York, 1992. {ACM} Press.
\newblock \href {http://dx.doi.org/10.1145/143165.143184}
  {\path{doi:10.1145/143165.143184}}.

\bibitem{Dagnino19}
Francesco Dagnino.
\newblock Coaxioms: flexible coinductive definitions by inference systems.
\newblock {\em Logical Methods in Computer Science}, 15(1), 2019.
\newblock \href {http://dx.doi.org/10.23638/LMCS-15(1:26)2019}
  {\path{doi:10.23638/LMCS-15(1:26)2019}}.

\bibitem{Danielsson12}
Nils~Anders Danielsson.
\newblock Operational semantics using the partiality monad.
\newblock In Peter Thiemann and Robby~Bruce Findler, editors, {\em ICFP'12 -
  International Conference on Functional Programming 2012}, pages 127--138, New
  York, 2012. {ACM} Press.
\newblock \href {http://dx.doi.org/10.1145/2364527.2364546}
  {\path{doi:10.1145/2364527.2364546}}.

\bibitem{DNH84}
Rocco {De Nicola} and Matthew Hennessy.
\newblock Testing equivalences for processes.
\newblock {\em Theoretical Computer Science}, 34(1):83 -- 133, 1984.
\newblock \href
  {http://dx.doi.org/https://doi.org/10.1016/0304-3975(84)90113-0}
  {\path{doi:https://doi.org/10.1016/0304-3975(84)90113-0}}.

\bibitem{DdLP98}
Mariangiola Dezani{-}Ciancaglini, Ugo de'Liguoro, and Adolfo Piperno.
\newblock A filter model for concurrent lambda-calculus.
\newblock {\em {SIAM} Journal of Computing}, 27(5):1376--1419, 1998.
\newblock \href {http://dx.doi.org/10.1137/S0097539794275860}
  {\path{doi:10.1137/S0097539794275860}}.

\bibitem{DGV19}
Mariangiola Dezani{-}Ciancaglini, Paola Giannini, and Betti Venneri.
\newblock Intersection types in {Java}: Back to the future.
\newblock In Tiziana Margaria, Susanne Graf, and Kim~G. Larsen, editors, {\em
  Models, Mindsets, Meta: The What, the How, and the Why Not? - Essays
  Dedicated to Bernhard Steffen on the Occasion of His 60th Birthday}, volume
  11200 of {\em Lecture Notes in Computer Science}, pages 68--86. Springer,
  2018.
\newblock \href {http://dx.doi.org/10.1007/978-3-030-22348-9\_6}
  {\path{doi:10.1007/978-3-030-22348-9\_6}}.

\bibitem{EOC06}
Erik Ernst, Klaus Ostermann, and William~R. Cook.
\newblock A virtual class calculus.
\newblock In J.~Gregory Morrisett and Simon L.~Peyton Jones, editors, {\em
  POPL'06 - ACM Symp. on Principles of Programming Languages}, pages 270--282.
  {ACM}, 2006.
\newblock \href {http://dx.doi.org/10.1145/1111037.1111062}
  {\path{doi:10.1145/1111037.1111062}}.

\bibitem{GoslingEtAl14}
James Gosling, Bill Joy, Guy~L. Steele, Gilad Bracha, and Alex Buckley.
\newblock {\em The Java Language Specification, Java SE 8 Edition}.
\newblock Addison-Wesley Professional, Boston, 1st edition, 2014.

\bibitem{Grudzinski00}
Grzegorz Grudzinski.
\newblock A minimal system of disjunctive properties for strictness analysis.
\newblock In Jos{\'{e}} D.~P. Rolim, Andrei~Z. Broder, Andrea Corradini,
  Roberto Gorrieri, Reiko Heckel, Juraj Hromkovic, Ugo Vaccaro, and J.~B.
  Wells, editors, {\em ICALP Workshops}, pages 305--322, Waterloo, Ontario,
  Canada, 2000. Carleton Scientific.

\bibitem{IPW01}
Atsushi Igarashi, Benjamin~C. Pierce, and Philip Wadler.
\newblock Featherweight {Java}: A minimal core calculus for {Java and GJ}.
\newblock {\em ACM Transactions on Programming Languages and Systems},
  23(3):396--450, 2001.
\newblock \href {http://dx.doi.org/10.1145/503502.503505}
  {\path{doi:10.1145/503502.503505}}.

\bibitem{Kahn87}
Gilles Kahn.
\newblock Natural semantics.
\newblock In Franz{-}Josef Brandenburg, Guy Vidal{-}Naquet, and Martin Wirsing,
  editors, {\em STACS'87 - Symposium on Theoretical Aspects of Computer
  Science}, volume 247 of {\em Lecture Notes in Computer Science}, pages
  22--39, Berlin, 1987. Springer.
\newblock \href {http://dx.doi.org/10.1007/BFb0039592}
  {\path{doi:10.1007/BFb0039592}}.

\bibitem{KusmierekB10}
Jaroslaw D.~M. Kusmierek and Viviana Bono.
\newblock Big-step operational semantics revisited.
\newblock {\em Fundamenta Informaticae}, 103(1-4):137--172, 2010.
\newblock \href {http://dx.doi.org/10.3233/FI-2010-323}
  {\path{doi:10.3233/FI-2010-323}}.

\bibitem{LeroyGrall09}
Xavier Leroy and Herv\'e Grall.
\newblock Coinductive big-step operational semantics.
\newblock {\em Information and Computation}, 207(2):284--304, 2009.
\newblock \href {http://dx.doi.org/10.1016/j.ic.2007.12.004}
  {\path{doi:10.1016/j.ic.2007.12.004}}.

\bibitem{Milner78}
Robin Milner.
\newblock A theory of type polymorphism in programming.
\newblock {\em Journal of Computer and System Sciences}, 17(3):348--375, 1978.
\newblock \href {http://dx.doi.org/10.1016/0022-0000(78)90014-4}
  {\path{doi:10.1016/0022-0000(78)90014-4}}.

\bibitem{OwensMKT16}
Scott Owens, Magnus~O. Myreen, Ramana Kumar, and Yong~Kiam Tan.
\newblock Functional big-step semantics.
\newblock In Peter Thiemann, editor, {\em ESOP 2016 - European Symposium on
  Programming}, volume 9632 of {\em Lecture Notes in Computer Science}, pages
  589--615, Berlin, 2016. Springer.
\newblock \href {http://dx.doi.org/10.1007/978-3-662-49498-1\_23}
  {\path{doi:10.1007/978-3-662-49498-1\_23}}.

\bibitem{Pierce02}
Benjamin~C. Pierce.
\newblock {\em Types and programming languages}.
\newblock {MIT} Press, Cambridge, Massachusetts, 2002.

\bibitem{Plotkin81}
Gordon~D. Plotkin.
\newblock A structural approach to operational semantics.
\newblock Technical report, Aarhus University, 1981.

\bibitem{Plotkin04}
Gordon~D. Plotkin.
\newblock A structural approach to operational semantics.
\newblock {\em Journal of Logic and Algebraic Programming}, 60-61:17--139,
  2004.

\bibitem{PoulsenMosses17}
Casper~Bach Poulsen and Peter~D. Mosses.
\newblock Flag-based big-step semantics.
\newblock {\em Journal of Logic and Algebraic Methods in Programming},
  88:174--190, 2017.
\newblock \href {http://dx.doi.org/10.1016/j.jlamp.2016.05.001}
  {\path{doi:10.1016/j.jlamp.2016.05.001}}.

\bibitem{R98}
John~C. Reynolds.
\newblock Definitional interpreters for higher-order programming languages.
\newblock {\em Higher-Order and Symbolic Computation}, 11(4):363--397, 1998.
\newblock \href {http://dx.doi.org/10.1023/A:1010027404223}
  {\path{doi:10.1023/A:1010027404223}}.

\bibitem{Rutten00}
Jan J. M.~M. Rutten.
\newblock Universal coalgebra: a theory of systems.
\newblock {\em Theoretical Computer Science}, 249(1):3--80, 2000.
\newblock \href {http://dx.doi.org/10.1016/S0304-3975(00)00056-6}
  {\path{doi:10.1016/S0304-3975(00)00056-6}}.

\bibitem{Siek13}
Jeremy Siek.
\newblock Type safety in three easy lemmas.
\newblock 2013.
\newblock URL:
  \url{http://siek.blogspot.com/2013/05/type-safety-in-three-easy-lemmas.html}.

\bibitem{WrightFelleisen94}
A.~K. Wright and M.~Felleisen.
\newblock A syntactic approach to type soundness.
\newblock {\em Information and Computation}, 115(1):38--94, 1994.

\end{thebibliography}

\end{document}